\documentclass[a4paper,UKenglish,cleveref,autoref]{lipics-v2021}
\hideLIPIcs

\usepackage{amsmath,amssymb,mathtools}
\usepackage{booktabs,tabularx,array}
\usepackage{algorithm}
\usepackage[noend]{algpseudocode}
\algrenewcommand{\algorithmicrequire}{\textbf{Input:}}
\algrenewcommand{\algorithmicensure}{\textbf{Output:}}
\usepackage{tikz,placeins,needspace}
\usetikzlibrary{positioning,calc,arrows.meta}
\newcommand{\OPT}{\operatorname{OPT}}
\newcommand{\eps}{\varepsilon}
\newcommand{\poly}{\operatorname{poly}}
\newcommand{\polylog}{\operatorname{polylog}}
\newcommand{\MP}{\mathcal M}
\newcommand{\rk}{\operatorname{rk}}

\newcommand{\ind}{\operatorname{ind}}
\newcommand{\cstar}{c_\star}
\definecolor{mone}{RGB}{184,52,45}
\definecolor{mtwo}{RGB}{36,91,165}
\definecolor{extra}{RGB}{105,105,105}
\title{Sublinear-Time and Semi-Streaming Algorithms for Path Cover and TSP}
\titlerunning{Sublinear-Time and Semi-Streaming Path Cover and TSP}
\author{Sharareh Alipour}{Tehran Institute for Advanced Studies}{}{}{}
\author{Ali Nowzari}{Sharif university }{}{}{}
\authorrunning{S. Alipour and A. Nowzari}
\Copyright{Sharareh Alipour and Ali Nowzari}
\ccsdesc[500]{Theory of computation~Streaming, sublinear and near linear time algorithms}
\ccsdesc[500]{Theory of computation~Approximation algorithms analysis}
\keywords{traveling salesman problem, path cover, matching, sublinear-time algorithms, semi-streaming algorithms, random vertex orders}
\nolinenumbers

\begin{document}
\raggedbottom
\maketitle
\begin{abstract}
We give improved sublinear-time and semi-streaming approximation algorithms for maximum path cover, $(1,2)$-TSP, and Max-TSP. For each fixed $\eps>0$, our distance-query algorithm estimates the optimum cost of an $n$-vertex $(1,2)$-TSP instance from above within factor $2-\cstar+\eps=1.413184\ldots+\eps$, with high probability, in $n^{2-\Omega_\eps(1)}$ time, where $\cstar=2(\sec1-\tan1)$. This improves the $3/2+\eps$ factor of direct matching estimation in this subquadratic regime~\cite{BhattacharyaKissSaranurak2023} and of the faster near-linear estimator~\cite{BehnezhadEtAl2024}. A single random vertex order defines an implicit bipartite graph whose matchings project to path covers. An alternating-permutation analysis shows that its expected maximum matching size is at least $\cstar$ times the optimum path-cover size. On long paths, even the best of polynomially many uniform orders retains only a $\cstar+o(1)$ fraction with high probability.

In semi-streaming, we construct deterministic $(5/7-\eps)$-approximate unweighted path covers and $(9/7+\eps+5/(7n))$-approximate $(1,2)$-TSP tours, improving the respective $2/3-\eps$ and $4/3+\eps+1/n$ guarantees in~\cite{AlipourFarokhnejadMoemke2025}. For nonnegative weights, we obtain $2/3-\eps$ for path cover and Max-TSP, improving the earlier asymptotic $7/12-\eps$ Max-TSP bound. These algorithms use $\poly(1/\eps)$ passes independent of $n$ and $O(n\log(n/\eps)\poly(1/\eps))$ words. We compute two approximate matchings separated by edge deletion. A cycle-compensation lemma accounts for the weight removed when breaking their cycles. For unweighted graphs, a third approximate matching on a simple auxiliary graph with at most $2n$ vertices chooses component connections; a Tutte--Berge count proves the stronger bound.

We also answer the repeated-contraction question in~\cite{AlipourFarokhnejadMoemke2025} for exact calls with arbitrary tie-breaking: further rounds leave the worst-case path-cover guarantees at $2/3$ unweighted and $5/8$ weighted. Three successive exact residual matchings likewise cannot guarantee more than $2/3$, even with optimal recombination of their edges. Our structured third call avoids this limitation.
\end{abstract}
\clearpage
\section{Introduction}
\label{sec:introduction}
Let $G=(V,E)$ be a simple undirected graph on $n$ vertices. A \emph{path cover}, also called a \emph{linear forest}, is a collection of vertex-disjoint paths covering all vertices, with isolated vertices allowed. The \emph{maximum path cover} (MPC) problem maximizes the number of selected edges; write $\rho(G)$ for its optimum. In \emph{weighted MPC}, each edge has a nonnegative weight, and the objective is to maximize the total selected weight, with optimum $\rho_w(G)$. The problem is NP-hard even in the unweighted case~\cite{AlipourFarokhnejadMoemke2025}.

Path covers also link MPC to the traveling salesman problem (TSP). In $(1,2)$-TSP, one minimizes Hamiltonian-cycle cost in a complete graph with edge costs one or two. Let $G_1=(V,E_1)$ be the graph formed by the cost-one edges, and let $\OPT_{12}$ be the minimum tour cost of the $(1,2)$-TSP instance. A cover with $p$ edges of $G_1$ gives a tour of cost at most $2n-p$. Thus a large cover gives a short tour, while a lower estimate of its size gives an upper estimate of tour cost~\cite{BehnezhadEtAl2024,AlipourFarokhnejadMoemke2025}. In Max-TSP, one maximizes tour weight in a complete nonnegatively weighted graph; completing a weighted cover cannot decrease its weight.

In this paper, we study approximation algorithms for these problems in two settings: \emph{the semi-streaming model} and \emph{sublinear-time computation}, which restrict memory and running time, respectively.

\noindent\textbf{The semi-streaming model.}
Edges are scanned sequentially in a fixed arbitrary order, possibly in several passes, using $O(n\polylog n)$ bits of memory~\cite{FeigenbaumEtAl2005}. Here $\polylog n$ is a fixed polynomial in $\log n$. Our algorithms receive edges with their costs or weights and return explicit path covers or Hamiltonian tours. For fixed accuracy $\eps$, they use $O(n\log n)$ words. If each edge weight is encoded with $O(\polylog n)$ bits, each word also uses $O(\polylog n)$ bits, giving $O(n\polylog n)$ bits overall.

\noindent\textbf{Sublinear-time computation.}
A sublinear-time algorithm accesses selected input entries through queries and uses time smaller than the input size, counting both queries and computation. Here adjacency queries test edges of an unweighted graph, and distance queries return costs in a $(1,2)$-TSP instance. We seek time sublinear in the $\Theta(n^2)$ matrix entries, not necessarily in $n$. The outputs are lower estimates of MPC size or upper estimates of optimum tour cost, not explicit solutions.

\subsection{Notation and path-cover conversion}
\label{sec:preliminaries}
We assume $n\ge3$; MPC instances on one or two vertices are handled directly. Write $m=|E|$ and $uv$ for an edge joining $u$ and $v$. We identify $Q\subseteq E$ with its spanning subgraph $(V,Q)$ and write $\deg_Q(v)$ for the degree of $v$. For nonnegative edge weights $w:E\to\mathbb R_{\ge0}$, put $w(Q)=\sum_{e\in Q}w(e)$.

A matching is a set of pairwise vertex-disjoint edges; $\nu(G)$ and $\nu_w(G)$ denote its maximum size and weight. Alternating along each path partitions a path cover into two matchings. For $S\subseteq E$, let $G-S=(V,E\setminus S)$; for $U\subseteq V$, let $G-U$ be the induced graph obtained by deleting the vertices in $U$.

We write $\OPT_{\mathrm{MaxTSP}}$ for the maximum tour weight in Max-TSP. A \emph{path-preserving completion} is a Hamiltonian cycle containing every edge of a specified path cover.

\noindent\textbf{Accuracy and resources.}
A maximization ratio $\alpha$ guarantees value at least $\alpha$ times optimum; a minimization factor $\lambda$ guarantees cost at most $\lambda$ times optimum. An error-$\delta$ matching call has ratio $1-\delta$, where $0\le\delta<1$; an exact call has $\delta=0$. Ties may be broken arbitrarily. Ratio infima exclude inputs with zero optimum.

In $n^{2-\Omega_\eps(1)}$, the positive exponent gap and hidden constants may depend on fixed $\eps>0$; no polynomial dependence on $1/\eps$ is required. \emph{With high probability} means probability at least $1-n^{-a_0}$ for a fixed confidence exponent $a_0>0$. Logarithms are natural. We use $\poly$ for an unspecified polynomial and $\widetilde O$ to suppress polylogarithmic factors in $n$; a subscript allows dependence on the indicated fixed parameters.

\begin{lemma}
\label{lem:tsp-conversion}
For the cost-one graph $G_1$ of a $(1,2)$-metric, the standard conversion inequalities~\cite{BehnezhadEtAl2024} are
\begin{equation}
 2n-\rho(G_1)-1\le\OPT_{12}\le2n-\rho(G_1),\qquad \OPT_{12}\ge n.
 \label{eq:tsp-conversion}
\end{equation}
If $0\le s\le\rho(G_1)$ and $s\ge\gamma\rho(G_1)-\eta n$, with $0\le\gamma\le1$ and $\eta\ge0$, then $\widehat T=2n-s$ satisfies
\begin{equation}
 \OPT_{12}\le\widehat T\le(2-\gamma+\eta)\OPT_{12}+\gamma.
 \label{eq:common-transfer}
\end{equation}
When $s$ is the size of an explicit cover, completing it gives a tour of cost at most $\widehat T$.
\end{lemma}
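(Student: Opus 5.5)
I will first establish the conversion inequalities \eqref{eq:tsp-conversion} by direct combinatorial arguments, and then derive \eqref{eq:common-transfer} from them by linear manipulation.

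\textbf{Upper bound on $\OPT_{12}$.} Take a maximum path cover $P$ of $G_1$ with $\rho(G_1)$ edges. It consists of $n-\rho(G_1)$ paths, counting isolated vertices as paths. Chain these paths in any cyclic order into a Hamiltonian cycle; for $n\ge3$ this is valid when there are at least two paths, and when $P$ is a single Hamiltonian path we close it with one edge. The cycle uses the $\rho(G_1)$ cost-one edges of $P$ plus exactly $n-\rho(G_1)$ connecting edges, each of cost at most $2$. Its cost is therefore at most $\rho(G_1)+2(n-\rho(G_1))=2n-\rho(G_1)$. The same argument applied to any explicit cover with $s$ edges gives a tour of cost at most $2n-s$, which proves the final sentence of the lemma.

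\textbf{Lower bounds on $\OPT_{12}$.} Let $H$ be an optimal tour, and let $k$ be the number of its cost-one edges, so that $\OPT_{12}=k+2(n-k)=2n-k$. If $k=n$, then $H$ is a Hamiltonian cycle in $G_1$, and deleting one edge gives a path cover with $n-1$ edges. If $k<n$, the cost-one edges of $H$ already form a path cover. In both cases $\rho(G_1)\ge k-1$, hence $\OPT_{12}\ge 2n-\rho(G_1)-1$. Since every edge costs at least $1$, we also have $\OPT_{12}\ge n$.

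\textbf{Transfer to the estimate.} Set $\widehat T=2n-s$. Because $s\le\rho(G_1)$, the upper bound gives $\OPT_{12}\le 2n-\rho(G_1)\le\widehat T$. For the other direction, the hypothesis $s\ge\gamma\rho(G_1)-\eta n$ yields $\widehat T\le 2n-\gamma\rho(G_1)+\eta n$. From the first lower bound we get $\rho(G_1)\ge 2n-1-\OPT_{12}$, and since $\gamma\ge0$,
\[
\widehat T\le 2n-\gamma(2n-1-\OPT_{12})+\eta n=(2-2\gamma)n+\gamma\OPT_{12}+\gamma+\eta n.
\]
Because $2-2\gamma\ge0$ and $n\le\OPT_{12}$, this is at most $(2-2\gamma+\gamma+\eta)\OPT_{12}+\gamma=(2-\gamma+\eta)\OPT_{12}+\gamma$.

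The only delicate points are the sign conditions: $\gamma\le1$ is needed to replace $n$ by $\OPT_{12}$, and the $-1$ slack from the Hamiltonian-cycle case is what produces the additive $\gamma$. Everything else is routine bookkeeping.
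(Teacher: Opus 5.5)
Your proof is correct, and for the transfer inequality \eqref{eq:common-transfer} and the completion claim it is the same argument as the paper's: substitute $\rho(G_1)\ge 2n-1-\OPT_{12}$, then use $\gamma\le1$, $\eta\ge0$ and $n\le\OPT_{12}$. The only difference is that you derive the conversion inequalities \eqref{eq:tsp-conversion} yourself, from the cost-one edges of an optimal tour and by chaining a maximum cover, whereas the paper cites them as standard; this makes your version self-contained.
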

\begin{proof}
For~\eqref{eq:common-transfer}, $s\le\rho(G_1)$ and~\eqref{eq:tsp-conversion} give $\widehat T\ge2n-\rho(G_1)\ge\OPT_{12}$. Since $\rho(G_1)\ge2n-\OPT_{12}-1$ and $n\le\OPT_{12}$,
\[
 \widehat T\le\gamma\OPT_{12}+(2-2\gamma+\eta)n+\gamma
 \le(2-\gamma+\eta)\OPT_{12}+\gamma.
\]
If $s$ is the size of an explicit cover, joining its $n-s$ paths costs at most $s+2(n-s)=\widehat T$.
\end{proof}

\subsection{Our results}
\noindent\textbf{Sublinear-time value estimation.}
For every fixed $\eps>0$, we obtain, with high probability, an upper estimate of $(1,2)$-TSP cost within factor $2-\cstar+\eps=1.413184\ldots+\eps$, where $\cstar=2(\sec1-\tan1)$, in $n^{2-\Omega_\eps(1)}$ time and queries. Table~\ref{tab:main-results} compares our approximation factor and running time with the known bounds. Our factor is smaller than $3/2+\eps$, while our time bound is larger than that of the near-linear estimator.

Our estimator uses a single uniform vertex order to orient the edges, then splits each vertex into incoming and outgoing copies. Matchings in the resulting implicit bipartite graph project to path covers; each query requires at most one original query. Since $\nu(G)\le\rho(G)\le2\nu(G)$, direct additive matching estimation retains coefficient $1/2$. Our representation raises the expected coefficient to $\cstar=0.586815986\ldots$. An alternating-run analysis using Andr\'e's identity~\cite{Stanley2009} proves this bound. Concentration makes one order sufficient for large $n$; Algorithm~\ref{alg:query-cover} combines it with additive matching-value estimation~\cite{BhattacharyaKissSaranurak2023}. On long paths, even the best of polynomially many uniform vertex orders gives a maximum matching of size at most $(\cstar+o(1))$ times the path length, with high probability.

\begin{table}[htbp]
\centering\small
\caption{Approximation factors and running times for sublinear-time $(1,2)$-TSP cost estimation, for fixed accuracy.}
\label{tab:main-results}
\begin{tabularx}{\textwidth}{@{}>{\raggedright\arraybackslash}Xll@{}}
\toprule
Method and source & Factor & Time / pair-query bound\\\midrule
Estimator~\cite{ChenKannanKhanna2020} & $13/8+\eps$ & $\widetilde O_\eps(n^{3/2})$\\
Estimator~\cite{BehnezhadEtAl2024} & $3/2+\eps$ & $\widetilde O_\eps(n)$\\
Direct matching estimate~\cite{BhattacharyaKissSaranurak2023} & $3/2+\eps$ & $n^{2-\Omega_\eps(1)}$\\
This paper & $1.413184\ldots+\eps$ & $n^{2-\Omega_\eps(1)}$\\\bottomrule
\end{tabularx}
\end{table}

\noindent\textbf{Semi-streaming constructions.}
For $0<\eps<1/3$, we obtain deterministic $(5/7-\eps)$-approximations for unweighted MPC and $(9/7+\eps+5/(7n))$-approximations for $(1,2)$-TSP. For nonnegative weights, we obtain $(2/3-\eps)$-approximations for weighted MPC and Max-TSP. All these algorithms use $\poly(1/\eps)$ passes independent of $n$ and $O(n\log(n/\eps)\poly(1/\eps))$ words. Table~\ref{tab:streaming-comparison} shows the improved guarantees within the same pass regime~\cite{AlipourFarokhnejadMoemke2025}.

Our idea is to delete the edges of a first matching rather than contract them as in~\cite{AlipourFarokhnejadMoemke2025}. Algorithm~\ref{alg:deletion} computes a matching $A$ in $G$ and a matching $B$ in $G-A$, keeping all vertices. A \emph{cycle-compensation lemma} relates the weight lost when opening cycles of $A\cup B$ to the residual matching optimum, yielding the weighted bound for approximate calls. For unweighted MPC, Algorithm~\ref{alg:optimized-joining} chooses component connections with a third matching before opening cycles. A Tutte--Berge count bounds the missed path-cover edges. A simple auxiliary graph on at most $2n$ vertices implements this optimization without storing all component adjacencies, and its matching guarantee transfers to the entire output forest.

\begin{table}[htbp]
\centering\small
\caption{Semi-streaming constructions in $\poly(1/\eps)$ passes independent of $n$.}
\label{tab:streaming-comparison}
\begin{tabularx}{\textwidth}{@{}>{\raggedright\arraybackslash}X>{\raggedright\arraybackslash}Xl@{}}
\toprule
Problem & Comparison and source & This paper\\\midrule
Unweighted MPC & $2/3-\eps$~\cite{AlipourFarokhnejadMoemke2025} & $5/7-\eps$\\
$(1,2)$-TSP & $4/3+\eps+1/n$~\cite{AlipourFarokhnejadMoemke2025} & $9/7+\eps+5/(7n)$\\
Max-TSP & $(7/12-3/(4n))(1-\eps)$~\cite{AlipourFarokhnejadMoemke2025} & $2/3-\eps$\\
Weighted MPC & $1/2-\eps$ via matching~\cite{HuangSu2023} & $2/3-\eps$\\\bottomrule
\end{tabularx}
\end{table}

We also answer the iteration question in~\cite{AlipourFarokhnejadMoemke2025} for exact matching calls. Three unstructured residual matchings have retained-edge path-cover infimum $2/3$; our structured third call avoids this obstruction. Appendix~\ref{app:component-extensions} gives stronger unweighted guarantees with $\log n$-dependent passes and subquadratic postprocessing when a starting $2$-matching is supplied explicitly.

\noindent\textbf{Further related work.}
Matching pairs and successive selection are classical~\cite{ChenKannanKhanna2020,FeigeOfekWieder2002}. The offline weighted factor $2/3$ is also known through a degree-two relaxation followed by cycle breaking~\cite{MoranNewmanWolfstahl1990}. Offline unweighted guarantees include $3/4$ through triangle-free $2$-matching and $6/7$~\cite{Paluch2025,BermanKarpinski2006}, but these do not imply our access bounds. TSP under both streaming and distance-query access is also studied in~\cite{ChenKhannaTan2023}.

The time budget is essential to the sublinear comparison. Beating $3/2$ in $n^{1+o(1)}$ time would settle a near-linear bipartite-matching promise problem~\cite{BehnezhadEtAl2024}; our larger time budget does not resolve it. Some fixed $(1+\eps_0)$ TSP approximation requires $\Omega(n^2)$ queries~\cite{ChenKannanKhanna2020}. For additive matching estimation, every exponent gap $\zeta>0$ admits an accuracy requiring $\Omega(n^{2-\zeta})$ queries in the adjacency-list~\cite{BehnezhadRoghaniRubinstein2024} and pair-query~\cite{AzarmehrEtAl2025} models. 

\noindent\textbf{Organization of the paper.}
Section~\ref{sec:query} gives sublinear-time path-cover and TSP estimates. Sections~\ref{sec:deletion} and~\ref{sec:optimized-joining} analyze the weighted and unweighted constructions; Section~\ref{sec:streaming} gives their semi-streaming implementations. Section~\ref{sec:limits-summary} presents algorithmic limits and open questions. Appendix~\ref{sec:limits} proves the residual-matching obstruction, and Appendix~\ref{sec:contraction} gives a short proof of the unweighted two-round guarantee and proves the contraction limits. Appendix~\ref{app:comparisons} collects the conditional weighted comparison, compatible attachments, the one-call alternative, and hardness. Appendix~\ref{app:component-extensions} gives the general component bound, stronger input-size-dependent pass guarantees, and estimation with a supplied $2$-matching. Appendix~\ref{sec:greedy-joining} gives the lower-pass greedy algorithm and its upper family. Appendix~\ref{app:bks-parameters} derives an explicit subquadratic exponent.

\section{Sublinear-time path-cover and TSP estimation}
\label{sec:query}
We estimate path-cover size from adjacency-matrix (pair) queries, using a bipartite graph determined by a randomly sampled vertex order.

\Needspace{11\baselineskip}
\subsection{Random-order bipartite representation}
\label{sec:split-graph}
Choose a uniformly random ordering of $V$, and let $\pi(v)$ be the position of $v$. Let $V_L=\{v_L:v\in V\}$ and $V_R=\{v_R:v\in V\}$ be disjoint copies of $V$. In the bipartite graph $B_\pi$, represent each original edge by an edge joining the left copy of its earlier endpoint to the right copy of its later endpoint: $u_Lv_R\in E(B_\pi)\quad\Longleftrightarrow\quad uv\in E(G)\text{ and }\pi(u)<\pi(v)$.
A uniform shuffle generates the stored positions in $O(n)$ word operations. We answer queries without storing the edges of $B_\pi$. For a cross-side query, write its endpoints as $u_L,v_R$. Return no if $\pi(u)\ge\pi(v)$; otherwise query whether $uv\in E(G)$. Same-side queries return no. Thus each auxiliary query uses at most one original query, regardless of the order of its endpoints.

\begin{lemma}
\label{lem:projection}
Replacing each edge $u_Lv_R$ of a matching in $B_\pi$ by $uv$ gives a path cover of $G$ with the same number of edges. Thus $0\le\nu(B_\pi)\le\rho(G)$ for every order $\pi$.
\end{lemma}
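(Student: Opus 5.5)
Let $M$ be a matching in $B_\pi$, and let $P=\{uv : u_Lv_R\in M\}$ be its projection. I would first check that $P$ is a set of $|M|$ distinct edges of $G$. Each $u_Lv_R\in E(B_\pi)$ comes from an edge $uv\in E(G)$ with $\pi(u)<\pi(v)$, so $u\ne v$.

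Next I would show the projection is injective. An unordered edge $uv$ with $\pi(u)<\pi(v)$ has exactly one preimage, namely $u_Lv_R$, because the order fixes which endpoint is the left copy. Hence distinct edges of $M$ map to distinct edges of $G$, and $|P|=|M|$.

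The plan is then to verify the two properties of a linear forest: every vertex has degree at most two in $P$, and $P$ has no cycle.

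For degrees, fix a vertex $v$. Its edges in $P$ arise only from edges of $M$ at $v_L$ or at $v_R$. Since $M$ is a matching, each of these two copies meets at most one edge of $M$, so $\deg_P(v)\le2$.

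More precisely, at most one $P$-edge goes to a neighbor later than $v$ (the one through $v_L$). At most one goes to a neighbor earlier than $v$ (the one through $v_R$).

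Acyclicity is the only step needing an argument, and I would use the order $\pi$. Suppose $P$ contained a cycle $C$, and let $v$ be the vertex of $C$ with largest $\pi(v)$. Both $C$-neighbors of $v$ are earlier than $v$. Each of these two edges would then have to come from an $M$-edge at $v_R$. This is impossible because $v_R$ is matched at most once.

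Equivalently, orient every edge of $P$ from its earlier endpoint to its later one. Each vertex then has in-degree and out-degree at most one, and $\pi$ strictly increases along directed edges. So the components of $P$ are directed paths, and no component can be a cycle. Isolated vertices are allowed, so $(V,P)$ is a path cover.

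For the inequality $0\le\nu(B_\pi)\le\rho(G)$, apply this to a maximum matching of $B_\pi$. It projects to a path cover with $\nu(B_\pi)$ edges, so $\nu(B_\pi)\le\rho(G)$ by maximality of $\rho(G)$; nonnegativity is trivial. I expect no real obstacle beyond the acyclicity step, which the maximum-position argument handles.
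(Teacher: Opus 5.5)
Your proposal is correct and follows the paper's proof: you bound degrees by the two copies $v_L,v_R$, rule out cycles via an extremal vertex on the cycle, and get the edge count from each edge of $G$ having a unique representative in $B_\pi$. The only difference is cosmetic: you take the vertex of largest position and the shared copy $v_R$, where the paper takes the smallest position and $v_L$.
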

\begin{proof}
Each vertex $v$ has two copies, and a matching uses at most one edge at each copy. The resulting subgraph of $G$ therefore has degree at most two. Suppose it contains a cycle, and choose the vertex $v$ with smallest $\pi(v)$ on that cycle. Both cycle neighbors occur later than $v$, so the two cycle edges would both use $v_L$ in $B_\pi$, contradicting the matching property. Thus the subgraph is a disjoint union of paths. Including unused vertices as isolates gives a path cover. Each original edge has only one representing edge, so the edge count is unchanged.
\end{proof}

\subsection{Expected path-cover retention}
Fix a path $P=v_0\cdots v_\ell$ with edges $e_i=v_{i-1}v_i$ for $1\le i\le\ell$. Two path edges \emph{conflict} if their representing edges in $B_\pi$ share a vertex and hence cannot both belong to a matching. Nonconsecutive edges have disjoint endpoints and cannot conflict. Thus $e_i$ and $e_{i+1}$ conflict precisely when $\pi(v_i)$ is smaller than both neighboring positions or larger than both: their representing edges then share $(v_i)_L$ or $(v_i)_R$, respectively. These are the local minima and maxima of the positions along $P$. When $\pi(v_i)$ lies between its neighbors' positions, the two edges use different copies and are compatible.

The \emph{conflict graph} $C_\pi$ has one vertex for each $e_i$, with two vertices adjacent exactly when the corresponding path edges conflict. Its components are paths or isolates, since only consecutive $e_i$ can be adjacent. An independent set contains no adjacent pair; write $\ind(C)$ for its maximum size in $C$. Thus $\nu(B_\pi[P])=\ind(C_\pi)$, where $B_\pi[P]$ keeps only edges representing those of $P$. Figure~\ref{fig:orientation} illustrates the correspondence.

\begin{figure}[htbp]
\centering
\begin{tikzpicture}[x=1.0cm,y=1cm,>=Stealth]
\foreach \i/\r in {0/1,1/4,2/2,3/3,4/5}{
 \node[circle,draw,inner sep=1.9pt] (p\i) at (\i,0) {$v_{\i}$};
 \node[above=5pt] at (p\i.north) {$\r$};}
\draw[thick] (p0)--node[below]{$e_1$}(p1);
\draw (p2)--node[below]{$e_2$}(p1);
\draw[thick] (p2)--node[below]{$e_3$}(p3);
\draw[thick] (p3)--node[below]{$e_4$}(p4);
\node at (2,-.9) {(a) Path with vertex positions};
\begin{scope}[shift={(7,0)}]
\foreach \i in {1,2,3,4}{\node[circle,draw,inner sep=2.4pt] (c\i) at ({.9*(\i-1)},0) {$e_{\i}$};}
\draw (c1)--(c2)--(c3);
\node at (1.35,-.9) {(b) Conflict graph};
\end{scope}
\end{tikzpicture}
\caption{Positions $1,4,2,3,5$ make $e_1,e_2$ share $(v_1)_R$ and $e_2,e_3$ share $(v_2)_L$ in $B_\pi$. The compatible set $\{e_1,e_3,e_4\}$, shown bold in (a), gives a matching in $B_\pi$ and two paths in $G$.}
\label{fig:orientation}
\end{figure}

A component of $C_\pi$ with $r$ vertices contributes $\lceil r/2\rceil$ compatible edges. We bound their expected total using the constant $\cstar=2(\sec1-\tan1)$ defined in the introduction.

\begin{lemma}
\label{lem:sharp-retention}
For every path with $\ell\ge1$ edges, $\cstar\ell\le\mathbb E[\ind(C_\pi)]\le\cstar\ell+2$.
The coefficient $\cstar$ in the lower bound is best possible.
\end{lemma}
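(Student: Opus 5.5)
The plan is to write $\ind(C_\pi)$ as a sum of indicators, one per edge, and compute each probability from Euler (up/down) numbers. Only the relative order of $\pi(v_0),\dots,\pi(v_\ell)$ matters, and it is a uniform permutation of $\ell+1$ items. A component of $C_\pi$ is a maximal block $e_j,\dots,e_{j+r-1}$ in which every interior vertex $v_j,\dots,v_{j+r-2}$ is a local extremum of the positions. It contributes $\lceil r/2\rceil$, which equals the number of its edges at odd offset counted from its left end. So
\[
\ind(C_\pi)=\sum_{i=1}^{\ell}\mathbf 1[\text{$e_i$ has odd offset in its component}].
\]
For $e_i$, let $T_i$ be the largest $t\le i-1$ such that $v_{i-1},\dots,v_{i-t}$ are all extrema. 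Then $e_i$ has offset $T_i+1$, and we need $\Pr[T_i \text{ even}]$.

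\textbf{Key computation.} The event $T_i\ge t$ says that the $t+2$ consecutive positions $\pi(v_{i-t-1}),\dots,\pi(v_i)$ form an alternating sequence. By Andr\'e's theorem its probability is $a_t=2E_{t+2}/(t+2)!$, with $a_0=1$. Writing $\Pr[T_i=t]=a_t-a_{t+1}$ for $t<i-1$ and $\Pr[T_i=i-1]=a_{i-1}$ gives
\[
\Pr[T_i\text{ even}]=S_{i-1},\qquad S_k:=\sum_{t=0}^{k}(-1)^t a_t .
\]
From $\sec x+\tan x=\sum_n E_nx^n/n!$ at $x=-1$ we get
\[
S_\infty=2\Bigl(\sec1-\tan1-E_0+E_1\Bigr)=\cstar .
\]
Hence
\[
\mathbb E[\ind(C_\pi)]=\cstar\ell+\sum_{k=0}^{\ell-1}R_k,\qquad R_k=S_k-S_\infty=(-1)^k\bigl(a_{k+1}-a_{k+2}+\cdots\bigr).
\]

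\textbf{Bounding the error sum.} The tails $R_k$ alternate in sign starting with $R_0>0$. If $|R_k|$ is nonincreasing, every partial sum $\sum_{k<\ell}R_k$ lies in $[0,R_0]$, and $R_0\le a_1=1/3$. This gives $\cstar\ell\le\mathbb E[\ind(C_\pi)]\le\cstar\ell+1/3\le\cstar\ell+2$; the slack up to $2$ is there in case the monotonicity argument only works after discarding a few boundary edges. Since $|R_k|=a_{k+1}-|R_{k+1}|$, monotonicity amounts to $|R_{k+1}|\le a_{k+1}/2$. A sufficient condition is that the ratios $a_{t+1}/a_t=E_{t+3}/\bigl((t+3)E_{t+2}\bigr)$ stay at most about $1/2$, or more directly that $(a_t)$ is convex. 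I would verify this from the asymptotics $E_n/n!\sim 2(2/\pi)^{n+1}$, whose ratio $2/\pi$ needs a careful treatment, together with exact checks for small $t$. Alternatively, I would bound the total error by grouping consecutive pairs $R_{2j}+R_{2j+1}=a_{2j+1}-a_{2j+2}\ge0$, which are nonnegative because $a_t$ is decreasing: alternation on more entries is a stricter event. With this grouping only a possible last unpaired term needs care, and since it is $O(a_\ell)$, it is absorbed in the additive slack. This is the step I expect to be the main obstacle, namely getting the lower bound exactly with no additive loss, and the pairing trick is what I would try first.

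\textbf{Optimality.} The upper bound $\cstar\ell+2$ itself shows that no coefficient $c>\cstar$ can satisfy $c\ell\le\mathbb E[\ind(C_\pi)]$ for all $\ell$: divide by $\ell$ and let $\ell\to\infty$. Hence the constant $\cstar$ in the lower bound is best possible.
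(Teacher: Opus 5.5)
Your reduction is correct, and it is an exact finite-$\ell$ version of the paper's density computation: $\Pr[e_i\text{ selected}]=S_{i-1}$, $S_\infty=\cstar$, and $\mathbb E[\ind(C_\pi)]=\cstar\ell+\sum_{k<\ell}R_k$. The gap is the step the lemma actually needs: $\sum_{k<\ell}R_k\ge0$ for every $\ell$, with no additive loss.

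Write $\tau_k=|R_k|=a_{k+1}-a_{k+2}+\cdots$, so that $\tau_k=a_{k+1}-\tau_{k+1}$. Your pairing identity is wrong. In fact $R_{2j}+R_{2j+1}=\tau_{2j}-\tau_{2j+1}=a_{2j+1}-2\tau_{2j+1}$, not $a_{2j+1}-a_{2j+2}$. Its nonnegativity is exactly the condition $\tau_{2j+1}\le a_{2j+1}/2$ that the pairing was meant to avoid.

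The ratio criterion also fails outright. We have $a_0,a_1,a_2,a_3=1,\tfrac23,\tfrac5{12},\tfrac4{15}$ (so $a_1=2/3$, not $1/3$). Hence $a_{t+1}/a_t=\tfrac23,\tfrac58,\tfrac{16}{25},\ldots\to2/\pi>1/2$.

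What remains is a genuine analytic estimate such as convexity of $a_t=2E_{t+2}/(t+2)!$. It holds numerically, but you give only a plan. A rigorous version needs explicit error terms, e.g.\ from $E_n/n!=2(2/\pi)^{n+1}\sum_{k\ge0}(-1)^{k(n+1)}(2k+1)^{-(n+1)}$. Absorbing a leftover term into slack is unavailable for the lower bound, which has none. (The unpaired term $R_{\ell-1}=\tau_{\ell-1}$ for odd $\ell$ is nonnegative anyway, so only the pairs are at issue.) Your upper bound rests on the same unproved monotonicity.

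The missing idea is the paper's comparison with a stationary independent set, which makes these estimates unnecessary.

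\begin{itemize}
\item Extend the ranks to a bi-infinite i.i.d.\ sequence. Let $T_i^\infty$ be the untruncated left run length, which is finite almost surely since $a_t\to0$. Select $e_i$ iff $T_i^\infty$ is even.
\item This selects odd offsets from the left end of every bi-infinite component, so it is independent there.
\item Its restriction to $e_1,\ldots,e_\ell$ is independent in $C_\pi$. Indeed, $C_\pi$ is the induced subgraph, because the conflicts at $v_1,\ldots,v_{\ell-1}$ depend only on the ranks of $v_0,\ldots,v_\ell$.
\item Each $e_i$ is selected with probability exactly $\sum_j(a_{2j}-a_{2j+1})=S_\infty=\cstar$.
\end{itemize}

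Since $\ind(C_\pi)$ dominates this set pointwise, $\mathbb E[\ind(C_\pi)]\ge\cstar\ell$. In your notation, this proves $\sum_{k<\ell}R_k\ge0$ for every $\ell$, with no convexity needed.

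The upper bound needs no monotonicity either: $|R_k|\le a_{k+1}$ gives $\sum_{k<\ell}R_k\le\sum_{t\ge1}a_t=2(\sec1+\tan1)-5<1.82$. The paper argues instead that only the two truncated boundary components can lose, at most one each. Once the upper bound is in place, your sharpness argument is fine.
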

\begin{proof}
\emph{Proof idea.} Conflicts along a path form a stationary process of alternating runs; the density of an optimal independent set is an alternating sum of run probabilities, which Andr\'e's identity evaluates in closed form.

\emph{Conflict runs.} Let $(U_i)_{i\in\mathbb Z}$ be independent uniform ranks in $(0,1)$ assigned to vertices $i$. Conflict vertex $i$ represents the path edge between vertices $i$ and $i+1$; join conflict vertices $i-1,i$ when $U_i$ is a local extremum. The ranks have the same independent distribution at every position, and the conflict rule is unchanged by shifting all indices. Hence probabilities of conflict patterns do not depend on their starting position; this is stationarity.

Let $p_k$ be the probability that $k$ consecutive edges of the conflict graph are present, with $p_0=1$. The $k+2$ ranks involved must alternate, in either direction. If $E_j$ denotes the Euler zigzag numbers (the number of up--down permutations of $j$ elements, with $E_0=E_1=1$), then $p_k=\frac{2E_{k+2}}{(k+2)!}\quad(k\ge0)$.
For $k\ge1$, choose the requirements at centers $1,4,7,\ldots\le k$, a total of $\lceil k/3\rceil$. They use disjoint triples of independent ranks, so
\begin{equation}
 p_k\le(2/3)^{\lceil k/3\rceil}\qquad(k\ge0).
 \label{eq:run-decay}
\end{equation}
By \eqref{eq:run-decay}, an infinite run from any fixed index has probability zero. A countable union over all indices shows that every component is finite almost surely.

\emph{Independent-set density.} Let $q_r$ be the probability that a fixed conflict vertex starts a component of exactly $r$ vertices. Inclusion--exclusion at its two boundaries gives
\begin{equation}
 q_r=p_{r-1}-2p_r+p_{r+1},\qquad \sum_{r\ge1}r q_r=1.
 \label{eq:component-probabilities}
\end{equation}
The second identity counts the $r$ possible positions of a vertex in an $r$-vertex component. Select odd-positioned vertices in each component, starting from its left endpoint. This gives an independent set that is optimal on every component, with the same selection probability at each index:
\begin{align}
 d&=\sum_{r\ge1}\lceil r/2\rceil q_r
   =\frac12+\frac12\sum_{j\ge0}q_{2j+1}
   =\sum_{k\ge0}(-1)^kp_k.\label{eq:density}
\end{align}
For the last equality, substitute \eqref{eq:component-probabilities}: the odd-order sum is $p_0+2\sum_{k\ge1}(-1)^kp_k$. Bound~\eqref{eq:run-decay} gives absolute convergence and justifies the rearrangements.

\emph{Evaluating the density.} Andr\'e's identity, as stated in Stanley's survey~\cite{Stanley2009}, is $\sum_{j\ge0}E_j\frac{z^j}{j!}=\sec z+\tan z$.
Its radius of convergence is $\pi/2$, so it converges at $z=-1$. Substitution in \eqref{eq:density}, with $E_0=E_1=1$, gives $d=2\sum_{j\ge2}(-1)^j\frac{E_j}{j!} =2(\sec(-1)+\tan(-1)-1+1)=\cstar$.

\emph{Finite paths.} Restrict the stationary independent set to $\ell$ consecutive conflict vertices. It remains independent and has expected size $\cstar\ell$. The $\ell+1$ underlying ranks have the same relative-order distribution as on the finite path, proving the lower bound.

For the upper bound, the finite components of orders $r_1,\ldots,r_j$ have total optimum $\sum_i\lceil r_i/2\rceil$. The stationary choice is optimal on every untruncated component. On a truncated component it selects either $\lfloor r_i/2\rfloor$ or $\lceil r_i/2\rceil$ vertices, losing at most one. Only the two boundary components can be truncated. The finite optimum therefore exceeds the restricted choice by at most two. Dividing by $\ell$ proves sharpness.
\end{proof}

\begin{theorem}
\label{thm:orientation-retention}
For every finite simple graph $G$,
\[
 \cstar\rho(G)\le\mathbb E_\pi[\nu(B_\pi)]\le\rho(G).
\]
The coefficient $\cstar$ cannot be increased uniformly over all graphs.
\end{theorem}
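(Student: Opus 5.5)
The upper bound is immediate from Lemma~\ref{lem:projection}: for every order $\pi$ we have $\nu(B_\pi)\le\rho(G)$, and taking expectations preserves the inequality.

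For the lower bound, fix an optimal path cover $\mathcal P$ of $G$ with $\rho(G)$ edges, consisting of vertex-disjoint paths $P_1,\dots,P_t$ with $\ell_1,\dots,\ell_t$ edges (isolated vertices contribute nothing). The subgraph of $B_\pi$ representing the edges of $\mathcal P$ is the disjoint union of the graphs $B_\pi[P_j]$. These graphs are vertex-disjoint in $B_\pi$, since distinct paths use distinct original vertices and hence distinct copies. So a union of matchings in the $B_\pi[P_j]$ is a matching of $B_\pi$, and
\[
 \nu(B_\pi)\ \ge\ \sum_{j}\nu(B_\pi[P_j])\ =\ \sum_j \ind(C_\pi^{(j)}),
\]
where $C_\pi^{(j)}$ is the conflict graph of $P_j$, using the identity $\nu(B_\pi[P])=\ind(C_\pi)$ established before Lemma~\ref{lem:sharp-retention}.

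Under a uniform order of $V$, the relative order restricted to the vertices of each $P_j$ is uniform. Hence each summand has exactly the distribution analyzed in Lemma~\ref{lem:sharp-retention}, and its expectation is at least $\cstar\ell_j$. Linearity of expectation then gives $\mathbb E_\pi[\nu(B_\pi)]\ge\cstar\sum_j\ell_j=\cstar\rho(G)$. Independence across paths is never needed, only linearity.

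For optimality of the coefficient, take $G$ to be a path on $\ell+1$ vertices, so $\rho(G)=\ell$ and $B_\pi=B_\pi[P]$. Then $\mathbb E[\nu(B_\pi)]=\mathbb E[\ind(C_\pi)]\le\cstar\ell+2$ by the upper bound of Lemma~\ref{lem:sharp-retention}. Thus the ratio $\mathbb E[\nu(B_\pi)]/\rho(G)\le\cstar+2/\ell$ tends to $\cstar$ as $\ell\to\infty$, and no constant $c>\cstar$ satisfies $c\rho(G)\le\mathbb E[\nu(B_\pi)]$ for all graphs.

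The only step needing care is the reduction for the lower bound. One must check that restricting $B_\pi$ to the edges of a single path reproduces exactly the conflict structure of Lemma~\ref{lem:sharp-retention}, i.e.\ that $B_\pi[P]$ depends only on the induced relative order of the path's vertices. One must also check that discarding the non-cover edges of $G$ can only decrease $\nu$. Both follow directly from the definition of $B_\pi$.
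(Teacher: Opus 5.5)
Your proposal is correct and follows the paper's proof essentially step for step. Like the paper, you fix an optimum cover independently of $\pi$, unite maximum matchings of the vertex-disjoint split graphs of its paths, and apply Lemma~\ref{lem:sharp-retention} by linearity; you take the upper bound from Lemma~\ref{lem:projection} and get sharpness from long paths via the $\cstar\ell+2$ bound, with your extra remarks only spelling out details the paper states tersely.
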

\begin{proof}
Fix an optimum path cover independently of $\pi$. Its path components have uniform restricted orders and vertex-disjoint split graphs. Unite maximum matchings of these graphs and apply \Cref{lem:sharp-retention} by linearity of expectation. This proves the lower bound; \Cref{lem:projection} gives the upper bound for every order. Long paths establish sharpness.
\end{proof}

\subsection{Concentration and sharpness}
\begin{theorem}
\label{thm:order-concentration}
Let $G$ be a graph on $n\ge2$ vertices. For any integer $r\ge1$ and random orders $\pi_1,\ldots,\pi_r$, each marginally uniform, put $X_j=\nu(B_{\pi_j})$. For $u>0$,
\[
 \Pr\!\left[\max_j\left|X_j-\mathbb E X_j\right|>u\right]
 \le 2r\exp\!\left(-\frac{u^2}{2n}\right).
\]
If $G$ is a path and $r\le n^D$ for a fixed $D$, then $\max_jX_j=\cstar(n-1)+O(\sqrt{n\log n})$ with high probability. Independence between the orders is not required.
\end{theorem}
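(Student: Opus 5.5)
The plan is to apply McDiarmid's bounded-differences inequality to each $X_j$ separately and then take a union bound over $j$. Because the union bound never uses any joint structure of $(\pi_1,\ldots,\pi_r)$, this is also why independence between the orders is not needed.

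To make McDiarmid applicable, I would generate a uniform order from independent ranks. Let $U_v$, for $v\in V$, be i.i.d.\ uniform in $(0,1)$, and let $\pi$ sort the vertices by their ranks. Ties occur with probability zero, and the resulting order is uniform. Then $X=\nu(B_\pi)$ is a function $f(U_v:v\in V)$ of $n$ independent variables.

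The key step is the Lipschitz property: changing a single $U_v$ changes $f$ by at most $2$. Changing $U_v$ only moves $v$ within the order; the relative order of every other pair is unchanged. Hence every edge of $B_\pi$ not incident to $v_L$ or $v_R$ is unaffected. Take a maximum matching before the change and delete the at most two edges at $v_L$ and $v_R$. What remains is still a matching after the change, so the new maximum is at least the old one minus $2$. By symmetry the same holds in the other direction.

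McDiarmid with $c_v=2$ gives $\sum_v c_v^2=4n$, so
\[
 \Pr\bigl[|X-\mathbb E X|>u\bigr]\le 2\exp\!\left(-\frac{2u^2}{4n}\right)=2\exp\!\left(-\frac{u^2}{2n}\right),
\]
which is exactly the stated bound. Since each $\pi_j$ is marginally uniform, this bound holds for each $X_j$ on its own, and a union bound over $j\le r$ yields the factor $2r$.

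For the path case, let $G$ be a path with $\ell=n-1$ edges. Its only path cover with $n-1$ edges is $G$ itself, so $B_\pi=B_\pi[P]$ and $\nu(B_\pi)=\ind(C_\pi)$. By \Cref{lem:sharp-retention}, $\mathbb E X_j\in[\cstar(n-1),\cstar(n-1)+2]$. Set $u=\sqrt{2(a_0+D+1)\,n\log n}$. Then the failure probability is at most $2n^D\cdot n^{-(a_0+D+1)}\le n^{-a_0}$ for $n\ge2$. On the complementary event, every $X_j$, and in particular the maximum, lies within $u+2=O(\sqrt{n\log n})$ of $\cstar(n-1)$.

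The step that needs the most care is the Lipschitz constant. One must verify that moving $v$ affects only edges at $v$'s two copies, and not the orientation of other edges. This holds because orientations depend only on pairwise comparisons. With that constant equal to $2$, the exponent matches the stated $u^2/(2n)$.
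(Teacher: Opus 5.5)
Your proposal is correct and follows the paper's proof essentially step for step: independent continuous ranks, a Lipschitz constant of $2$ because moving one vertex leaves $B_\pi-\{v_L,v_R\}$ unchanged, McDiarmid with $\sum_v c_v^2=4n$, a union bound over the $r$ orders, and \Cref{lem:sharp-retention} for the path case. Your choice $u=\sqrt{2(a_0+D+1)n\log n}$ is only a reparametrization of the paper's $u=\sqrt{2n\log(2rn^{a_0})}$.
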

\begin{proof}
For each fixed $j$, generate its uniform marginal from independent continuous ranks. Changing one rank moves one vertex $v$ and leaves the relative order of all other vertices unchanged, so $B_\pi-\{v_L,v_R\}$ is the same graph before and after. Deleting a vertex lowers $\nu$ by at most one, so $\nu(B_\pi)$ changes by at most two. Bounded differences~\cite{McDiarmid1989} gives $\Pr[|X_j-\mathbb E X_j|>u]\le2e^{-u^2/(2n)}$. A union bound proves the first claim. For a path, take $u=\sqrt{2n\log(2r n^{a_0})}$, where $a_0>0$, and use $\cstar(n-1)\le\mathbb E X_j\le\cstar(n-1)+2$ from \Cref{lem:sharp-retention}.
\end{proof}
On the $(1,2)$-metric whose cost-one graph is a path on $n$ vertices, $\OPT_{12}=n+1$ for $n\ge3$. With polynomially many uniform orders, the ratio of $2n-\max_jX_j$ therefore converges to $2-\cstar$ with high probability. Lower estimates $z_j\le X_j$ cannot improve it: subtracting $\max_jz_j$ from $2n$ only increases the reported cost. If also $z_j\ge X_j-\eta n$ for every $j$, that increase is at most $\eta n$, or less than $\eta$ in the ratio. This error need not vanish for fixed $\eta$, so exact convergence is asserted for exact matching values.

\subsection{Path-cover and TSP value estimation}
We use an additive estimator that returns a lower bound on matching size.
\begin{theorem}[\cite{BhattacharyaKissSaranurak2023}]
\label{thm:bks-interface}
For every fixed $0<a<1$, an $N$-vertex graph $H$ given by adjacency-matrix queries has a randomized estimator returning, with high probability, a value $z$ satisfying $\nu(H)-aN\le z\le\nu(H)$. Its running time is $T_{\rm add}(N,a)=\widetilde O_a(N^{2-\xi(a)})$ for a positive $\xi(a)$.
\end{theorem}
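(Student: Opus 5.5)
This statement is imported from~\cite{BhattacharyaKissSaranurak2023}, so I do not plan to reprove the underlying algorithm. My plan is to derive the stated one-sided interface from their two-sided guarantee. Their main result, as I recall it, gives for each fixed $a'>0$ an algorithm with running time $N^{2-\Omega_{a'}(1)}$ under adjacency-matrix queries. It outputs an estimate $\tilde z$ with $|\tilde z-\nu(H)|\le a'N$ with constant success probability, say at least $2/3$. I will take $a'=a/2$ and shift the output downward.

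\textbf{Step 1 (amplification).} Run the base estimator independently $t=\lceil c\log N\rceil$ times and take the median $\tilde z_{\rm med}$. A Chernoff bound applies because fewer than half the runs can fail, except with probability $e^{-\Omega(t)}$. This gives $|\tilde z_{\rm med}-\nu(H)|\le aN/2$ with probability at least $1-N^{-a_0}$ once $c$ is large relative to $a_0$. The cost is a factor $O(\log N)$, which $\widetilde O_a$ absorbs.

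\textbf{Step 2 (one-sided shift).} Return $z=\max\{0,\tilde z_{\rm med}-aN/2\}$. On the good event, $z\le\nu(H)$ and $z\ge\nu(H)-aN$. The clipping at $0$ preserves both bounds since $\nu(H)\ge0$. Set $\xi(a)$ to be the exponent gap that~\cite{BhattacharyaKissSaranurak2023} gives for accuracy $a/2$; it is positive for every fixed $a\in(0,1)$.

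\textbf{Main obstacle.} The only real issue is matching the source's model and parameters to this interface. One concern is that the cited bound is stated for adjacency-matrix queries with additive error $\eps n$ and a positive exponent gap depending only on $\eps$. Another is that its success probability may be a constant rather than high probability. I would check both conditions against the source; Appendix~\ref{app:bks-parameters} suggests the authors do exactly this to extract an explicit $\xi$. If the source instead gives a multiplicative-additive guarantee $(1,\eps n)$ or works in an adjacency-list model, I would note that pair queries simulate what is needed. A multiplicative-additive guarantee $(1,\eps n)$ already implies the additive form, since $\nu(H)\le N/2$.
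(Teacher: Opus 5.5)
Your plan is correct and matches the paper's treatment: the paper likewise imports the estimator from \cite{BhattacharyaKissSaranurak2023} and raises its confidence by independent median repetitions, which keep the median inside the convex target interval. The only differences are that the paper invokes the source's Corollary~6.2, which already has the one-sided form $\nu(H)-aN\le z\le\nu(H)$ with $a=4\gamma$, so your downward shift and the remark about $\nu(H)\le N/2$ are unnecessary though harmless, and that Appendix~\ref{app:bks-parameters} traces the source's parameters to give an explicit $\xi(a)$ rather than leaving it implicit.
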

The reduction uses only this value-estimation interface. Independent median repetitions raise its confidence while preserving the interval; Appendix~\ref{app:bks-parameters} gives an explicit admissible $\xi$. By \Cref{thm:order-concentration}, one vertex order suffices outside a bounded-size branch.

\begin{algorithm}[H]
\caption{Single-order path-cover value estimation}
\label{alg:query-cover}
\begin{algorithmic}[1]
\Require Adjacency-query access to $G$ on $n\ge2$ vertices; $0<\eta<1$; $a_0>0$.
\Ensure A scalar estimate $s$ of the maximum path-cover size.
\If{$n<8\eta^{-2}\log(4n^{a_0})$}
  \State Query all vertex pairs and \Return $\rho(G)$, computed by exhaustive search.
\EndIf
\State Sample a uniform vertex order $\pi$ and store its positions.
\State Provide the implicit adjacency oracle for $B_\pi$ from Section~\ref{sec:split-graph}.
\State Using fresh randomness, estimate $\nu(B_\pi)$ as $z$ at accuracy $\eta/4$ (\Cref{thm:bks-interface}),
\Statex \hspace{\algorithmicindent}amplified to failure probability at most $n^{-a_0}/2$.
\State \Return $\max\{0,\min\{n-1,z\}\}$.
\end{algorithmic}
\end{algorithm}

\Needspace{6\baselineskip}
\begin{theorem}
\label{thm:query-cover}
Fix $0<\eta<1$ and $a_0>0$. On an $n\ge2$ vertex graph given by adjacency queries, Algorithm~\ref{alg:query-cover} runs in $n^{2-\Omega_\eta(1)}$ time and, with probability at least $1-n^{-a_0}$, returns $s$ satisfying
\begin{equation}
 \cstar\rho(G)-\eta n\le s\le\rho(G).
 \label{eq:query-cover}
\end{equation}
The hidden constants and exponent may depend on $\eta,a_0$. No path cover is returned.
\end{theorem}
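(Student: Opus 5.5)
We split on the branch taken by Algorithm~\ref{alg:query-cover}.

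\emph{Small-$n$ branch.} If $n<8\eta^{-2}\log(4n^{a_0})$, then $n=O_{\eta,a_0}(1)$. The condition holds only for bounded $n$, since the right-hand side grows logarithmically while the left grows linearly. Querying all pairs and searching exhaustively therefore costs $O_{\eta,a_0}(1)$ time and returns $s=\rho(G)$ exactly, so \eqref{eq:query-cover} holds deterministically.

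\emph{Large-$n$ branch.} Here $B_\pi$ has $N=2n$ vertices, and each oracle query costs at most one original query plus $O(1)$ work using the stored positions. Applying \Cref{thm:bks-interface} with accuracy $a=\eta/4$, amplified by $O(\log n)$ median repetitions, gives running time $\widetilde O_\eta((2n)^{2-\xi(\eta/4)})\cdot O(\log n)$, which is $n^{2-\Omega_\eta(1)}$. The shuffle adds only $O(n)$.

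For correctness, two events must hold.
\begin{itemize}
\item The estimator succeeds, which gives $\nu(B_\pi)-\tfrac{\eta}{2}n\le z\le \nu(B_\pi)$. This fails with probability at most $n^{-a_0}/2$.
\item The order is not atypically bad, meaning $\nu(B_\pi)\ge \mathbb E_\pi[\nu(B_\pi)]-\tfrac{\eta}{2}n$.
\end{itemize}
For the second event, apply \Cref{thm:order-concentration} with $r=1$ and $u=\eta n/2$. The failure probability is at most $2\exp(-\eta^2 n/8)$. The branch condition $n\ge 8\eta^{-2}\log(4n^{a_0})$ makes this at most $2/(4n^{a_0})=n^{-a_0}/2$, which is exactly why that threshold was chosen. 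A union bound gives total failure probability at most $n^{-a_0}$.

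On the good event, \Cref{thm:orientation-retention} gives
\[
z\ge \mathbb E_\pi[\nu(B_\pi)]-\eta n\ge \cstar\rho(G)-\eta n .
\]
For the upper side, \Cref{lem:projection} gives $z\le\nu(B_\pi)\le\rho(G)$. It remains to check that the final clipping preserves both bounds.
\begin{itemize}
\item Taking $\max\{0,\cdot\}$ keeps $s\le\rho(G)$ because $\rho(G)\ge0$, and it only raises the lower side.
\item Taking $\min\{n-1,\cdot\}$ keeps the lower bound because $\cstar\rho(G)-\eta n\le \rho(G)\le n-1$, and it only lowers the upper side.
\end{itemize}

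There is one subtlety: the estimator's randomness must be independent of $\pi$. The statement in \Cref{thm:bks-interface} holds for each fixed graph, so we condition on $\pi$ and apply it to the fixed graph $B_\pi$. The same conditioning ensures the probability amplification is valid for every order. The main obstacle is only the bookkeeping that splits the error $\eta n$ evenly between the concentration term and the estimation term. Note that the estimator's additive error is relative to $N=2n$, so accuracy $\eta/4$ yields error $\eta n/2$.
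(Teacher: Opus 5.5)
Your proposal is correct and matches the paper's proof essentially step for step. It uses the same split at the threshold $n<8\eta^{-2}\log(4n^{a_0})$ and applies \Cref{thm:order-concentration} with $r=1$, $u=\eta n/2$ to get failure probability $n^{-a_0}/2$. It takes the conditional estimator guarantee at accuracy $\eta/4$ on $2n$ vertices (error $\eta n/2$), uses a union bound, checks clipping via $0\le\rho(G)\le n-1$, and accounts for time as one shuffle plus $O(\log n)$ repetitions of $T_{\rm add}(2n,\eta/4)$ with one original query per auxiliary query.
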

\begin{proof}
\emph{Success probability.} The exhaustive branch is exact. Otherwise put $Y=\nu(B_\pi)$. By \Cref{thm:orientation-retention}, $\mathbb EY\ge\cstar\rho(G)$. Taking $r=1$ and $u=\eta n/2$ in \Cref{thm:order-concentration} gives $\Pr[Y<\cstar\rho(G)-\eta n/2]\le2e^{-\eta^2n/8}\le n^{-a_0}/2$, using the threshold in Algorithm~\ref{alg:query-cover}. Conditional on the sampled order, the matching estimate satisfies $Y-\eta n/2\le z\le Y$ except with probability at most $n^{-a_0}/2$. On the joint success event, $\cstar\rho(G)-\eta n\le z\le Y\le\rho(G)$. Clipping preserves these bounds because $0\le\rho(G)\le n-1$. A union bound proves~\eqref{eq:query-cover}.

\emph{Time and queries.} Inputs failing the threshold have bounded size for fixed $\eta,a_0$, so the exhaustive work is absorbed into a parameter-dependent constant. Otherwise one shuffle takes $O(n)$ operations, and each auxiliary query costs at most one original query. The $O_{\eta,a_0}(\log n)$ confidence repetitions give
\begin{equation}
 O_{\eta,a_0}\!\left(\log n\,T_{\rm add}(2n,\eta/4)+n\right)
 =\widetilde O_{\eta,a_0}(n^{2-\xi(\eta/4)}).
 \label{eq:query-composition-time}
\end{equation}
Absorbing the logarithmic factors gives $n^{2-\Omega_\eta(1)}$ time.
\end{proof}
A purely multiplicative guarantee on all graphs would require quadratic queries: a uniformly hidden edge is found by $q$ distinct pair queries with probability at most $q/\binom n2$. Until then, the answers coincide with those of the empty graph. This explains the additive term in~\eqref{eq:query-cover}.

\begin{theorem}
\label{thm:query-tsp}
For every $n$-vertex $(1,2)$-TSP instance with $n\ge3$, every fixed $0<\eps<1$, and every confidence exponent $a_0>0$, a randomized distance-query algorithm returns, with probability at least $1-n^{-a_0}$,
\[
 \OPT_{12}\le\widehat T\le(2-\cstar+\eps)\OPT_{12}
\]
in $n^{2-\Omega_\eps(1)}$ time and queries. More explicitly, the bound is $\widetilde O_{\eps,a_0}(n^{2-\xi(\eps/8)})$, where $\xi$ is specified in Appendix~\ref{app:bks-parameters}.
\end{theorem}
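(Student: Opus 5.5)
The plan is to run Algorithm~\ref{alg:query-cover} on the cost-one graph $G_1$ and feed its output into \Cref{lem:tsp-conversion}. A distance query on a pair $uv$ answers the adjacency query ``$uv\in E_1$?'' (cost one means adjacent), so \Cref{thm:query-cover} applies with distance queries in place of adjacency queries and no extra overhead. I will choose $\eta=\eps/2$. With probability at least $1-n^{-a_0}$ this returns $s$ with $\cstar\rho(G_1)-\eta n\le s\le\rho(G_1)$ and $s\ge0$. I then output $\widehat T=2n-s$.

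Next I apply \eqref{eq:common-transfer} with $\gamma=\cstar\in[0,1]$. It gives $\OPT_{12}\le\widehat T\le(2-\cstar+\eta)\OPT_{12}+\cstar$. The additive constant $\cstar<1$ must be absorbed into the multiplicative slack. Since $\OPT_{12}\ge n$, we have $\cstar\le\OPT_{12}/n$. So whenever $n\ge 2/\eps$, the bound is at most $(2-\cstar+\eps/2+\eps/2)\OPT_{12}$, as required.

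For $n<2/\eps$, which is a constant size for fixed $\eps$, I will simply query all pairs and compute $\OPT_{12}$ exactly by brute force. The cost is absorbed into an $O_\eps(1)$ term. This threshold handling is the only point needing care. Alternatively, I could pick $\eta=\eps/4$ and fold the small-$n$ case into Algorithm~\ref{alg:query-cover}'s exhaustive branch by enlarging that threshold.

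For the time bound, \Cref{thm:query-cover} runs the estimator of \Cref{thm:bks-interface} at accuracy $\eta/4$. To match the stated $\xi(\eps/8)$, I take $\eta=\eps/2$, so $\eta/4=\eps/8$, and \eqref{eq:query-composition-time} gives $\widetilde O_{\eps,a_0}(n^{2-\xi(\eps/8)})$ time, which is $n^{2-\Omega_\eps(1)}$. The number of queries is bounded by the running time. The main obstacle is this bookkeeping: the additive $\gamma$ term from \Cref{lem:tsp-conversion} and the $\eta n$ error must both fit in $\eps\OPT_{12}$. Both are handled by $\OPT_{12}\ge n$ and the constant-size brute-force case.
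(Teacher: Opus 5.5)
Your proposal is correct and follows essentially the same route as the paper: run Algorithm~\ref{alg:query-cover} on $G_1$ with $\eta=\eps/2$, apply \Cref{lem:tsp-conversion} with $\gamma=\cstar$, and absorb the additive $\cstar<1$ using $\OPT_{12}\ge n\ge 2/\eps$. As in the paper, smaller $n$ is handled by exhaustive search, and the choice $\eta/4=\eps/8$ yields the stated $\xi(\eps/8)$ exponent.
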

\begin{proof}
For $n\ge\lceil2/\eps\rceil$, run Algorithm~\ref{alg:query-cover} on $G_1$ with $\eta=\eps/2$ and return $\widehat T=2n-s$. By \Cref{thm:query-cover,lem:tsp-conversion}, on the success event,
\[
 \OPT_{12}\le\widehat T\le(2-\cstar+\eps/2)\OPT_{12}+\cstar
                  \le(2-\cstar+\eps)\OPT_{12},
\]
since $\OPT_{12}\ge n\ge2/\eps$ and $\cstar<1$. For smaller $n$, read the whole instance and solve it exhaustively. Its size depends only on fixed $\eps$, so the work is absorbed into the $\eps$-dependent constant. Each query to $G_1$ costs one distance query.
\end{proof}

\begin{remark}
\label{rem:elementary-retention}
A simpler analysis gives a weaker approximation guarantee. For a path with $\ell\ge1$ edges, let $Z$ count the isolated vertices of its conflict graph $C_\pi$. Since each component is a path, $\ind(C_\pi)\ge\ell/2+Z/2$. At an interior position, a conflict vertex is isolated with probability $1/12$, because the four underlying ranks must be monotone. At either endpoint position, this probability is $1/3$ when $\ell\ge2$; the sole vertex is always isolated when $\ell=1$. Thus $\mathbb E[Z]\ge\ell/12$ and $\mathbb E[\ind(C_\pi)]\ge13\ell/24$. Replacing $\cstar$ by $13/24$ in the estimation argument therefore gives the weaker $(35/24+\eps)$-approximation for $(1,2)$-TSP cost estimation, without the alternating-permutation calculation.
\end{remark}
\section{Weighted path cover by edge deletion}
\label{sec:deletion}
Algorithm~\ref{alg:deletion} gives the two-call deletion rule for weighted path cover. Section~\ref{sec:streaming} handles weight normalization and the small-instance Max-TSP branch.

\begin{algorithm}[H]
\caption{Weighted path cover by edge deletion}
\label{alg:deletion}
\begin{algorithmic}[1]
\Require Graph $G=(V,E)$ with nonnegative weights $w$; ratios $0<\alpha,\beta\le1$.
\Ensure A path cover $(V,F)$.
\State Compute a matching $A$ in $G$ with $w(A)\ge\alpha\nu_w(G)$.
\State Compute a matching $B$ in $G-A$ with $w(B)\ge\beta\nu_w(G-A)$.
\State $F\gets A\cup B$.
\For{each cycle component $C$ of $(V,A\cup B)$}
  \State Delete from $F$ an edge of minimum weight in $C$.
\EndFor
\State \Return $(V,F)$.
\end{algorithmic}
\end{algorithm}

Let $\mathcal C$ be the cycle components of $A\cup B$. Their edges alternate between the matchings, so each cycle has even length at least four. Deleting one edge per cycle therefore gives a path cover; the deleted edge may belong to either matching. For each cycle $C$, let $\lambda_C=\min_{e\in E(C)}w(e)$, and put $\Lambda=\Lambda(A,B)=\sum_{C\in\mathcal C}\lambda_C$.
Thus $w(F)=w(A)+w(B)-\Lambda$, with $\Lambda=0$ when there are no cycles. Each cycle has at least two $B$-edges, so
\begin{equation}
 w(B)\ge2\Lambda.\label{eq:cycle-loss}
\end{equation}

\begin{lemma}
\label{lem:compensation}
For any edge-disjoint matchings $A,B$ of a nonnegatively weighted graph $G$ and any linear forest $P\subseteq G$,
\[
 2\nu_w(G-A)\ge w(P)-w(A)+\Lambda(A,B).
\]
\end{lemma}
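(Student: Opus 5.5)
The plan is to exhibit a subgraph $H\subseteq G-A$ that splits into two matchings and satisfies $w(H)\ge w(P)-w(A)+\Lambda(A,B)$. Then $2\nu_w(G-A)\ge w(H)$ finishes the proof. The starting point is $P\setminus A$. It is a linear forest avoiding $A$, so alternating along each of its paths splits it into two matchings of $G-A$. This gives
\[
 2\nu_w(G-A)\ge w(P\setminus A)=w(P)-w(A)+w(A\setminus P).
\]
It remains to find an extra $\Lambda=\sum_{C\in\mathcal C}\lambda_C$, distributed over the cycles of $A\cup B$.

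I would split the cycles $C\in\mathcal C$ into two types. Call $C$ \emph{open} if some $A$-edge of $C$ lies outside $P$, and \emph{closed} otherwise. Distinct cycles are vertex-disjoint, so each open cycle contributes its own $A$-edge $a_C\notin P$ to $w(A\setminus P)$, with $w(a_C)\ge\lambda_C$. Summing over open cycles gives $w(A\setminus P)\ge\sum_{C\text{ open}}\lambda_C$. For a closed cycle $C$, all its $A$-edges lie in $P$. Since $P$ contains no cycle, some $B$-edge $b_C$ of $C$ is missing from $P$. Each endpoint of $b_C$ meets an $A$-edge of $P$, so it has degree at most one in $P\setminus A$. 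Adding all such $b_C$ to $P\setminus A$ therefore gives a subgraph $H\subseteq G-A$ of maximum degree two, since the $b_C$ are pairwise vertex-disjoint. Its weight is $w(P\setminus A)+\sum_{C\text{ closed}}w(b_C)$, and $w(b_C)\ge\lambda_C$.

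The main obstacle is that $H$ may contain cycles, each passing through at least one added $b_C$. An even cycle splits into two matchings at no cost, but an odd cycle $D$ does not. My first attempt is to delete from each odd cycle $D$ of $H$ an edge of minimum weight, and then split the resulting linear forest. The loss is at most the weight of the lightest $b_C$ on $D$. When $D$ contains at least two added edges, this loss is at most half of their total, so at least half of the needed gain survives. If that half-gain is insufficient, I would sharpen the choice of $b_C$. The closed cycle $C$ has at least two $B$-edges, and I would choose $b_C$, or re-route through an alternative $B$-edge of $C$ missing from $P$, so that $D$ becomes even. The fallback is a fractional argument. Put weight $1/2$ on the edges of $H$ and $1/2$ on the $A$-edges of the closed cycles, and use the bipartite-type structure of $A\cup B$ to show that a weight-$\Lambda$ surplus survives integrally in two matchings.

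The delicate case is an odd cycle of $H$ through exactly one $b_C$. Here I expect to trade $b_C$ against the $P$-edge leaving its endpoint, using that $\lambda_C\le w(e)$ for every $e\in C$.
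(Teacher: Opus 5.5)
Your setup is sound and matches the paper's. The open/closed split, the bound $w(A\setminus P)\ge\sum_{C\text{ open}}\lambda_C$, and the observation that adding one missing $B$-edge $b_C$ per closed cycle to $P_0=P\setminus A$ keeps all degrees at most two are exactly the paper's first steps. The gap is the step you flag as the main obstacle. You never show that the $b_C$ can be chosen so that $H$ splits into two matchings, and none of the strategies you list closes it.

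Deleting a lightest edge of each odd cycle of $H$ can forfeit the whole gain. Your trade in the single-edge case is not controlled by $\lambda_C$. The other $H$-edge at an endpoint $u$ of $b_C$ lies in $P_0$, and it is not an edge of $C$, because the only edges of $C$ at $u$ are its $A$-edge and $b_C$. Concretely, take unit weights and the four-cycle $C=x_1y_1x_2y_2$ with $A$-edges $x_1y_1,x_2y_2$ and $B$-edges $y_1x_2,y_2x_1$. Let $P$ be the path $x_1y_1wx_2y_2$, so the required value is $w(P)-w(A)+\Lambda=3$. Choosing $b_C=y_1x_2$ makes $H$ the triangle $y_1wx_2$. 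After either deletion or your trade, at most $2$ remains. The fractional fallback is unspecified, and as stated it puts mass on $A$-edges, which $G-A$ does not contain.

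The missing idea is a \emph{simultaneous} choice of one $B$-edge per closed cycle with $P_0\cup\{b_C\}$ acyclic; in the example, $b_C=y_2x_1$ works. A global argument is needed, because edges chosen in two different closed cycles can each be harmless alone yet close a cycle together through paths of $P_0$. The paper supplies this with a rank count in the graphic matroid. For any family $\mathcal J$ of closed cycles, $P_0\cup A_{\mathcal J}\subseteq P$ is a forest. In $P_0\cup A_{\mathcal J}\cup B_{\mathcal J}$, one $A$-edge per cycle can be dropped without changing connectivity. Hence $\rk(P_0\cup B_{\mathcal J})-\rk(P_0)\ge|\mathcal J|$, and Rado's theorem gives representatives that stay independent when added to $P_0$.

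The same count also supports an elementary exchange. Start from the forest $P_0\cup\bigcup_C A_C$. One closed cycle at a time, replace its $A$-edges by a single $B$-edge of that cycle that joins two components of what remains. Such an edge exists: otherwise the cycle's $A$-edges would close a cycle modulo those components, contradicting that the previous set was a forest. Once $H$ is acyclic it is a linear forest, parity never arises, and your weight bookkeeping completes the proof.
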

\begin{proof}
Put $P_0=P-A$. Partition $\mathcal C$ into $\mathcal C_0$, whose cycles have an $A$-edge outside $P$, and $\mathcal C_1$, whose $A$-edges all lie in $P$. A missing $A$-edge from each cycle of $\mathcal C_0$ gives distinct edges of weight at least $\lambda_C$. Nonnegativity therefore gives
\begin{equation}
 w(A\setminus P)\ge\sum_{C\in\mathcal C_0}\lambda_C.
 \label{eq:missing-cycles}
\end{equation}
For the other cycles, we will select one new $B$-edge per cycle whose addition to $P_0$ preserves a linear forest.

\emph{Simultaneous selection.} The graphic matroid of $G$ treats an edge set as independent exactly when it contains no cycle. Its rank $\rk(X)$ is the maximum size of an acyclic subset of $X$; equivalently, $\rk(X)=n-\operatorname{cc}(X)$, where $\operatorname{cc}(X)$ counts the components of $(V,X)$, including isolates. Contract each component of $P_0$ to one vertex and keep the edges of $E\setminus P_0$ as distinct copies, including loops and parallel edges. In the resulting graphic matroid, a set $X\subseteq E\setminus P_0$ has rank $\rk(P_0\cup X)-\rk(P_0)$ and is independent exactly when $P_0\cup X$ is acyclic.

For $\mathcal J\subseteq\mathcal C_1$, define
$A_{\mathcal J}=A\cap\bigcup_{C\in\mathcal J}E(C)$ and
$B_{\mathcal J}=B\cap\bigcup_{C\in\mathcal J}E(C)$.
Since $P_0\cup A_{\mathcal J}\subseteq P$ is a forest, its rank is $\rk(P_0)+|A_{\mathcal J}|$.
In $P_0\cup A_{\mathcal J}\cup B_{\mathcal J}$, delete one $A$-edge per cycle in $\mathcal J$. Each deleted edge has a bypass along its own cycle; the cycles are disjoint, so these deletions preserve connectivity. The remaining graph adds only $|A_{\mathcal J}|-|\mathcal J|$ edges to $P_0\cup B_{\mathcal J}$. Hence $\rk(P_0)+|A_{\mathcal J}| \le\rk(P_0\cup A_{\mathcal J}\cup B_{\mathcal J}) \le\rk(P_0\cup B_{\mathcal J})+|A_{\mathcal J}|-|\mathcal J|$,
and therefore
\begin{equation}
 \rk(P_0\cup B_{\mathcal J})-\rk(P_0)\ge|\mathcal J|.
 \label{eq:rado-condition}
\end{equation}
The disjoint candidate sets are $(B\cap E(C))\setminus P_0$, for $C\in\mathcal C_1$. Rado's criterion~\cite{Rado1942} gives one representative from each set, independent in the contracted matroid, whenever the union of any subfamily has rank at least the number of sets in that subfamily. Equation~\eqref{eq:rado-condition} verifies this condition; excluding edges already in $P_0$ does not change the rank. The selected edges $e_C$ therefore create no cycle when added to $P_0$.

\emph{Degrees and weight.} Each cycle in $\mathcal C_1$ has its $A$-edges in $P$, so deleting $A$ leaves its vertices with degree at most one. The representatives belong to the matching $B$, so their endpoints are distinct. Hence $Q=P_0\cup\{e_C:C\in\mathcal C_1\}$ is a path cover in $G-A$. Splitting $Q$ into two matchings and using~\eqref{eq:missing-cycles} gives $2\nu_w(G-A)\ge w(Q)\ge w(P-A)+\sum_{C\in\mathcal C_1}\lambda_C =w(P)-w(A)+w(A\setminus P)+\sum_{C\in\mathcal C_1}\lambda_C \ge w(P)-w(A)+\Lambda(A,B)$.
\end{proof}

\begin{theorem}
\label{thm:deletion}
Let $0<\alpha,\beta\le1$. Suppose $A$ is a matching in $G$ and $B$ a matching in $G-A$ with $w(A)\ge\alpha\nu_w(G)$ and $w(B)\ge\beta\nu_w(G-A)$. Then Algorithm~\ref{alg:deletion} returns a path cover $F$ satisfying $w(F)\ge\frac{2\alpha+\beta-\alpha\beta}{4-\beta}\rho_w(G)$.
Exact calls give $2/3$. With two error-$\delta$ calls, $0\le\delta<1$, $w(F)\ge\frac{(1-\delta)(2+\delta)}{3+\delta}\rho_w(G) \ge\frac23(1-\delta)\rho_w(G)$.
\end{theorem}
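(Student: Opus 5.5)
We combine three inequalities: the output identity $w(F)=w(A)+w(B)-\Lambda$, the cycle bound~\eqref{eq:cycle-loss} $w(B)\ge2\Lambda$, and the cycle-compensation Lemma~\ref{lem:compensation} applied with $P$ an optimum weighted path cover, so $w(P)=\rho_w(G)$. Write $\rho=\rho_w(G)$, $a=w(A)$, $b=w(B)$. The lemma and the guarantee on $B$ give
\[
 b\ge\beta\nu_w(G-A)\ge\tfrac{\beta}{2}\bigl(\rho-a+\Lambda\bigr).
\]
Since $\nu_w(G)\ge\rho/2$ (split $P$ into two matchings), we also have $a\ge\alpha\rho/2$.

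The key step is to eliminate $\Lambda$. We lower-bound $w(F)=a+b-\Lambda$ by a convex combination of the two available bounds on $b$. Take weights $\theta,1-\theta\in[0,1]$, and use $b\ge2\Lambda$, i.e.\ $-\Lambda\ge-b/2$, together with the compensation bound. Concretely, write $b=\theta b+(1-\theta)b$ and apply $b\ge\frac\beta2(\rho-a+\Lambda)$ to the second part, so that the $\Lambda$ coefficient becomes $(1-\theta)\beta/2-1$. The remaining negative $\Lambda$ is then absorbed through $\Lambda\le b/2$ applied to the first part. We choose $\theta$ so that the $\Lambda$ terms cancel, which gives $w(F)\ge c_1 a+c_2\rho$ with explicit $c_1,c_2$ depending on $\beta$.

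The coefficient $c_1$ of $a$ should be nonnegative; the formula suggests $c_1$ is proportional to $2-\beta\ge0$. In that case we substitute $a\ge\alpha\rho/2$. A quick sanity check with exact calls: from $b\ge\frac12(\rho-a+\Lambda)$ and $\Lambda\le b/2$ we get $b\ge\frac12(\rho-a)+\frac14 b$, hence $b\ge\frac23(\rho-a)$. Then
\[
 w(F)\ge a+b-\tfrac b2=a+\tfrac b2\ge a+\tfrac13(\rho-a)=\tfrac23a+\tfrac13\rho\ge\tfrac23\rho,
\]
using $a\ge\rho/2$, which matches the claimed $2/3$.

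For general $\beta$, we instead solve the linear system exactly. Minimizing $a+b-\Lambda$ subject to $\Lambda\le b/2$ and $b\ge\frac\beta2(\rho-a+\Lambda)$, the minimum is attained at $\Lambda=b/2$ and $b=\frac{2\beta}{4-\beta}(\rho-a)$. This gives
\[
 w(F)\ge a+\tfrac{\beta}{4-\beta}(\rho-a)=\tfrac{4-2\beta}{4-\beta}\,a+\tfrac{\beta}{4-\beta}\,\rho .
\]
Substituting $a\ge\alpha\rho/2$ yields $\frac{2\alpha-\alpha\beta+\beta}{4-\beta}\rho$, as claimed.

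One point needs care: the minimizer may not be at $\Lambda=b/2$. Since $w(F)$ decreases in $\Lambda$, while the constraint on $b$ loosens as $\Lambda$ grows, we must verify that the extreme point is correct. We will check this by writing the bound as the explicit positive combination
\[
 w(F)-\bigl[\cdots\bigr]=\lambda_1\bigl(b-2\Lambda\bigr)+\lambda_2\bigl(b-\tfrac\beta2(\rho-a+\Lambda)\bigr)
\]
with $\lambda_2=\frac{2}{4-\beta}$ and $\lambda_1=\frac{2-\beta}{4-\beta}$, and confirming the coefficients match. The error-$\delta$ case follows by setting $\alpha=\beta=1-\delta$:
\[
 \frac{2(1-\delta)+(1-\delta)-(1-\delta)^2}{3+\delta}=\frac{(1-\delta)(2+\delta)}{3+\delta}.
\]
The final inequality $\ge\frac23(1-\delta)$ holds because $3(2+\delta)\ge2(3+\delta)$. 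The main obstacle is the sign bookkeeping in this certificate; the rest is substitution.
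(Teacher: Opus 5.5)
Your final certificate, with $\lambda_1=\frac{2-\beta}{4-\beta}$ and $\lambda_2=\frac{2}{4-\beta}$, is exactly the paper's identity $(4-\beta)w(F)=(\beta w(A)+2w(B)-\beta\Lambda)+(4-2\beta)w(A)+(2-\beta)(w(B)-2\Lambda)$ divided by $4-\beta$. It uses the same three ingredients (compensation with an optimum $P$, $w(B)\ge2\Lambda$, and $w(A)\ge\alpha\rho/2$), so the proof is correct and matches the paper. The one flaw is in your exact-call sanity check: passing from $b\ge\frac12(\rho-a+\Lambda)$ to $b\ge\frac12(\rho-a)+\frac14b$ would require $\Lambda\ge b/2$, not $\Lambda\le b/2$, so drop that aside; the certificate with $\beta=1$ (where $\lambda_1=\frac13$, $\lambda_2=\frac23$) already gives the $2/3$ bound.
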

\begin{proof}
Splitting an optimum cover into two matchings gives $w(A)\ge\alpha \rho_w(G)/2$. By \Cref{lem:compensation} and the second-call guarantee, $\beta \rho_w(G)\le\beta w(A)+2w(B)-\beta\Lambda$. Use the identity $(4-\beta)w(F)=(\beta w(A)+2w(B)-\beta\Lambda)+(4-2\beta)w(A)+(2-\beta)(w(B)-2\Lambda)$.
Its terms are bounded below by $\beta \rho_w(G)$, $\alpha(2-\beta)\rho_w(G)$, and zero, respectively, using~\eqref{eq:cycle-loss}. This proves the ratio. Set $\alpha=\beta=1$ for the exact case, or $\alpha=\beta=1-\delta$ and use $(2+\delta)/(3+\delta)\ge2/3$ for the error bound.
\end{proof}

\noindent\textbf{Role of compensation.}
The term $\Lambda(A,B)$ in \Cref{lem:compensation} accounts for the weight later removed by cycle breaking. Omitting it gives only $w(A)\ge\rho_w(G)/2$ and $w(B)\ge(\rho_w(G)-w(A))/2$ for exact calls, hence $w(A)+w(B)\ge3\rho_w(G)/4$~\cite{FeigeOfekWieder2002}. Separately bounding the loss by one quarter of this total guarantees only $9\rho_w(G)/16$~\cite{BehnezhadEtAl2024}. Keeping the loss term in the residual-matching bound gives $2/3$ instead. This exact-call constant is tight (\Cref{prop:deletion-tight}); tightness of the general $(\alpha,\beta)$ bound is not established.

\section{Unweighted path cover by component matching}
\label{sec:optimized-joining}
For unweighted graphs, we choose connections between the components of $A\cup B$ before opening their cycles. A matching in a simple auxiliary graph implements this choice.

\subsection{Component graph and matching bounds}
Let $Q\subseteq E$ be a \emph{simple $2$-matching}: a spanning subgraph of maximum degree at most two, with cycles allowed. A vertex is \emph{available} if it lies on a cycle of $Q$ or has degree at most one in $Q$. An \emph{eligible connector} is an original edge joining available vertices in distinct components. The simple \emph{component graph} $K_Q$ has the components of $(V,Q)$ as vertices and an adjacency whenever an eligible connector exists. Let $k$ count the cycles of $Q$. Deleting one edge from each cycle leaves $f=|Q|-k$ edges in $n-f$ path components, including isolates. Let $h_*=\nu(K_Q)$ be the maximum number of disjoint component pairs that can be joined. A matching of $K_Q$ with $j$ edges gives a cover of size $f+j$: choose a witnessing connector for each matched pair, open each matched cycle at its attachment vertex by deleting an incident edge, and open other cycles arbitrarily. The connectors join disjoint pairs of paths, so no cycle or degree-three vertex is created.

For each component $C$, let $P_C$ be its available-vertex set and put $q_C=|P_C|\ge1$. Let $I$ contain the internal (degree-two) vertices of the path components of $Q$ and put $i=|I|$.

\begin{lemma}
\label{lem:component-deficiency}
There exists $S\subseteq V(K_Q)$ with $|S|\le h_*$ such that
\begin{equation}
 \rho(G)\le f+2h_*+i+\sum_{C\in S}(q_C-1).
 \label{eq:component-deficiency}
\end{equation}
\end{lemma}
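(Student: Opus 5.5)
The plan is to combine the Tutte--Berge formula on $K_Q$ with a segment count along an optimum path cover $P$. Apply Tutte--Berge to the simple graph $K_Q$, which has $N$ vertices, one for each component of $(V,Q)$. This gives a set $S\subseteq V(K_Q)$ with
\[
 h_*=\tfrac12\bigl(N+|S|-o\bigr),
\]
where $o$ is the number of odd components of $K_Q-S$. Since $h_*\ge|S|$ holds automatically (each vertex of $S$ can be matched into a distinct odd component), we get $|S|\le h_*$. Opening one edge per cycle of $Q$ keeps the number of components, so $n-f=N$, that is, $n=f+N$.

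A linear forest with $p$ paths on $n$ vertices has $n-p$ edges. Hence $\rho(G)=n-p$, where $p$ is the number of paths (including isolates) of an optimum cover $P$. Substituting $n=f+N$ and $N=2h_*-|S|+o$, inequality~\eqref{eq:component-deficiency} becomes equivalent to
\[
 p\ \ge\ o-i-\sum_{C\in S}q_C .
\]
This uses $\sum_{C\in S}(q_C-1)=\sum_{C\in S}q_C-|S|$. So the whole lemma reduces to this lower bound on the number of paths of $P$.

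To prove it, let $D_1,\dots,D_o$ be the vertex sets of the odd components of $K_Q-S$, each a union of components of $Q$. Call every vertex outside $\bigcup_t D_t$ a \emph{separator} vertex. Along each path of $P$, cut at separator vertices, and look at the maximal subpaths lying inside a single $D_t$; call these \emph{segments}.
\begin{itemize}
\item Every $D_t$ is nonempty and covered by $P$, so it contains at least one segment. Thus the number of segments is at least $o$.
\item A path of $P$ can pass directly from $D_t$ to $D_{t'}$ with $t\ne t'$ only through an edge $xy$ with $x\in D_t$, $y\in D_{t'}$. If both endpoints were available, $xy$ would be an eligible connector. That would make the two components adjacent in $K_Q-S$, which is impossible. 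So one endpoint lies in $I$, since a non-available vertex is exactly an internal vertex of a path component of $Q$.
\end{itemize}

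Now count the boundaries between consecutive segments. Each path contributes one more segment than it has such boundaries. Each boundary is witnessed by a distinct vertex, chosen as follows.
\begin{itemize}
\item If the boundary is a direct edge $xy$ between two different $D$'s, charge it to its endpoint in $I$.
\item Otherwise the boundary runs through separator vertices. Charge it to the first separator vertex after the segment. That vertex is adjacent in $P$ to a vertex of some $D_t$, so it lies in $I$, or in a component of $S$, or in an even component of $K_Q-S$.
\end{itemize}
In the last two cases the charged vertex must be in $I$ or be an available vertex of a component in $S$. An available vertex of an even component cannot be $P$-adjacent to an available vertex of $D_t$, because the components are nonadjacent in $K_Q-S$. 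An available vertex of an $S$-component contributes to $\sum_{C\in S}q_C$.

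The main obstacle is making this charging injective. The first separator vertex after a segment is distinct for distinct boundaries along the same path, and paths are vertex-disjoint. The delicate case is a direct $D$-to-$D$ edge whose $I$-endpoint might also serve as the first separator vertex of another boundary. I would fix this by orienting each path and always charging the first vertex after the segment that is either in $I$ or a separator vertex. Every vertex is the "first after" at most one segment, so the charging is injective.

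The number of charged vertices is therefore at most $i+\sum_{C\in S}q_C$. Hence
\[
 o\le\#\text{segments}\le p+i+\sum_{C\in S}q_C,
\]
which is the required bound on $p$ and proves~\eqref{eq:component-deficiency}.
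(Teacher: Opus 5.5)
\textbf{There is a genuine gap in the counting step.} Your reduction of \eqref{eq:component-deficiency} to $p\ge o-i-\sum_{C\in S}q_C$ is correct, and it is the same bookkeeping the paper uses. The gap is the intermediate claim that the number of segments is at most $p+i+\sum_{C\in S}q_C$. This claim is false in general, so no refinement of the charging rule can make it injective.

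The problem is that a segment may consist entirely of vertices of $I$. Such a segment can be entered and left by direct edges, which produces two boundaries but only one vertex of $I\cup\bigcup_{C\in S}P_C$. Here is a concrete instance.
\begin{itemize}
\item Let $G$ be the star with centre $v$ and leaves $x,u,w,y$, and let $Q=\{uv,vw\}$.
\item Then $I=\{v\}$. Every edge meets $v$, so there is no eligible connector.
\item Hence $K_Q$ is three isolated vertices, $S=\varnothing$, $o=3$, and the sets $D_t$ are $\{u,v,w\}$, $\{x\}$, $\{y\}$.
\item The optimum cover consisting of the path $x-v-y$ and the isolates $u,w$ has $p=3$.
\item It has five segments $\{x\},\{v\},\{y\},\{u\},\{w\}$, while $p+i=4$.
\end{itemize}
Your first rule charges both boundaries on $x-v-y$ to $v$. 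Your modified rule finds no admissible vertex for the boundary after $\{v\}$, because $y$ is neither in $I$ nor a separator vertex. The same defect appears in your even-component case. Nonadjacency in $K_Q-S$ only shows that either the separator vertex or the last vertex of the segment lies in $I$. It does not show that the separator vertex you charge is itself in $I\cup\bigcup_{C\in S}P_C$.

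\textbf{The repair is to discard the bad vertices before counting, as the paper does.} Delete $X=I\cup\bigcup_{C\in S}P_C$ from $P$.
\begin{itemize}
\item Since $P$ is a linear forest, each single-vertex deletion increases the number of components by at most one. So $P-X$ has at most $p+|X|$ components.
\item Every vertex of a component of $Q$ is either available or in $I$. So the survivors are exactly the available vertices of components outside $S$.
\item A surviving $P$-edge joining two different $Q$-components is therefore an eligible connector. Hence each component of $P-X$ lies inside a single component of $K_Q-S$.
\item Each $D_t$ retains a survivor because $q_C\ge1$.
\end{itemize}
This gives $o\le p+|X|$ directly, with no charging. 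In your language, replace segments by the components of $P-X$. These are the objects that are both numerous enough, at least $o$, and controlled by $p+|X|$.
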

\begin{proof}
Let $o(H)$ count the odd-order connected components of a graph $H$. Since $K_Q$ has $n-f$ vertices, the Tutte--Berge formula~\cite{QuWest2023} gives $S$ with $o(K_Q-S)-|S|=n-f-2h_*$. The bound $o(K_Q-S)\le n-f-|S|$ then gives $|S|\le h_*$.

An optimum path cover $P$ has $n-\rho(G)$ components. Delete the vertices in $I$ and all sets $P_C$ with $C\in S$. Each deleted vertex increases the number of components by at most one, so at most $n-\rho(G)+i+\sum_{C\in S}q_C$ remain.

Every component of $K_Q-S$ contains a nonempty surviving set $P_C$. No surviving input edge joins sets in different components of $K_Q-S$, because such an edge would be eligible. Thus at least $o(K_Q-S)$ forest components remain, and $n-f-2h_*+|S|\le n-\rho(G)+i+\sum_{C\in S}q_C$.
Rearranging proves the bound. The algorithm need not compute $S$.
\end{proof}

\begin{lemma}
\label{lem:optimized-blocking}
If $Q=A\cup B$ for edge-disjoint matchings, set $s=|B|-2k\ge0$. Then
\begin{equation}
 \rho(G)\le f+2s+5h_*.
 \label{eq:optimized-blocking}
\end{equation}
\end{lemma}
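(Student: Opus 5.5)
The plan is to start from \Cref{lem:component-deficiency} and bound its excess term using the alternating structure of $Q=A\cup B$. \Cref{lem:component-deficiency} gives a set $S$ with $|S|\le h_*$ and $\rho(G)\le f+2h_*+i+\sum_{C\in S}(q_C-1)$. It therefore suffices to prove
\[
 i+\sum_{C\in S}(q_C-1)\le 2s+3|S|,
\]
since $|S|\le h_*$ then gives~\eqref{eq:optimized-blocking}. To verify this, I will distribute $s$ over the components of $Q$ and compare component by component.

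\emph{Decomposing $s$.} Each component of $Q$ is either a path or a cycle whose edges alternate between $A$ and $B$; each cycle has even length $2c_C\ge4$ and exactly $c_C$ edges of $B$. Define $s_C=c_C-2\ge0$ for a cycle and $s_C=|B\cap E(C)|$ for a path component. Then $s=|B|-2k=\sum_C s_C$, and each $s_C\ge0$, which also yields $s\ge0$.

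\emph{Path components.} Let a path component have $t\ge0$ edges. It has $\max\{t-1,0\}$ internal vertices, and $q_C\le2$. Because its edges alternate, it contains at least $\lfloor t/2\rfloor\ge(t-1)/2$ edges of $B$. Hence its internal vertices number at most $2s_C$. This holds trivially for isolates and single edges, where there are no internal vertices. If the path lies in $S$, it contributes an extra $q_C-1\le1\le3$.

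\emph{Cycle components.} A cycle has no vertices in $I$. If it is not in $S$, it contributes nothing on the left-hand side, while $s_C\ge0$ on the right. If it lies in $S$, every vertex is available, so $q_C=2c_C$ and
\[
 q_C-1=2c_C-1=2(c_C-2)+3=2s_C+3.
\]

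Summing these per-component bounds gives $i+\sum_{C\in S}(q_C-1)\le2\sum_Cs_C+3|S|=2s+3|S|$, as required. The only delicate step is the cycle count: a cycle in $S$ has many available vertices, and they must be charged to its excess $B$-edges beyond two, plus a constant $3$. This constant is exactly what turns $2h_*$ into $5h_*$. Everything else is direct bookkeeping.
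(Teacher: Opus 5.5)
Your proof is correct and follows essentially the paper's own argument. You apply \Cref{lem:component-deficiency}, charge the internal vertices of each path to $2b_C$ and each cycle in $S$ to $2(b_C-2)+3$, then sum, using nonnegativity of the cycle excess for cycles outside $S$ and $|S|\le h_*$.
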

\begin{proof}
Use $S$ from \Cref{lem:component-deficiency} and put $b_C=|B\cap E(C)|$. Every cycle has $2b_C\ge4$ edges. A cycle in $S$ contributes
$q_C-1=2(b_C-2)+3$ to the sum in~\eqref{eq:component-deficiency}.
By alternation, a nontrivial $\ell$-edge path has $\ell-1\le2b_C$ internal vertices; its available-vertex set contributes one more if $C\in S$. Isolates contribute zero. Therefore $i+\sum_{C\in S}(q_C-1) \le 2\sum_{C\text{ a path}}b_C +2\sum_{C\text{ a cycle}}(b_C-2)+3|S| =2s+3|S|\le2s+3h_*$.
Here the cycle terms added for $C\notin S$ are nonnegative. Apply~\eqref{eq:component-deficiency}.
\end{proof}

\subsection{An auxiliary graph for component matching}
To avoid storing all component adjacencies, keep the available vertices separate. For arbitrary $Q$, order them as $p_{C,1},\ldots,p_{C,q_C}$, introduce $q_C-1$ private vertices outside $V$, and add the internal path
\begin{equation}
 p_{C,1}-z_{C,1}-p_{C,2}-z_{C,2}-\cdots-z_{C,q_C-1}-p_{C,q_C}.
 \label{eq:component-path}
\end{equation}
For $q_C=1$, this is a single vertex. The auxiliary graph $L_Q$ contains these paths and every eligible original connector; unavailable original vertices remain as isolates. Define $b_0=\sum_C(q_C-1)$. Then $L_Q$ is simple, with $n+b_0\le2n$ vertices and at most $m+2b_0$ edges. It need not be bipartite. Also $b_0\le f$: a cycle contributes its length minus one to both quantities, a nontrivial path contributes one to $b_0$ and its length to $f$, and an isolate contributes zero.

\begin{lemma}
\label{lem:auxiliary-matching}
For every simple $2$-matching $Q$,
\begin{equation}
 \nu(L_Q)=b_0+h_*.
 \label{eq:auxiliary-value}
\end{equation}
Every matching $M$ in $L_Q$ yields original connectors representing a matching $J$ of $K_Q$, with $|J|\ge|M|-b_0$.
\end{lemma}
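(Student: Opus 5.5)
We prove the two directions of \eqref{eq:auxiliary-value} separately and obtain the projection claim as a by-product of the upper bound.

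\emph{Lower bound.} Fix a maximum matching of $K_Q$ with $h_*$ edges, and choose a witnessing eligible connector $x_Cy_D$ for each matched pair $\{C,D\}$, with $x_C\in P_C$ and $y_D\in P_D$. Each component meets at most one chosen connector, so each $P_C$ contains at most one designated vertex. The internal path~\eqref{eq:component-path} has $2q_C-1$ vertices and alternates between the $p_{C,t}$ and the private vertices. Deleting any single $p_{C,t}$ leaves two even-length pieces (one may be empty), which together have a perfect matching of $q_C-1$ edges. If no vertex is designated, the path itself still has a matching of $q_C-1$ edges. The union of these matchings with the $h_*$ connectors is a matching of size $b_0+h_*$.

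\emph{Upper bound and projection.} Let $M$ be any matching of $L_Q$, and let $J_0$ be its set of connector edges. Each private vertex $z_{C,t}$ has neighbours only in $P_C$. Suppose $M$ has $a_C$ connector endpoints in $P_C$. Then those $a_C$ vertices of $P_C$ are unavailable to the internal path, so at most $\min(q_C-1,\,q_C-a_C)$ internal edges of that component lie in $M$. This gives
\[
 |M|\le |J_0|+\sum_C\min(q_C-1,\,q_C-a_C).
\]
Now build the component matching. For each component $C$ with $a_C\ge2$, keep one connector at $C$ and discard the other $a_C-1$; each discarded connector is compensated by the loss of $a_C-1$ in the internal term. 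The connectors of $J_0$ also determine pairs of components: each connector joins distinct components by eligibility, and two connectors between the same pair collapse to one edge of the simple graph $K_Q$. Such a duplicate makes $a\ge2$ at both of its ends, so it is already paid for.

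I would formalize this by greedily processing the components. Define $J$ by keeping at most one connector per component, chosen so that the kept connectors are vertex-disjoint in $K_Q$. Then show $|J|\ge |J_0|-\sum_C\max(0,a_C-1)$, which gives $|M|\le b_0+|J|$, since $\min(q_C-1,q_C-a_C)=q_C-1-\max(0,a_C-1)$. Taking $M$ maximum then yields $\nu(L_Q)\le b_0+h_*$ because $|J|\le h_*$.

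\emph{Main obstacle.} The main obstacle is the charging step. Discarding edges of $J_0$ to make it a matching in $K_Q$ must remove at most $\sum_C(a_C-1)^+$ edges, and no two kept connectors may share a component. View $J_0$ as a multigraph on the components in which vertex $C$ has degree $a_C$. We need a matching of size at least $|J_0|-\sum_C(\deg C-1)^+$. I would argue component by component of this multigraph. A connected multigraph with $e$ edges and $v$ vertices, all of positive degree, has $\sum(\deg-1)=2e-v$. So it suffices that its matching number is at least $e-(2e-v)=v-e$, and this holds trivially: it is at most $1$ when $e\ge v-1$ and $v\ge2$, and any single edge is a matching. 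Summing over the components of the multigraph completes the bound.
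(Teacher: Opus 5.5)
Your proposal is correct and follows the paper's proof: the same lifting of a maximum matching of $K_Q$ through the component paths, and the same count $|M|\le b_0+|J_0|-\sum_C(a_C-1)_+$, because each internal edge uses one private vertex and one available vertex free of connectors. The only difference is the extraction step: the paper deletes connectors one at a time while some component meets two of them, and each deletion lowers the excess by at least one. You instead argue statically on each connected component of the connector multigraph via $\sum(\deg-1)=2e-v$ and $e\ge v-1$, so a single edge per component suffices; both routes give $|J|\ge|J_0|-\sum_C(a_C-1)_+$.
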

\begin{proof}
\emph{Lifting a component matching.} Choose original witnesses for a maximum matching of $K_Q$. In each component at most one available vertex, say $p_{C,j}$, is used externally. Match $z_{C,i}$ to $p_{C,i}$ for $i<j$, and to $p_{C,i+1}$ for $i\ge j$. This matches all private vertices while leaving $p_{C,j}$ free. Choose any $j$ in an unused component. The resulting matching has $b_0+h_*$ edges.

\emph{Extracting connectors.} For an arbitrary auxiliary matching $M$, let $X$ be its external edges and $t_C$ their number of endpoints in component $C$. Each internal matched edge uses a private vertex and an available vertex not occupied by $X$. Its component therefore contributes at most $\min\{q_C-1,q_C-t_C\}=(q_C-1)-(t_C-1)_+$
internal edges, where $x_+=\max\{x,0\}$. With $\Phi=\sum_C(t_C-1)_+$, we obtain $|M|\le b_0+|X|-\Phi$.
While a component meets at least two edges of $X$, delete one of them. Each deletion reduces the current excess by at least one, so at most the initial $\Phi$ edges are deleted. The remaining connectors form $J$ with $|J|\ge|X|-\Phi\ge|M|-b_0$. Since $|J|\le h_*$, this also proves the upper bound in~\eqref{eq:auxiliary-value}.
\end{proof}

If $|M|\ge\gamma\nu(L_Q)$ for $0<\gamma\le1$, extraction and cycle opening give
\begin{equation}
 |F|=f+|J|\ge\gamma(f+h_*)+(1-\gamma)(f-b_0)\ge\gamma(f+h_*).
 \label{eq:whole-forest-transfer}
\end{equation}
Since $f\ge b_0$, the extra term is nonnegative. This proves the approximation for the whole forest, even though the extracted connectors need not approximate $h_*$ multiplicatively.

\subsection{Three-call algorithm and approximation guarantee}
\begin{algorithm}[H]
\caption{Unweighted path cover by component matching}
\label{alg:optimized-joining}
\begin{algorithmic}[1]
\Require Unweighted edge stream of $G=(V,E)$; matching ratios $0<\alpha,\beta,\gamma\le1$.
\Ensure A path cover $(V,F)$.
\State Compute a matching $A$ in $G$ with $|A|\ge\alpha\nu(G)$.
\State Compute a matching $B$ in $G-A$ with $|B|\ge\beta\nu(G-A)$.
\State $Q\gets A\cup B$; fix its component labels and available-vertex sets.
\State Prepare the auxiliary-path records defining the virtual stream of $L_Q$.
\State Compute a matching $M$ in the virtual stream of $L_Q$ with $|M|\ge\gamma\nu(L_Q)$.
\State $J\gets\{uv\in M:u,v\in V\}$, the original connector edges of $M$.
\While{some fixed component $C$ of $(V,Q)$ meets at least two edges of $J$}
  \State Delete any one of those edges from $J$.
\EndWhile
\State $F\gets Q$.
\For{each cycle component $C$ of $(V,Q)$}
  \If{$C$ meets its unique connector of $J$ at vertex $v$}
    \State Delete from $F$ either edge of $C$ incident with $v$.
  \Else
    \State Delete from $F$ any edge of $C$.
  \EndIf
\EndFor
\State $F\gets F\cup J$; \Return $(V,F)$.
\end{algorithmic}
\end{algorithm}

\begin{theorem}
\label{thm:five-sevenths}
For $0<\alpha,\beta,\gamma\le1$, Algorithm~\ref{alg:optimized-joining} returns a path cover $F$ satisfying
\[
 |F|\ge\gamma\frac{1+2\alpha+2\beta}{7}\rho(G).
\]
When all three calls have error $0\le\delta<1$, the ratio is at least $\frac{(1-\delta)(5-4\delta)}7=\frac57-\frac97\delta+\frac47\delta^2$.
\end{theorem}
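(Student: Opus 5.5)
The plan is to combine the whole-forest transfer~\eqref{eq:whole-forest-transfer} with the deficiency bound of \Cref{lem:optimized-blocking}, and then use the two matching guarantees to bound the resulting linear expression in $|A|,|B|,k$. By~\eqref{eq:whole-forest-transfer}, the output satisfies $|F|\ge\gamma(f+h_*)$. It therefore suffices to show
\[
 7(f+h_*)\ge(1+2\alpha+2\beta)\rho(G).
\]
Throughout, write $a=|A|$, $b=|B|$ and $\rho=\rho(G)$. Then $f=a+b-k$ and $s=b-2k$.

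\emph{Step 1: eliminate $h_*$.} \Cref{lem:optimized-blocking} gives $5h_*\ge\rho-f-2s$. Adding $5f+2f$ to both sides,
\[
 7(f+h_*)\ge\rho+6f-2s=\rho+6a+4b-2k,
\]
using $f=a+b-k$ and $s=b-2k$. This is the only place the Tutte--Berge structure enters; afterwards everything is linear bookkeeping.

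\emph{Step 2: lower-bound $a$ and $b$.} Splitting an optimum cover into two matchings gives $a\ge\alpha\nu(G)\ge\alpha\rho/2$. For $b$, I would specialize \Cref{lem:compensation} to unit weights with $P$ an optimum cover. Every cycle of $A\cup B$ has all weights equal to one, so $\Lambda(A,B)=k$, and the lemma gives $2\nu(G-A)\ge\rho-a+k$. Hence $4b\ge2\beta(\rho-a+k)$. Substituting,
\[
 \rho+6a+4b-2k\ge(1+2\beta)\rho+(6-2\beta)a-(2-2\beta)k.
\]

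\emph{Step 3: absorb the negative $k$ term.} The remaining difficulty is the term $-(2-2\beta)k$, which vanishes only for exact second calls. I would control it with $k\le a/2$. This holds because each cycle alternates between $A$ and $B$ and so contains at least two $A$-edges, and distinct cycles are disjoint. Then $(2-2\beta)k\le(1-\beta)a$, and the coefficient of $a$ becomes $5-\beta\ge4$. Using $a\ge\alpha\rho/2$ then yields $(1+2\beta)\rho+2\alpha\rho$, which is exactly the claimed ratio.

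For the error-$\delta$ corollary, I would set $\alpha=\beta=\gamma=1-\delta$. This gives $(1-\delta)(1+4(1-\delta))/7=(1-\delta)(5-4\delta)/7$, which expands to the stated polynomial.

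I expect the main obstacle to be this handling of $k$: the right multipliers for combining the bounds. If the coefficient bookkeeping in Steps 2--3 failed, I would fall back on a convex combination of the two lower bounds $f+h_*\ge(\rho+4a+2b)/5$ (from Step 1 via $f-2s$) and $f+h_*\ge f$. I would choose the weight $5/7$ on the first, which is tight in the exact case.
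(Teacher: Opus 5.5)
Your proposal is correct and is essentially the paper's argument. Both combine the whole-forest transfer, the bound $\rho(G)\le f+2s+5h_*$ from \Cref{lem:optimized-blocking}, the unit-weight compensation inequality $2|B|\ge\beta(\rho(G)-|A|+k)$, and the facts $|A|\ge2k$ and $2|A|\ge\alpha\rho(G)$ to obtain $6|A|+4|B|-2k\ge2(\alpha+\beta)\rho(G)$; the paper packages this as $3f\ge(\alpha+\beta)\rho(G)+s$, and your passage from $7f+5h_*$ to $7(f+h_*)$ uses $h_*\ge0$ just as the paper does.
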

\begin{proof}
Put $Q=A\cup B$ and $s=|B|-2k$. Splitting an optimum path cover gives $2|A|\ge\alpha\rho(G)$. Applying \Cref{lem:compensation} with unit weights gives $2|B|\ge\beta(\rho(G)-|A|+k)$, and hence
\[
 |A|+2|B|-k\ge\beta\rho(G)+(1-\beta)(|A|-k)\ge\beta\rho(G),
\]
because $|A|\ge2k$. Thus
\begin{equation}
 3f=2|A|+(|A|+2|B|-k)+(|B|-2k)\ge(\alpha+\beta)\rho(G)+s.
 \label{eq:optimized-compensation}
\end{equation}
Substitute $s\le3f-(\alpha+\beta)\rho(G)$ into~\eqref{eq:optimized-blocking}: $(1+2\alpha+2\beta)\rho(G)\le7f+5h_*\le7(f+h_*)$.
Equation~\eqref{eq:whole-forest-transfer} now proves the first claim. Substitute $\alpha=\beta=\gamma=1-\delta$ for the second.
\end{proof}
Section~\ref{sec:matching-limits} contrasts this auxiliary call with a third matching in the residual graph.

\section{Semi-streaming implementations and guarantees}
\label{sec:streaming}
For weighted algorithms, assume rational weights with at most $L_{\rm in}$ bits per numerator and denominator. A word has $O(L_{\rm in}+\log(n/\eps))$ bits. At fixed accuracy and $L_{\rm in}=O(\polylog n)$, the bounds below give semi-streaming space.

Let $T_u(N,\delta),S_u(N,\delta)$ be monotone bounds on passes and words for a deterministic error-$\delta$ cardinality-matching routine on at most $N$ vertices. Let $T_w(\delta),S_w(n,\delta,W)$ denote the corresponding bounds for a deterministic routine on simple graphs with integer weights in $[1,W]$.

\begin{corollary}
\label{cor:joining-streaming}
For $0<\eps<1/3$, set $\delta=7\eps/9$. Algorithm~\ref{alg:optimized-joining}, with $\alpha=\beta=\gamma=1-\delta$, gives a $(5/7-\eps)$-approximation in $2T_u(n,\delta)+T_u(2n,\delta)$ passes and $S_u(2n,\delta)+O(n)$ words.
An optional cleanup scan makes the forest maximal without weakening the bound.
\end{corollary}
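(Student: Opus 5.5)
The plan has two parts: the approximation ratio follows from \Cref{thm:five-sevenths}, and the resource bounds come from implementing the three matching calls as streaming passes.

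\emph{Approximation.} With $\alpha=\beta=\gamma=1-\delta$, \Cref{thm:five-sevenths} gives ratio at least $\frac57-\frac97\delta+\frac47\delta^2\ge\frac57-\frac97\delta$. Taking $\delta=7\eps/9$ makes this exactly $\frac57-\eps$. The hypothesis $\eps<1/3$ gives $\delta<7/27<1$, so each call is a legitimate error-$\delta$ call. For the cleanup scan, we make one more pass and greedily add any edge whose endpoints both have degree at most one and lie in distinct path components of the current forest. This requires maintaining component labels, for example with union--find in $O(n)$ words. Such additions only enlarge $F$ and keep it a linear forest, so the bound is preserved.

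\emph{Passes and space.} The first call runs the deterministic routine on $G$ in $T_u(n,\delta)$ passes. The second call runs it on $G-A$, filtering the stream by discarding edges of the stored matching $A$, in $T_u(n,\delta)$ passes with $O(n)$ extra words for $A$. After these passes we store $Q=A\cup B$ ($O(n)$ words) and compute offline the component labels, which vertices are available (on a cycle or of degree at most one), the orderings $p_{C,1},\ldots,p_{C,q_C}$, and the private vertices $z_{C,i}$. All of this takes $O(n)$ words, since $b_0\le f<n$.

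The main step is simulating the stream of $L_Q$, which has at most $2n$ vertices. In each pass over the real stream, we first emit the at most $2b_0\le2n$ internal path edges from memory. We then forward each original edge $uv$ only if $u,v$ are both available and lie in distinct components, a check done in $O(1)$ using the stored labels. Every pass of the routine on $L_Q$ therefore costs exactly one real pass, giving $T_u(2n,\delta)$ passes and $S_u(2n,\delta)$ words by monotonicity. The subtle point is that the routine must accept an arbitrary edge order, so prepending the internal edges is legitimate. The routine sees a simple graph because connectors join distinct components and the private paths are new edges.

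The remaining operations are done offline in $O(n)$ words: extract $J$, prune it to at most one connector per component, open cycles at attachment vertices, and output $F$. The total is $2T_u(n,\delta)+T_u(2n,\delta)$ passes, plus the optional cleanup pass, and $S_u(2n,\delta)+O(n)$ words, using $S_u(n,\cdot)\le S_u(2n,\cdot)$.
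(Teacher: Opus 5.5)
Your plan follows the paper's proof step for step. The ratio comes from \Cref{thm:five-sevenths} with $\delta=7\eps/9$, and the second call is filtered by the stored $A$. Each virtual pass over $L_Q$ emits the $2b_0$ internal path edges from memory and then forwards the eligible original connectors, so it costs one real scan. Everything else is done offline in $O(n)$ words, and your pass and space accounting is correct.

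The one claim you leave unproved is that a \emph{single} cleanup scan makes the forest maximal. You show only that the greedy additions keep a linear forest and preserve the bound. The missing observation is monotonicity: during the scan, degrees only increase and components only merge. So an edge that is infeasible when it is read (an endpoint of degree two, or both endpoints in one component) can never become feasible later. Hence after one pass no edge of $G$ can be added.
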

\begin{proof}
Store $A$ as a matched-partner array to filter the second call. Since $Q=A\cup B$ has at most $n$ edges, its components, available vertices, and auxiliary-path records fit in $O(n)$ words.

A virtual scan of $L_Q$ first generates its $2b_0$ internal edges, then scans $G$ for eligible connectors. This costs one original scan and $O(n)$ extra work. Connectors keep their original endpoint pairs, so no component-pair deduplication is needed. Component computation, extraction, cycle opening, and reporting the forest require no further scan.

Substituting $\delta=7\eps/9$ in \Cref{thm:five-sevenths} gives ratio at least $5/7-\eps+28\eps^2/81$. For cleanup, add an edge when both endpoint degrees are at most one and their current components differ. Degrees and connectivity only increase, so a rejected edge never becomes feasible. One scan, using degrees and union--find in $O(n)$ words, therefore makes the forest maximal.
\end{proof}
The deterministic matching algorithm in~\cite{MitrovicEtAl2025} improves~\cite{FischerMitrovicUitto2022} and gives an $O(\eps^{-6})$-pass implementation using $O(n\eps^{-6})$ words. For its $(1+\kappa)$-approximation convention, choose $\kappa=\delta/(1-\delta)$ so that $1/(1+\kappa)=1-\delta$. The preprint~\cite{IzumiKitamuraYamaguchi2025} gives $O(\eps^{-4})$ passes and $O(n\log n)$ bits. 

\begin{corollary}
\label{cor:tsp12}
For an $n$-vertex $(1,2)$-TSP instance with $n\ge3$ and $0<\eps<1/3$, the algorithm returns a tour of cost at most $(9/7+\eps+5/(7n))\OPT_{12}$, with the pass and space bounds of \Cref{cor:joining-streaming}.
\end{corollary}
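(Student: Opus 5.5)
Run the algorithm of \Cref{cor:joining-streaming} on the cost-one graph $G_1$, which can be streamed by filtering the cost-one edges, and complete the returned path cover $F$ into a tour. Then apply \Cref{lem:tsp-conversion}.

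The cover $F$ satisfies $s=|F|\le\rho(G_1)$, since it is a path cover of $G_1$. By \Cref{thm:five-sevenths} with $\delta=7\eps/9$, it also satisfies $s\ge(5/7-\eps)\rho(G_1)$. Take $\gamma=5/7-\eps$ and $\eta=0$ in \eqref{eq:common-transfer}. This gives a tour of cost at most
\[
 (2-\gamma)\OPT_{12}+\gamma=(9/7+\eps)\OPT_{12}+5/7-\eps .
\]
Using $\OPT_{12}\ge n$ from \eqref{eq:tsp-conversion}, the additive term satisfies $5/7-\eps\le 5/7\le (5/(7n))\OPT_{12}$. Hence the cost is at most $(9/7+\eps+5/(7n))\OPT_{12}$.

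The tour is obtained as in the last sentence of \Cref{lem:tsp-conversion}: join the $n-s$ paths of $F$ in any cyclic order. The joining edges have cost at most two, and the edges of $F$ have cost one. For $n\ge3$ the result is a genuine Hamiltonian cycle. This needs no further pass beyond what is required to know the cost-one edges already stored, because the joining edges need not be read from the stream; only the output forest and an arbitrary ordering of its paths are used.

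The pass and space bounds are those of \Cref{cor:joining-streaming}, since filtering the stream to $G_1$ costs nothing extra. The only mild subtlety is the range $0<\eps<1/3$. It guarantees $\gamma\in[0,1]$, which \Cref{lem:tsp-conversion} requires, and $\delta<1$. No real obstacle is expected; the main care is tracking the additive constant $\gamma$ into the $5/(7n)$ term.
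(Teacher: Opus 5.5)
Your proposal is correct and follows the paper's proof. You run the algorithm on $G_1$, apply \Cref{lem:tsp-conversion} with $\gamma=5/7-\eps$ and $\eta=0$, absorb the additive $\gamma\le5/7$ via $\OPT_{12}\ge n$, and note that completing the stored paths needs no new scan.
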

\begin{proof}
Apply the algorithm to $G_1$. Its forest has at least $\lambda\rho(G_1)$ edges for $\lambda=5/7-\eps$. By \Cref{lem:tsp-conversion}, completion costs at most $(2-\lambda)\OPT_{12}+\lambda$. Use $\OPT_{12}\ge n$ and $\lambda\le5/7$. Completing the stored paths needs no new scan; each connector costs at most two.
\end{proof}

\begin{theorem}
\label{thm:streaming}
For $0<\eps<1/3$, weighted path cover and Max-TSP admit deterministic $(2/3-\eps)$-approximations in $\poly(1/\eps)$ passes independent of $n$ and $O\bigl(n\log(n/\eps)\poly(1/\eps)\bigr)$ words.
For Max-TSP with $n<4/\eps$, the algorithm stores the rounded graph and solves a degree-two relaxation offline; otherwise it uses two matching calls.
\end{theorem}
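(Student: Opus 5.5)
Our plan is to reduce the theorem to \Cref{thm:deletion} with two error-$\delta$ calls, where $\delta$ is a constant multiple of $\eps$. The guarantee $\frac23(1-\delta)\rho_w(G)$ then gives $2/3-\eps$ once we account for weight rounding.

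\emph{Step 1: normalize weights.} Use one scan to find $w_{\max}$ (the maximum edge weight). Discard edges of weight below $\eps w_{\max}/n^2$; their total loss over any forest is at most $\eps w_{\max}/n\le(\eps/n)\rho_w(G)$. Round the remaining weights down to powers of $(1+\eps)$, or to integer multiples of $\eps w_{\max}/n^2$. This gives integer weights in $[1,W]$ with $W=\poly(n/\eps)$, and multiplicatively loses at most a $(1-O(\eps))$ factor. Each word then needs $O(L_{\rm in}+\log(n/\eps))$ bits.

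\emph{Step 2: run Algorithm~\ref{alg:deletion}.} Invoke a deterministic weighted error-$\delta$ semi-streaming matching routine twice; for example, the routine of~\cite{MitrovicEtAl2025} gives $\poly(1/\delta)$ passes and $S_w$ words. The second call runs on $G-A$, filtered on the fly with a partner array. The cycle opening is offline in $O(n)$ words because $|A\cup B|\le n$. By \Cref{thm:deletion}, the rounded weight of the output is at least $\frac23(1-\delta)$ of the rounded optimum. Combined with the rounding loss, we choose $\delta=\eps/2$ and obtain $\frac23-\eps$ for weighted MPC.

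\emph{Step 3: Max-TSP.} For $n\ge4/\eps$, we complete the weighted cover arbitrarily into a tour. Nonnegativity means completion never decreases the weight. We need $\rho_w(G)\ge\OPT_{\mathrm{MaxTSP}}(1-1/n)$: deleting the minimum-weight edge of an optimal tour leaves a Hamiltonian path of weight at least $(1-1/n)\OPT_{\mathrm{MaxTSP}}$. The factor $(1-1/n)\ge1-\eps/4$ is absorbed by adjusting $\delta$, say to $\eps/4$. For $n<4/\eps$, the graph has $O(1/\eps^2)$ edges and we store it entirely. A maximum-weight degree-two subgraph (the $2$-matching relaxation) has weight at least $\OPT_{\mathrm{MaxTSP}}$. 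Dropping the lightest edge of each cycle keeps at least $2/3$ of it, since each cycle has length at least $3$. Completing the result gives a tour of weight at least $\frac23\OPT_{\mathrm{MaxTSP}}$, computed offline.

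The main obstacle is the bookkeeping of the rounding errors. The error from discarding light edges is additive relative to $w_{\max}$, and we must convert it to a multiplicative error via $\rho_w(G)\ge w_{\max}$. We also need the rounded and true optima to be related in both directions so that the ratio composes. The remaining steps are direct parameter choices; the pass and space bounds follow from the matching routine's bounds with $W=\poly(n/\eps)$.
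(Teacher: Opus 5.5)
Your plan follows the paper's proof step for step. You take one pass for $w_{\max}$ and floor the weights to a grid proportional to $w_{\max}$; your width is $\eps w_{\max}/n^2$ against the paper's $\theta w_{\max}/n$, and both give $\log W=O(\log(n/\eps))$. You then run Algorithm~\ref{alg:deletion} with $\alpha=\beta=1-\delta$ and apply \Cref{thm:deletion}. For $n\ge4/\eps$ you use $\rho_w(G)\ge(1-1/n)\OPT_{\mathrm{MaxTSP}}$, and for $n<4/\eps$ you solve the simple $2$-matching relaxation offline and break cycles. Your constants close; for example, $\frac23(1-\eps/2)(1-\eps/3)\ge\frac23-\eps$ for path cover. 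Only your grid option yields integer weights in $[1,W]$; rounding to powers of $1+\eps$ would need a further rescaling.

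The one step that fails as written is the matching primitive. \cite{MitrovicEtAl2025} is a cardinality-matching algorithm: the paper uses it only for the unweighted bounds $T_u,S_u$ after \Cref{cor:joining-streaming}. It does not return a $(1-\delta)$-approximate maximum-weight matching, so it cannot supply the weighted hypotheses of \Cref{thm:deletion}. You need a deterministic weighted routine whose pass count is independent of the weight range. Your $W$ is $\poly(n/\eps)$, so any routine or weighted-to-unweighted reduction that pays for $\log W$ in passes would make the pass count grow with $n$, contradicting the statement. The paper uses \cite{HuangSu2023}. Its streaming implementation of the reduction in \cite{GuptaPeng2013} gives $\poly(1/\delta)$ passes and $O(n\log(2W)\poly(1/\delta))$ words, so the $\log(n/\eps)$ factor lands in memory exactly as claimed. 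With that substitution your argument is complete.

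A minor point concerns the small branch, which the statement says stores the \emph{rounded} graph. The paper does this so that weights are $O(\eps^{-2})$ integers and the offline workspace is plainly $n\poly(1/\eps)$ words, at the cost of a factor $1-\theta$. Storing the original weights, as you do, also works, but you should then justify that the offline computation on rational inputs stays within the word bound.
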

\begin{proof}
\emph{Matching primitive.} The weighted matching algorithm in~\cite{HuangSu2023} gives $T_w(\delta)=\poly(1/\delta)$ and $S_w(n,\delta,W)=O(n\log(2W)\poly(1/\delta))$ words. Its streaming implementation of the weight reduction in~\cite{GuptaPeng2013} keeps weight-range dependence in memory, not passes. We write $\log(2W)$ to include $W=1$.

\emph{Normalization and path covers.} Choose reciprocal powers of two $\theta,\delta\in(\eps/8,\eps/4]$. Find the maximum edge weight $w_{\max}$ (zero if $E=\varnothing$) in one pass; if it is zero, any feasible output suffices. Otherwise set $\tau=\theta w_{\max}/n$, $\widehat w(e)=\lfloor w(e)/\tau\rfloor$, and $W=\lfloor n/\theta\rfloor$.
Positive rounded weights lie in $[1,W]$. A cover or tour has at most $n$ edges, so rounding loses at most $\theta w_{\max}$. If $\widehat\rho$ is the optimum rounded cover weight, then
$\tau\widehat\rho\ge(1-\theta)\rho_w(G)$, since $w_{\max}\le\rho_w(G)$.
Run Algorithm~\ref{alg:deletion} with weights $\widehat w$ and $\alpha=\beta=1-\delta$, ignoring zero rounded edges in its matching calls. By \Cref{thm:deletion}, $w(F)\ge\tau\widehat w(F) \ge\tfrac23(1-\delta)\tau\widehat\rho \ge\tfrac23(1-\delta)(1-\theta)\rho_w(G)$.
This is at least $(2/3-\eps)\rho_w(G)$.

\emph{Tours.} Removing a lightest edge of an optimum tour gives
$\rho_w(G)\ge(1-1/n)\OPT_{\mathrm{MaxTSP}}$. For $n\ge4/\eps$, completing the returned paths retains a fraction at least $\tfrac23(1-\delta)(1-\theta)(1-1/n)\ge2/3-\eps$, because $\delta+\theta+1/n\le3\eps/4$.
For $n<4/\eps$, compute a maximum-weight simple $2$-matching in the stored rounded graph. A rounded optimum tour is feasible for this relaxation. Deleting a lightest edge per cycle retains at least two thirds of the relaxation value~\cite{FisherNemhauserWolsey1979,MoranNewmanWolfstahl1990}. Complete the paths. Since $w_{\max}\le\OPT_{\mathrm{MaxTSP}}$, the original output weight is at least $\frac23(1-\theta)\OPT_{\mathrm{MaxTSP}}$.

\emph{Resources.} Normalize on the fly and filter $A$ during the second call. Storing $A\cup B$ allows cycle breaking in memory. This uses $2T_w(\delta)+O(1)$ passes and $S_w(n,\delta,W)+O(n)$ words. Reporting connector weights requires one further scan; reporting the tour does not. In the small branch, $n=O(1/\eps)$ and $\widehat w(e)=O(\eps^{-2})$, so the stored graph and polynomial auxiliary workspace fit $n\poly(1/\eps)$ words.
\end{proof}

\noindent\textbf{Other pass tradeoffs.}
The deterministic bound in \Cref{thm:streaming} gives a constant pass count for each fixed target below $2/3$. Its accuracy dependence can be large: the comparison in~\cite{IzumiKitamuraYamaguchi2025} reports exponent $39$ for the matching primitive. The randomized routine in~\cite{Assadi2025} instead gives $2/3-\eps$ in $O(\log n/\eps)$ passes with probability $1-n^{-\Omega(1)}$. Fresh randomness in the second call gives its guarantee conditional on $A$; a union bound controls both calls.

Lower unweighted targets need less work. The greedy component method in Appendix~\ref{sec:greedy-joining} gives $7/10-\eps$ using two matching calls and one scan. A matching followed by greedy forest extension gives $2/3-\eps$ in $T_u(n,3\eps)+1$ passes (Appendix~\ref{app:seeded}).

Appendix~\ref{app:unit-capacity-transfer} obtains $3/4-\eps$ unweighted and $4/5-\eps$ on triangle-free inputs via~\cite{AhnGuha2014}. The pass bound is $O(\eps^{-4}\log^2(1/\eps)\log n)$. The unit-weight support bound justifies rounding in memory; the weighted comparison in Appendix~\ref{app:bmatching} remains conditional on support availability. No determinism claim is added for these imported algorithms.

Given an explicit starting $2$-matching $Q$, Appendix~\ref{app:conditional-component-query} estimates the optimized joining value $f+h_*$ to additive error $\eta n$ in subquadratic time. Constructing $Q$ within that budget is not part of this result.

\section{Algorithmic limits and open questions}
\label{sec:limits-summary}
\subsection{Limits of residual matching and contraction}
\label{sec:matching-limits}
We consider exact residual matching and repeated contraction. These limitations concern the stated rules, not all semi-streaming algorithms.

\noindent\textbf{Three residual matchings.}
Start with $H_0=\varnothing$. For $i=1,2,3$, compute a maximum-weight matching $M_i$ in $G-H_{i-1}$ and set $H_i=H_{i-1}\cup M_i$. The ratio $\rho_w(H_3)/\rho_w(G)$ has infimum $2/3$ over inputs and exact matching choices, even after optimal recombination of retained edges.

For a unit-weight example, take vertices $0,\ldots,6$ and edge set $M_1\cup M_2\cup M_3\cup R$, where
\[
 M_1=\{12,34,56\},\quad M_2=\{15,26\},\quad
 M_3=\{16,25\},\quad R=\{01,23\}.
\]
These choices are maximum: $M_1$ matches six of seven vertices; after its deletion every edge meets $\{1,2\}$, so each later matching has at most two edges. The retained graph $H_3$ consists of the complete graph on $\{1,2,5,6\}$, the edge $34$, and the isolated vertex $0$. Thus $\rho(H_3)=4$, whereas the Hamiltonian path $0-1-5-6-2-3-4$ gives $\rho(G)=6$. A weight perturbation makes all three choices unique while the ratio tends to $2/3$ (\Cref{thm:three-round}). Our structured third call escapes this restriction by representing the missing connectors $01$ and $23$ in its auxiliary graph.

\noindent\textbf{Repeated contraction.}
The iterative rule in~\cite{AlipourFarokhnejadMoemke2025} retains selected edges. Before each new call, it deletes unused edges incident with internal path vertices, contracts the paths, and discards loops.

Its ratio infima are $2/3$ unweighted and $5/8$ weighted (Appendix~\ref{sec:contraction}). Every exact execution attains at least the corresponding fraction, and examples approach it. The unweighted construction uses particular maximum-matching choices. The weighted construction makes both initial choices unique; further rounds add no positive weight, and every tour containing the selected paths has ratio tending to $5/8$. Thus further rounds do not improve the worst-case guarantees, answering the iteration question for exact calls.

\subsection{Open questions}
\label{sec:improvement-directions}
\noindent\textbf{Sublinear-time estimation.}
The quadratic-query lower bound in~\cite{ChenKannanKhanna2020} and \Cref{lem:tsp-conversion} rule out a scheme giving $(1-\eta)\rho(G)-\eta n\le s\le\rho(G)$ in truly subquadratic time for every fixed $\eta>0$. Appendix~\ref{app:hardness} gives the corresponding computational restriction. Neither bound determines the best retention coefficient or running-time exponent.

\Cref{thm:order-concentration} covers correlated uniform orders. Reversing an order merely transposes $B_\pi$, but nonuniform input-dependent orders could behave differently: an order along a known path retains all its edges. Can an algorithm find useful orders without learning that path, or obtain a better constant from another implicit representation?

\noindent\textbf{Semi-streaming construction.}
The compensation proof uses vertex-disjoint alternating cycles and the degree bound left after deleting one matching. Replacing that matching by $A\cup B$ need not preserve these properties. Better starting relaxations or compatible multiple attachments may improve the unweighted $5/7$ or weighted $2/3$ guarantee. The $4/5$ approach in Appendix~\ref{app:component-extensions} would require a suitable streaming triangle-free $2$-matching primitive on general graphs. The deletion rule's completed-tour infimum remains open.

\noindent\textbf{Use of AI tools.}
Generative AI tools assisted with exploring examples, proof ideas, and tight instances; testing the algorithms on small graphs; editing the exposition; and checking references and parameters. The authors remain responsible for all mathematics, citations, and submitted text.

\label{last-main-page}
\clearpage
\bibliographystyle{plainurl}
\bibliography{references}
\label{last-reference-page}
\clearpage
\appendix
\crefalias{section}{appendix}
\crefname{appendix}{Appendix}{Appendices}
\Crefname{appendix}{Appendix}{Appendices}
\section{Residual matching and deletion-rule tightness}
\label{sec:limits}
Use the residual sequence $H_i$ from Section~\ref{sec:matching-limits}. The following theorem allows arbitrary postprocessing within $H_3$.

\begin{theorem}
\label{thm:three-round}
The infimum of $\rho_w(H_3)/\rho_w(G)$ over positive-optimum inputs and legal exact choices is $2/3$. Ratios approaching $2/3$ occur with unique optima in all three calls. Hence no postprocessing restricted to the edges retained by three exact calls has a larger uniform path-cover guarantee.
\end{theorem}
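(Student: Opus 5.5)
The proof has two halves: a lower bound showing the ratio is always at least $2/3$, and a family of weighted examples, with unique optima in all three calls, whose ratios converge to $2/3$.

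\emph{Lower bound.} I would not analyse $H_3$ directly. Instead I compare it with the weaker structure $H_2=M_1\cup M_2$. Here $M_1$ is an exact maximum-weight matching of $G$, and $M_2$ is an exact maximum-weight matching of $G-M_1$. Apply \Cref{thm:deletion} with $A=M_1$, $B=M_2$ and $\alpha=\beta=1$. It produces a path cover $F\subseteq H_2$ with $w(F)\ge\frac23\rho_w(G)$. Since $H_2\subseteq H_3$, any cover inside $H_2$ is a cover inside $H_3$, so $\rho_w(H_3)\ge\rho_w(H_2)\ge w(F)$. This gives the lower bound $2/3$ for every input and every legal exact choice, including arbitrary tie-breaking. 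The same argument shows that no postprocessing restricted to $H_3$ can do worse than $2/3$.

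\emph{Upper bound via perturbation.} I would start from the seven-vertex example of Section~\ref{sec:matching-limits}, which has $\rho(H_3)=4$ and $\rho(G)=6$. The goal is to perturb the weights so that $M_1,M_2,M_3$ become the unique maximum-weight matchings in their residual graphs, while the ratio changes only by $O(\epsilon)$.

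I would give the $M_i$ edges weight $1+\epsilon_i$, with $\epsilon_1>\epsilon_2>\epsilon_3>0$ small, and give the $R$ edges weight $1-\epsilon$. Uniqueness then has to be checked round by round.

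\emph{Round 1.} A matching of $G$ has at most three edges, since there are seven vertices. I would verify that $M_1$ is the only three-edge matching of maximum weight. The competing three-edge matchings include $\{01,23,56\}$, $\{01,26,34\}$ and $\{01,25,34\}$. Each of them uses at least one $R$ edge or fewer $M_1$ edges, so giving $M_1$ edges the largest bonus makes $M_1$ strictly heavier.

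\emph{Round 2.} In $G-M_1$ every edge meets vertex $1$ or vertex $2$, so a matching has at most two edges. The two-edge candidates are $\{15,26\}$, $\{16,25\}$, $\{15,23\}$, $\{16,23\}$, $\{01,25\}$, $\{01,26\}$ and $\{01,23\}$. Those containing an $R$ edge lose because of the $-\epsilon$ weight. To separate $M_2=\{15,26\}$ from $M_3=\{16,25\}$, the $M_2$ edges need a slightly larger weight than the $M_3$ edges, which the choice $\epsilon_2>\epsilon_3$ provides.

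\emph{Round 3.} After removing $M_1\cup M_2$, the pair $M_3=\{16,25\}$ must beat $\{01,25\}$, $\{16,23\}$ and $\{01,23\}$. Again the $R$ edges are lighter, so $M_3$ wins strictly.

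\emph{Ratio.} With these weights, $\rho_w(H_3)=4+O(\epsilon)$ and $\rho_w(G)\ge 6-O(\epsilon)$, using the Hamiltonian path $0-1-5-6-2-3-4$. The ratio therefore tends to $2/3$ as $\epsilon\to0$.

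Two points need care when computing $\rho_w(H_3)$. First, $H_3$ consists of $K_4$ on $\{1,2,5,6\}$ together with the edge $34$, so its maximum path cover has three edges in the $K_4$ plus $34$. Second, the perturbations only add $O(\epsilon)$ weight, so they cannot change this count.

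The main obstacle is the exhaustive uniqueness check in each round. If the perturbation scheme above breaks a tie the wrong way, I would first try adjusting the bonuses. One option is weight $1+3\epsilon$ on $M_1$, $1+2\epsilon$ on $M_2$ and $1+\epsilon$ on $M_3$, with $R$ edges set to $1-\epsilon$. I would then recheck each residual enumeration, which is finite because the graph has seven vertices.
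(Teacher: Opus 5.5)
Your proposal is correct and follows the paper's proof. The lower bound is \Cref{thm:deletion} applied to $M_1,M_2$, and the upper bound perturbs the seven-vertex example so that each round's displayed matching is the unique maximum; your choice of $R$-weights $1-\epsilon$ with ordered bonuses works just like the paper's $1+3t,1+2t,1+t,1$, and your round-by-round uniqueness checks go through. One small correction: the final ``hence'' clause comes from the upper-bound family, because any postprocessed cover inside $H_3$ weighs at most $\rho_w(H_3)$. It does not come from the lower bound, which only says that the best cover inside $H_3$ attains $2/3$; your remark that postprocessing cannot do worse than $2/3$ has the direction reversed.
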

\begin{proof}
The first two matchings contain a forest of weight at least $2\rho_w(G)/3$ by \Cref{thm:deletion}, giving the lower bound. For the upper bound, use the nine-edge graph and sets $M_1,M_2,M_3,R$ displayed in Section~\ref{sec:matching-limits}. Give their edges weights $1+3t$, $1+2t$, $1+t$, and $1$, respectively, where $0\le t<1/3$.
The displayed matching in each round has maximum possible cardinality and uses only the heaviest available edges. For $t>0$, these edges are strictly heavier than all others, making every choice unique.

The retained graph has path-cover size four, as shown in Section~\ref{sec:matching-limits}. Its four heaviest edges sum to $4+11t$, attained by the path $2-1-5-6$ together with $34$.

In $G$, any six-edge forest is a Hamiltonian path from $0$ to $4$, with last segment $2-3-4$. The only two possibilities pass through $1-5-6-2$ or $1-6-5-2$, of total weights $6+10t$ and $6+8t$. Any forest with at most five edges weighs at most $5+13t<6+10t$. Hence $\frac{\rho_w(H_3)}{\rho_w(G)}=\frac{4+11t}{6+10t}\longrightarrow\frac23$. At $t=0$, equality is attained.
\end{proof}
\begin{proposition}
\label{prop:deletion-tight}
The deletion rule has path-cover ratio infimum $2/3$ for every exact implementation, even when both matching calls have unique optima. It differs from the degree-two-relaxation algorithm in~\cite{MoranNewmanWolfstahl1990}.
\end{proposition}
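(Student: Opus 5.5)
The lower bound comes directly from \Cref{thm:deletion}. With exact calls ($\alpha=\beta=1$) every execution of Algorithm~\ref{alg:deletion} returns $w(F)\ge\frac23\rho_w(G)$, whatever the tie-breaking. The real work is the upper bound. I need a family of weighted instances on which both matching calls have unique optima, while the returned forest has weight tending to $\frac23\rho_w(G)$. I will reuse the seven-vertex instance from Section~\ref{sec:matching-limits} and the weights from the proof of \Cref{thm:three-round}: weight $1+3t$ on $M_1=\{12,34,56\}$, $1+2t$ on $M_2=\{15,26\}$, $1+t$ on $M_3=\{16,25\}$, and $1$ on $R=\{01,23\}$, with $0<t<1/3$.

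\textbf{Step 1: both calls are unique.} For the first call, a matching can have at most three edges, since seven vertices leave one exposed. Every edge outside $M_1$ weighs less than $1+3t$, so $M_1$ is the unique maximum-weight matching. Indeed any other matching has at most three edges, each of weight at most $1+3t$, with at least one strictly lighter.

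For the second call, $A=M_1$. Every edge of $G-A$ meets $\{1,2\}$, so a matching there has at most two edges. A matching of weight $2+4t$ needs two edges of weight $1+2t$, which forces $M_2$. Every other matching of $G-A$ has weight below $2+4t$, so $B=M_2$ is the unique optimum.

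\textbf{Step 2: the output weight.} The union $A\cup B$ consists of the $4$-cycle $1-2-6-5-1$ with weights $1+3t,1+2t,1+3t,1+2t$, the edge $34$, and the isolated vertex $0$. Breaking the cycle removes one edge of weight $1+2t$, so $w(F)=3(1+3t)+(1+2t)=4+11t$.

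\textbf{Step 3: the optimum.} By the computation in \Cref{thm:three-round}, $\rho_w(G)=6+10t$, attained by the Hamiltonian path $0-1-5-6-2-3-4$. Hence $w(F)/\rho_w(G)=(4+11t)/(6+10t)$, which tends to $2/3$ as $t\to0^+$. Together with the lower bound, this shows the infimum is exactly $2/3$ for every exact implementation, even with unique optima. Ties do not matter because the choices are forced.

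\textbf{Step 4: difference from~\cite{MoranNewmanWolfstahl1990}.} I expect this to be the main obstacle, since the phrase ``differs'' needs a concrete witness. I would show that the two rules produce different outputs on this same instance.

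The relaxation computes a maximum-weight simple $2$-matching. Vertices $0$ and $4$ have degree one in $G$, so any cycle in a $2$-matching lies in $G[\{1,2,3,5,6\}]$. Vertex $3$ has only the neighbours $2$ and $4$, so it cannot lie on a cycle there, leaving only cycles inside $K_4$ on $\{1,2,5,6\}$.

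I would compare by hand the Hamiltonian path, of weight $6+10t$, against the $2$-matchings that contain a $4$-cycle on $\{1,2,5,6\}$. A $4$-cycle uses all degree capacity of $1,2,5,6$, so it can add only the edge $34$. Such a $2$-matching therefore has weight at most $4(1+3t)+(1+3t)<6+10t$ for small $t$, once I check the $K_4$ weights. So the relaxation's optimum is the Hamiltonian path itself, and the relaxation algorithm returns ratio $1$ here, whereas the deletion rule returns ratio tending to $2/3$.

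If the hand comparison is delicate, the fallback is to enumerate the three Hamiltonian $4$-cycles of $K_4$ and compute their weights, adding $34$ to each.
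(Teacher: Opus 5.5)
Your lower bound and your upper-bound family match the paper's. The only difference there is that you compute the two-call output $w(F)=4+11t$ directly, while the paper observes $w(F)\le\rho_w(H_3)$ and reuses \Cref{thm:three-round}.

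The real divergence is the witness separating the deletion rule from the degree-two relaxation. The paper drops the $M_3$ edges $16,25$ and uses the unweighted graph $\{01,12,15,23,26,34,56\}$. Its two degree-three vertices $1,2$ share the edge $12$, so uniqueness of the six-edge simple $2$-matching takes one line. However, $A=\{12,34,56\}$, $B=\{15,26\}$ is only one legal tie-breaking there; for instance $A=\{01,23,56\}$, $B=\{15,26,34\}$ returns the Hamiltonian path itself. So the paper exhibits one execution on which the rules differ. Your weighted witness has unique optima in both algorithms, so the rules differ on every execution, at the cost of a longer check of the relaxation.

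As written, that check is incomplete. After reducing cycles to the $K_4$ on $\{1,2,5,6\}$, you compare only against $4$-cycles and omit triangles, and your crude bound $5+15t<6+10t$ needs $t<1/5$. A uniform fix:
\begin{itemize}
\item Since $0$ and $4$ have degree one, every simple $2$-matching has at most six edges.
\item A six-edge one saturates $1,2,3,5,6$, so it must contain $01,23,34,56$ and avoid $12$; hence it is one of the two Hamiltonian paths.
\item Therefore any $2$-matching containing a cycle has at most five edges. Since only three edges weigh $1+3t$, it weighs at most $3(1+3t)+2(1+2t)=5+13t<6+10t$ for all $0<t<1/3$.
\end{itemize}
This covers triangles and $4$-cycles at once, so the relaxation returns the path of weight $\rho_w(G)$, as you claimed.
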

\begin{proof}
The lower bound is \Cref{thm:deletion}. The family in \Cref{thm:three-round} gives the upper bound even for an optimum forest in a larger retained graph, hence also for the two-call output.
To distinguish the algorithms, take unit-weight edges $E=\{01,12,15,23,26,34,56\}$. The maximum choices $A=\{12,34,56\}$ and $B=\{15,26\}$ give a four-cycle and edge $34$, hence four edges after breaking. The unique maximum simple $2$-matching is the six-edge path $0-1-5-6-2-3-4$: retaining six edges must lower both degree-three vertices $1,2$, forcing deletion of their common edge $12$. The relaxation algorithm therefore returns six edges, not four.
\end{proof}
A fourth residual call collects $R$. For unit weights, \Cref{fig:joining} shows how joining and cleanup instead recover a Hamiltonian path using these missing edges.

\section{Repeated contraction: guarantees and tightness}
\label{sec:contraction}
For a matching $M$, contract each pair $uv$ to a supervertex $[uv]$, leave unmatched vertices unchanged, discard loops, and retain parallel edge copies with their original identities and weights. Denote the quotient by $G/M$. Lifting a selected copy recovers its original edge. An \emph{ambient matching} may pair nonadjacent vertices without adding an edge. If $\pi_M:V\to V(G/M)$ is the quotient map, then $|\pi_M^{-1}(S)|\le2|S|$ for every quotient-vertex set $S$.

Write $F_i$ for the lifted union after $i$ rounds of the iterative rule in Section~\ref{sec:matching-limits}. Since selected edges are never removed, we call the iteration \emph{frozen}. Every call in a legal exact execution is optimal for its cardinality or weight objective. This is the iterative rule of~\cite{AlipourFarokhnejadMoemke2025} with approximate calls replaced by exact ones; quotient edges are identified with their original copies.

\subsection{The unweighted two-round guarantee}
\label{app:unweighted-two-round}
We reprove the two-round guarantee of~\cite{AlipourFarokhnejadMoemke2025} using Tutte--Berge. The contraction bound below also allows ambient matchings.

\begin{lemma}
\label{lem:degree-bounded-matching}
Let $H$ be a loopless multigraph of maximum degree at most four, with degrees counting edge copies. If $S=\{v\in V(H):\deg_H(v)\ge3\}$, then $|E(H)|\le3\nu(H)+|S|$.
\end{lemma}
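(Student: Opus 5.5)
The plan is to apply the Tutte--Berge formula in its minimax form and charge the edges of $H$ to the terms of the minimizing expression. Parallel copies and loops are not an obstacle: Tutte--Berge depends only on which vertex pairs are adjacent, so it holds verbatim for loopless multigraphs. It therefore gives a set $U\subseteq V(H)$ with
\[
 \nu(H)=|U|+\sum_{K}\left\lfloor |K|/2\right\rfloor,
\]
where $K$ ranges over the connected components of $H-U$. It then suffices to show that every edge can be charged so that $U$ absorbs at most $3|U|+|S\cap U|$ edges and each component $K$ absorbs at most $3\lfloor|K|/2\rfloor+|S\cap K|$. Summing over $U$ and all components then gives $|E(H)|\le3\nu(H)+|S|$, because $U$ and the components partition $V(H)$.

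\emph{Edges meeting $U$.} Charge every edge with at least one endpoint in $U$ to such an endpoint. A vertex $u\in U$ receives at most $\deg_H(u)$ edges. If $u\in S$, this is at most $4=3+1$. If $u\notin S$, this is at most $2\le3$. Hence $U$ receives at most $3|U|+|S\cap U|$ edges.

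\emph{Edges inside a component.} Every remaining edge lies inside a single component $K$ of $H-U$, since there are no edges between different components. The number of such edges is at most half the degree sum of $K$ in $H$. Vertices outside $S$ have degree at most two and vertices of $S$ have degree at most four, so this degree sum is at most $2|K|+2|S\cap K|$. The component therefore contains at most $|K|+|S\cap K|$ edges.

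It remains to check that $|K|\le3\lfloor|K|/2\rfloor$ for every component. If $|K|=2j$, this reads $2j\le3j$. If $|K|=2j+1$ with $j\ge1$, it reads $2j+1\le3j$. If $|K|=1$, the bound fails, but looplessness means the single vertex spans no internal edge, so the required bound $0$ holds directly.

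I expect the main point needing care to be the bookkeeping: each edge must be charged exactly once, and each vertex of $S$ must contribute its extra $+1$ only once. This is guaranteed because edges meeting $U$ are charged to $U$, the remaining edges are charged to a unique component, and $S$ is split disjointly between $U$ and the components. The singleton odd components are the only place where the per-component inequality needs looplessness. Everything else is routine arithmetic.
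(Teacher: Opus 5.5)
Your proposal is correct and follows the paper's proof essentially step for step. You pick $U$ by Tutte--Berge, charge the edges meeting $U$ by degree (at most $3|U|+|S\cap U|$), bound each component's internal edges by $|K|+|S\cap K|$ via the degree sum, and dispose of singleton components by looplessness; writing Tutte--Berge as $\nu(H)=|U|+\sum_K\lfloor |K|/2\rfloor$ instead of $2\nu(H)=|V(H)|+|U|-o(H-U)$ is only a cosmetic difference.
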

\begin{proof}
By Tutte--Berge~\cite{QuWest2023}, choose $U\subseteq V(H)$ with $2\nu(H)=|V(H)|+|U|-o(H-U)$, where $o$ counts odd-order components. At most $3|U|+|S\cap U|$ edges meet $U$, since vertices outside $S$ have degree at most two and those in $S$ at most four. For a component $C$ of $H-U$, summing degrees gives $|E(C)|\le|V(C)|+|S\cap V(C)|$. For even $|V(C)|$, this is at most $3|V(C)|/2+|S\cap V(C)|$; for odd $|V(C)|\ge3$, it is at most $3(|V(C)|-1)/2+|S\cap V(C)|$. A singleton has no internal edge. Summing yields $|E(H)|\le3|U|+\frac32(|V(H)|-|U|-o(H-U))+|S|=3\nu(H)+|S|$.
\end{proof}

\begin{lemma}
\label{lem:contraction-matching-bound}
For every simple graph $G$ and ambient matching $M$ on $V(G)$, $\nu(G/M)\ge(\rho(G)-|M|)/3$.
\end{lemma}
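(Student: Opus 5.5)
We prove $\nu(G/M)\ge(\rho(G)-|M|)/3$ by taking an optimum path cover $P$ of $G$ with $|P|=\rho(G)$ edges and pushing it into the quotient. Let $H$ be the multigraph on $V(G/M)$ whose edges are the images of the edges of $P$ that are not loops in $G/M$. An edge $uv\in P$ becomes a loop only if $u,v$ are paired by $M$. Since $M$ is a matching, each $M$-pair accounts for at most one such $P$-edge; $P$ is simple, and a pair contributes at most the single edge $uv$. Hence $|E(H)|\ge\rho(G)-|M|$. Every edge copy of $H$ is a genuine edge of $G/M$, because the quotient discards only loops and keeps parallel copies. Therefore $H$ is a sub-multigraph of $G/M$ and $\nu(G/M)\ge\nu(H)$.

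Next we check the hypotheses of \Cref{lem:degree-bounded-matching}. A supervertex $[uv]$ has degree at most $\deg_P(u)+\deg_P(v)\le4$. If $uv\in P$, that edge became a loop and was removed, so the degree is at most $2$. An unmatched vertex has degree at most $2$. So $H$ is loopless with maximum degree at most four, and the lemma gives $|E(H)|\le3\nu(H)+|S|$, where $S$ is the set of vertices of degree at least $3$. This is weaker than needed because of the $|S|$ term. The lemma bounds $|E(H)|$ by $3\nu(H)+|S|$, while we want $\rho(G)-|M|\le3\nu(H)$.

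The key step, and the main obstacle, is to absorb $|S|$ into the loss. Every $v\in S$ is a supervertex $[xy]$ with $xy\notin P$ and $\deg_P(x)+\deg_P(y)\ge3$. I would avoid building $H$ from all of $P$ and instead build it from $P$ after deleting one $P$-edge at each such supervertex. Pairs $xy\in M$ with $xy\in P$ already lose their edge as a loop. Pairs with $xy\notin P$ and combined degree at least $3$ have one incident $P$-edge deleted. After these deletions every supervertex has degree at most $2$ in the image, so the image $H'$ has maximum degree $2$ and $S=\varnothing$ for it.

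Each pair of $M$ is charged at most one lost edge, so $|E(H')|\ge\rho(G)-|M|$, and \Cref{lem:degree-bounded-matching} with $S=\varnothing$ gives $|E(H')|\le3\nu(H')$. The point to verify carefully is that each pair is charged only once. A deleted edge might be charged to two supervertices at once if both of its endpoints are supervertices in $S$. That only helps, since deleting it lowers both degrees at the cost of one edge. So the greedy deletion costs at most one edge per $M$-pair and the count holds. Combining, $3\nu(G/M)\ge3\nu(H')\ge\rho(G)-|M|$. Alternatively, since $H'$ has maximum degree two it is a disjoint union of paths and cycles, and $|E|\le3\nu$ can be checked directly, the triangle being the worst case.
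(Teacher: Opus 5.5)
\textbf{The deletion step in your third paragraph fails.} A supervertex $[xy]$ with $xy\notin P$ has degree four in $H$ whenever $x$ and $y$ are both internal on $P$. Deleting one incident $P$-edge leaves degree three, so $H'$ need not have maximum degree two. Both your application of \Cref{lem:degree-bounded-matching} with $S=\varnothing$ and your closing ``paths and cycles'' remark rest on this false claim.

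The inequality $|E(H')|\le3\nu(H')$ can actually fail. Let $G=P$ be the linear forest formed by the paths $w_1-x_1-x_3-w_3$ and $w_2-x_2-y_4-y_3-y_2-v-y_1-x_4-w_4$, and let $M=\{x_iy_i:1\le i\le4\}$. Then $\rho(G)=11$ and $|M|=4$. In $H$, every $[x_iy_i]$ has degree four and carries a pendant edge to $w_i$. Your rule permits deleting these four pendant edges. This leaves $H'$ with seven edges on the five vertices $[x_1y_1],\ldots,[x_4y_4],v$: a $K_4$ minus the edge $[x_1y_1][x_2y_2]$, with $v$ joined to both ends of that missing edge. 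Hence $\nu(H')=2$ and $3\nu(H')=6<7=\rho(G)-|M|$. The lemma itself still holds here, since $\nu(G/M)=4$; only your route breaks. Moreover, bringing a degree-four supervertex down to degree two generally needs two deletions, for instance when all four of its neighbours are unmatched. That breaks your one-edge-per-pair budget.

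\textbf{The fix needs no deletions.} The slack you are looking for was discarded in your first paragraph. There you have $|E(H)|=\rho(G)-|M\cap P|$ exactly, not merely $|E(H)|\ge\rho(G)-|M|$. As you yourself observe, every vertex of $S$ is the supervertex of a pair in $M\setminus P$, so $|S|\le|M\setminus P|$. Applying \Cref{lem:degree-bounded-matching} to $H$ itself then gives $3\nu(G/M)\ge3\nu(H)\ge|E(H)|-|S|\ge\rho(G)-|M\cap P|-|M\setminus P|=\rho(G)-|M|$. This is exactly the paper's proof. The $|S|$ term in that lemma charges one unit per high-degree supervertex, and each such supervertex comes from a pair of $M$ whose edge was not already lost as a loop.
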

\begin{proof}
Fix a maximum path cover $P$. Let $H$ be the spanning subgraph of $G/M$ consisting of the surviving copies of $P$. It has $\rho(G)-|M\cap P|$ edges and maximum degree at most four. An unmatched vertex, or the supervertex of an edge in $M\cap P$, has degree at most two. Hence only pairs in $M\setminus P$ can produce degree at least three. By \Cref{lem:degree-bounded-matching}, $3\nu(G/M)\ge3\nu(H)\ge\rho(G)-|M\cap P|-|M\setminus P|=\rho(G)-|M|$.
\end{proof}

\begin{corollary}
\label{cor:two-round-unweighted}
Let $0\le\delta<1$. If $M_1$ and $M_2$ are error-$\delta$ cardinality matchings in $G$ and $G/M_1$, respectively, their lifted union is a path cover with at least $\frac{(1-\delta)(4+\delta)}6\rho(G)\ge\frac23(1-\delta)\rho(G)$ edges.
\end{corollary}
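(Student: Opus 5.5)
The plan is to combine the two cardinality bounds: the first call gives $|M_1|\ge(1-\delta)\nu(G)$, the second gives $|M_2|\ge(1-\delta)\nu(G/M_1)$, and \Cref{lem:contraction-matching-bound} with $M=M_1$ lower-bounds $\nu(G/M_1)$. Then I verify that the lifted union is a path cover containing $|M_1|+|M_2|$ distinct edges.

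For the structural part, note that in $G/M_1$ loops are discarded, so each edge of $M_2$ lifts to an original edge that is not in $M_1$. Thus $M_1$ and $M_2$ are edge-disjoint and the union has $|M_1|+|M_2|$ edges. Each original vertex $v$ meets at most one $M_1$-edge. Its image (either $v$ itself or the supervertex $[vu]$) meets at most one $M_2$-edge, so $v$ meets at most one lifted $M_2$-edge and has degree at most two. For acyclicity, contract the $M_1$-edges in the union. The $M_2$-edges then form a matching in the quotient, which is a forest with no parallel edges between the same pair of supervertices, since a matching uses each quotient vertex once. Any cycle in the union would become a closed walk using quotient edges that are pairwise vertex-disjoint, which is impossible unless the cycle consisted of $M_1$-edges only, and that cannot happen because $M_1$ is a matching. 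Hence the union is a linear forest.

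For the counting, set $\rho=\rho(G)$ and $x=|M_1|$. Then $|M_2|\ge(1-\delta)(\rho-x)/3$, so the total is at least $x+(1-\delta)(\rho-x)/3=\tfrac{2+\delta}{3}x+\tfrac{1-\delta}{3}\rho$. This expression is increasing in $x$, and $x\ge(1-\delta)\nu(G)\ge(1-\delta)\rho/2$ because a path cover splits into two matchings. Substituting gives
\[
\frac{(1-\delta)(2+\delta)}{6}\rho+\frac{2(1-\delta)}{6}\rho=\frac{(1-\delta)(4+\delta)}{6}\rho\ge\frac23(1-\delta)\rho .
\]
One detail needs care: \Cref{lem:contraction-matching-bound} can be used only when $\rho-x\ge0$. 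If $\rho-x<0$, the trivial bound $|M_2|\ge0$ already gives total $x>\rho\ge$ the target. The main obstacle is the acyclicity argument, specifically ruling out parallel quotient copies forming a cycle, and the matching property of $M_2$ handles it.
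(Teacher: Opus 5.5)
Your proposal is correct and follows the paper's proof: it uses the same counting $|M_1|+|M_2|\ge\frac{1-\delta}{3}\rho(G)+\frac{2+\delta}{3}|M_1|$ from \Cref{lem:contraction-matching-bound} and $|M_1|\ge(1-\delta)\rho(G)/2$, and your degree-and-quotient argument does the work of the paper's observation that each component is one lifted $M_2$-edge with at most two $M_1$-edges, hence a path of length at most three. Your separate case $\rho-x<0$ is unnecessary, since the lemma holds for every ambient matching and $|M_1|\le\nu(G)\le\rho(G)$ anyway.
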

\begin{proof}
A lifted second-round edge joins at most two edges of $M_1$, and each contracted vertex meets at most one second-round edge. Thus the output consists of paths of length at most three. Since $|M_1|\ge(1-\delta)\rho(G)/2$, \Cref{lem:contraction-matching-bound} gives $|M_1|+|M_2|\ge\frac{1-\delta}{3}\rho(G)+\frac{2+\delta}{3}|M_1|\ge\frac{(1-\delta)(4+\delta)}6\rho(G)$.
\end{proof}

\subsection{The unweighted iteration barrier}
\label{sec:iteration}
\begin{theorem}
\label{thm:iteration-barrier}
The unweighted frozen-contraction ratio has infimum $2/3$ over instances and legal exact executions. For every integer $r\ge2$, a graph $G_r$ has a legal execution of ratio $4r/(6r-1)$ with no edge surviving after round two.
\end{theorem}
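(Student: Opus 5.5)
The lower bound should come from the two-round guarantee; the upper bound needs an explicit family of graphs $G_r$ with a bad legal execution.

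\textbf{Lower bound.} The rule is frozen, so $F_i\supseteq F_2$ for every $i\ge2$. By \Cref{cor:two-round-unweighted} with $\delta=0$, every legal exact execution has $|F_2|\ge\frac23\rho(G)$, and one round alone already gives $|F_1|\ge\rho(G)/2$. Hence every exact execution, with any number of rounds, retains at least $\frac23\rho(G)$ edges. The infimum is therefore at least $2/3$.

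\textbf{Upper bound.} I will build $G_r$ on $6r$ vertices with a Hamiltonian path, so $\rho(G_r)=6r-1$. The target execution is as follows.
\begin{itemize}
\item Round one picks a perfect matching $M_1$ with $3r$ edges.
\item Round two picks a maximum matching $M_2$ of $G_r/M_1$ with exactly $r$ edges.
\item Each $M_2$-edge lifts to join two $M_1$-edges into a three-edge path. What remains is $r$ such paths and $r$ untouched $M_1$-edges, for $|F_2|=4r$.
\end{itemize}
The required property is that after round two every unused edge either meets an internal vertex of one of the three-edge paths, and is deleted, or becomes a loop. Then round three sees an edgeless quotient, the execution halts at $4r$, and the ratio is $4r/(6r-1)\to2/3$.

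\textbf{Construction.} I would take $r$ copies of a six-vertex gadget and link consecutive gadgets by one edge each. In the gadget, $M_1$ consists of three of the Hamiltonian-path edges, or a perfect matching mixing path and non-path edges. The second-round edge is a non-path chord that joins the middle $M_1$-pair to one neighbouring pair, so that the gadget's remaining edge lies at the join. The example $0\text{-}1\text{-}5\text{-}6\text{-}2\text{-}3\text{-}4$ from Section~\ref{sec:matching-limits}, with the $M_1$/$M_2$ roles above, is the template.

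The linking edges between gadgets must be placed at vertices that become internal, namely the endpoints of $M_1$-edges absorbed into the length-three paths. This makes every Hamiltonian-path edge not selected by the algorithm incident with an internal vertex after round two.

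\textbf{Main obstacle: legality of the round choices.} I must check that $M_1$ is a maximum matching (it is perfect, so this is immediate) and that $\nu(G_r/M_1)=r$. For the second claim I will exhibit a Tutte--Berge witness: a set $U$ of $r$ supervertices, one per gadget, such that every quotient edge meets $U$. This is a vertex cover of size $r$, which gives $\nu\le r$ directly. So I will choose the chords and linking edges so that in each gadget every non-$M_1$ edge touches a single designated $M_1$-pair, the hub. The hub is then matched in round two to an adjacent pair, its partner becomes internal, and the remaining pair is isolated in the next quotient.

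Getting a single hub per gadget to cover both the intra-gadget Hamiltonian-path edges and the inter-gadget links, while still allowing a Hamiltonian path through all $6r$ vertices, is the delicate combinatorial point. This is where I expect to iterate on the gadget. My first attempt will place the links between the hub pair of one gadget and the hub pair of the next. If that breaks Hamiltonicity, I will instead accept $2r$ hubs and re-balance the counts.
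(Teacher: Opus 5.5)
Your lower bound is correct for executions with at least two rounds: the rule is frozen, and \Cref{cor:two-round-unweighted} with $\delta=0$ gives $2\rho(G)/3$ after round two.

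The gap is the upper bound, which is the substance of the theorem. You never exhibit $G_r$, and the design you commit to cannot work for large $r$. In your framework, write the hub of a gadget as $\{h_i,h_i'\}$, its round-two edge as $h_i'v_i$ into the pair $\{v_i,v_i'\}$, and the pair left isolated as $\{x_i,y_i\}$. After round two the internal vertices are exactly the $h_i'$ and $v_i$.
\begin{itemize}
\item A non-$M_1$ edge at $x_i$ or $y_i$ must meet a hub vertex, by your vertex-cover certificate. If that vertex is some $h_k$, the edge joins endpoints of two different components and survives into round three. So all such edges go to the vertices $h_k'$.
\item On a Hamiltonian path, the $2r$ vertices $x_i,y_i$ need at least $2r-2$ such edges. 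The $r$ vertices $h_k'$ have total path degree at most $2r$.
\item Hence at most two path edges join $\{h_k',x_i,y_i\}$ to $\{h_k,v_k,v_k'\}$, although every gadget meets both sets.
\item If consecutive gadgets share a single link, the links are bridges. The path then traverses each gadget in one contiguous segment, and each segment needs its own edge between the two sets. This forces $r\le2$.
\end{itemize}
This count already covers hub-to-hub links. The fallback with $2r$ hubs only certifies $\nu(G_r/M_1)\le2r$. It does not force $\nu(G_r/M_1)=r$, so $|F_2|=3r+\nu(G_r/M_1)$ may exceed $4r$. Your seven-vertex template is not a contraction obstruction either: after contracting $\{12,34,56\}$, every maximum quotient matching leaves $01$ or $23$ alive for round three.

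The missing idea is that the Hamiltonian path must visit each gadget twice. It makes one run through all vertices of type $h',x,y$ and one run through all vertices of type $h,v,v'$, joined by a single switching edge. Consecutive gadgets therefore need one link inside each run. The paper's $G_r$ does exactly this:
\begin{itemize}
\item vertices $a_i,b_i,c_i,d_i,x_i,y_i$;
\item gadget edges $a_ib_i,b_ic_i,c_id_i,x_iy_i,x_ib_i,d_ia_i$;
\item links $b_iy_{i-1}$ and $a_ic_{i-1}$ for $2\le i\le r$;
\item the switch $b_1c_r$.
\end{itemize}
Then $M_1=\{a_ib_i,c_id_i,x_iy_i\}$ is perfect, the $r$ supervertices $[a_ib_i]$ cover every edge of $G_r/M_1$, and $M_2=\{b_ic_i\}$ is maximum. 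The output consists of the paths $a_i-b_i-c_i-d_i$ and $x_i-y_i$. Every unused edge meets an internal vertex $b_i$ or $c_i$, except $d_ia_i$, which becomes a loop. The sequence $y_r,x_r,b_r,y_{r-1},x_{r-1},b_{r-1},\ldots,y_1,x_1,b_1,c_r,d_r,a_r,\ldots,c_1,d_1,a_1$ is a Hamiltonian path, so $\rho(G_r)=6r-1$.
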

\begin{proof}
Corollary~\ref{cor:two-round-unweighted} gives at least $2\rho(G)/3$ edges after two rounds; later rounds only add edges.

\emph{Construction and matching choices.} For the upper bound, let $V(G_r)=\{a_i,b_i,c_i,d_i,x_i,y_i:1\le i\le r\}$ and
\[
\begin{split}
 E(G_r)={}&\{a_ib_i,b_ic_i,c_id_i,x_iy_i,x_ib_i,d_ia_i:1\le i\le r\}\\
 &\cup\{b_iy_{i-1},a_ic_{i-1}:2\le i\le r\}\cup\{b_1c_r\}.
\end{split}
\]
Choose $M_1=\{a_ib_i,c_id_i,x_iy_i:1\le i\le r\}$ and, after contraction, the copies of $M_2=\{b_ic_i:1\le i\le r\}$. The first matching is perfect. Every quotient edge meets one of the $r$ vertices $[a_ib_i]$, so the $r$-edge matching $M_2$ is maximum.

\emph{Termination.} The lifted output consists of $a_i-b_i-c_i-d_i$ and $x_i-y_i$, with $4r$ edges. Its degree-two vertices are $b_i,c_i$. Every unused edge except $d_ia_i$ meets one of these vertices and is deleted. Each $d_ia_i$ becomes a loop when its path is contracted. Thus no third-round edge survives.

\emph{Optimum.} The sequence
\[
\begin{aligned}
 &y_r,x_r,b_r,y_{r-1},x_{r-1},b_{r-1},\ldots,y_1,x_1,b_1,\\
 &c_r,d_r,a_r,c_{r-1},d_{r-1},a_{r-1},\ldots,c_1,d_1,a_1
\end{aligned}
\]
is a Hamiltonian path: every vertex occurs once and consecutive vertices are adjacent. Therefore $\rho(G_r)=6r-1$ and $4r/(6r-1)\to2/3$.
\end{proof}
The construction uses the displayed choices among maximum matchings; it does not rule out a better secondary tie-breaking rule.

\subsection{A fractional certificate for weighted path covers}
\label{sec:forest}
\label{app:contraction}
\noindent\textbf{Matching certificates.}
For a loopless multigraph $H$, a vector $z\in\mathbb R^{E(H)}$ has one coordinate per edge copy, and $z(A)=\sum_{e\in A}z_e$. For $S\subseteq V(H)$, let $H[S]$ be the induced subgraph and set $E_H(S)=E(H[S])$. Let $\delta_H(v)$ be the set of incident edge copies at $v$. Edmonds' matching polytope~\cite{Edmonds1965} is
\[
\begin{split}
\MP(H)=\{z\ge0:\;&z(\delta_H(v))\le1\quad(v\in V(H)),\\
 &z(E_H(S))\le(|S|-1)/2\quad(S\subseteq V(H),\ |S|\ge3\text{ odd})\}.
\end{split}
\]
This is the convex hull of matching incidence vectors, so
\begin{equation}
 \nu_w(H)=\max_{z\in\MP(H)}\sum_{e\in E(H)}w(e)z_e.
 \label{eq:matching-cert}
\end{equation}
If $H$ is bipartite, nonnegativity and the degree inequalities suffice~\cite{Schrijver2003}: summing over the smaller side of $H[S]$ gives each odd-set inequality.

\begin{theorem}
\label{thm:forest-contraction}
For every nonnegatively weighted linear forest $P$ and ambient matching $M$ on $V(P)$,
\[
 \nu_w(P)+\nu_w(P/M)\ge\frac58w(P).
\]
The constant is best possible.
\end{theorem}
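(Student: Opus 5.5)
The plan is to prove the inequality through fractional matching certificates. By~\eqref{eq:matching-cert} it suffices to exhibit $z^1\in\MP(P)$ and $z^2\in\MP(P/M)$, with copies of $P$-edges in $P/M$ identified with the original edges, such that
\[
 z^1_e+z^2_e\ge\tfrac58\qquad\text{for every } e\in E(P).
\]
Here an edge of $M\cap P$ becomes a loop, is discarded, and receives $z^2_e=0$. Nonnegativity of $w$ then gives
\[
 \nu_w(P)+\nu_w(P/M)\ge\sum_e w(e)\bigl(z^1_e+z^2_e\bigr)\ge\tfrac58w(P).
\]
Since $P$ is bipartite, $z^1$ only needs the degree constraints. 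The real work is making $z^2$ satisfy the odd-set constraints in the quotient, which need not be bipartite and has maximum degree at most four, as in the proof of \Cref{lem:contraction-matching-bound}.

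\emph{First attempt at the certificates.} In $P$, put $z^1_e=5/8$ on edges of $M\cap P$, $z^1_e=3/8$ on path edges adjacent to such an edge, and $z^1_e=1/2$ elsewhere. Where one edge is adjacent to two $M$-edges, $3/8+5/8=1$ at each endpoint, so the degree constraints hold. In the quotient, put $z^2_e=1/4$ on the edges that got $3/8$ and $z^2_e=1/8$ on the remaining surviving edges. The per-edge sum is then $5/8$ everywhere.

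\emph{Degree constraints in $P/M$.} An unmatched vertex has at most two incident copies, each of value at most $1/4$. A supervertex $[uv]$ with $uv\in M\cap P$ has at most two external copies. A supervertex with $uv\in M\setminus P$ has at most four copies, each at most $1/4$, for a total at most $1$.

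\emph{Odd-set constraints (main obstacle).} For odd $S$ with $|S|\ge3$, I would bound $z^2(E_{P/M}(S))$ by half the sum of vertex loads inside $S$. The constraint can fail only if nearly every vertex of $S$ is saturated internally. Saturation forces degree four with all copies at value $1/4$, hence vertices of type $M\setminus P$ whose four incident edges each neighbour an $M\cap P$-edge. I would do a counting or discharging argument to show that such configurations cannot close up into an odd set without loss $1/2$. The idea is that each weight-$1/4$ edge needs an adjacent $M\cap P$-edge, whose contracted endpoint has load at most $1/2$. If the naive assignment fails on some small odd structure, the fallback is to average $z^2$ with a second certificate: take the nonzero matching part via Tutte--Berge as in \Cref{lem:degree-bounded-matching}, and mix with coefficients chosen to keep the per-edge bound in aggregate rather than edge by edge.

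\emph{Tightness.} For sharpness, I would build a short weighted path with two weight classes. Choose $M$ to pair endpoints so that $P/M$ contains a triangle or $K_5$-type odd structure, and to contract heavy edges. The target is $\nu_w(P)$ near $w(P)/2$ plus a small amount, with $\nu_w(P/M)$ capturing only about $1/8$ of the weight, tuned so that the ratio tends to $5/8$. This mirrors how the $5/8$ weighted contraction limit is attained in Section~\ref{sec:matching-limits}.
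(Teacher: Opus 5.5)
\textbf{Main gap: the odd-set constraints.} Your certificates are the paper's own. Your $z^1,z^2$ are exactly its $x,y$: $5/8$ on $D=E(P)\cap M$, $3/8$ and $1/4$ on surviving edges meeting $D$, and $1/2$ and $1/8$ on the rest. Your degree checks in $P$ and $P/M$ also match. The gap is that you stop at the step you call the main obstacle: the odd-set inequalities for $z^2$ in $P/M$ are never proved.

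The load bound $z^2(E(S))\le\frac12\sum_{v\in S}z^2(\delta(v))$ gives only $s/2$, and you do not carry out the discharging. Your fallback is not usable as stated. \Cref{lem:degree-bounded-matching} is a cardinality count, and mixing certificates ``in aggregate'' gives no control valid for arbitrary nonnegative $w$.

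The paper closes this step using acyclicity. Take an odd $S$ with $s=|S|\ge3$ and put $U=\pi_M^{-1}(S)$, so $|U|\le2s$. Let $q=|E_{P/M}(S)|$, let $d_S$ count the $D$-supervertices in $S$, and let $r_S$ count the special edges inside $S$.
\begin{itemize}
\item The forest $P[U]$ contains the $q$ surviving edges and the $d_S$ contracted edges, so $q+d_S\le2s-1$.
\item Each special edge meets a $D$-supervertex of $S$, which has at most two incident copies, so $r_S\le2d_S$.
\end{itemize}
Hence $8z^2(E(S))=q+r_S\le2s-1+d_S\le3s-1\le4s-4$.

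Your local count can also be finished without acyclicity. Special edges are exactly the copies incident with $D$-supervertices. Let $S$ contain $a$ supervertices of pairs in $M\setminus P$ and $u$ unmatched vertices. Then $8z^2(E(S))\le4d_S+2a+u$, which is at most $4s-4$ whenever $2a+3u\ge4$. In the remaining cases at most one vertex of $S$ is not a $D$-supervertex. Then every internal edge is special, and a degree count at the $D$-supervertices gives $2r_S\le4s-4$.

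\textbf{Second gap: sharpness.} You leave this as a plan, and the mechanism you propose (a triangle or $K_5$-type odd structure in $P/M$) is not the one the paper uses. In the paper's example, a degree constraint is tight at the supervertex of an ambient pair through which every surviving edge passes. Take $P=v_0\cdots v_7$ with consecutive weights $1,1,1,2,1,1,1$ and $M=\{v_0v_1,v_2v_5,v_3v_4,v_6v_7\}$.
\begin{itemize}
\item $w(P)=8$ and $\nu_w(P)=4$.
\item Every surviving edge has weight one and meets $[v_2v_5]$, so $\nu_w(P/M)=1$.
\end{itemize}
This gives exact equality $5=\frac58\cdot8$. Appealing to the weighted contraction limit in Section~\ref{sec:matching-limits} would be circular, since \Cref{thm:weighted-barrier} is built from this extremal example.
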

\begin{proof}
We construct fractional matchings $x$ in $P$ and $y$ in $P/M$ with $x_e+y_e=5/8$ on every original edge. Here $y_e$ refers to its surviving copy, or is zero when the edge is contracted. Equation~\eqref{eq:matching-cert} then proves the bound.

Put $D=E(P)\cap M$. Call a surviving edge \emph{special} if it shares an endpoint with an edge of $D$, and \emph{ordinary} otherwise. A \emph{$D$-supervertex} is the contraction of an edge of $D$. Set
\[
 x_e=\begin{cases}5/8,&e\in D,\\3/8,&e\text{ special},\\1/2,&e\text{ ordinary},\end{cases}
 \qquad
 y_e=\begin{cases}1/4,&e\text{ special},\\1/8,&e\text{ ordinary}.\end{cases}
\]
\emph{Before contraction.} At a vertex incident with $D$, the $x$-sum is at most $5/8+3/8=1$; elsewhere it is at most $1/2+1/2$. Since $P$ is bipartite, $x\in\MP(P)$.

\emph{After contraction.} Put $H=P/M$. An uncontracted vertex or $D$-supervertex has at most two incident edges, of $y$-sum at most $1/2$. Any other supervertex has at most four, each of coefficient at most $1/4$. Thus the degree inequalities hold.

For an odd set $S\subseteq V(H)$, put $s=|S|\ge3$ and $U=\pi_M^{-1}(S)$. Let $d_S$ count its $D$-supervertices and put $q=|E_H(S)|$. The forest $P[U]$ contains exactly these $q$ surviving edges and the $d_S$ contracted edges. Hence $q+d_S\le |U|-1\le2s-1$. Let $r_S$ count the special edges in $E_H(S)$. Each meets a $D$-supervertex, with at most two per such vertex, so $r_S\le2d_S$. Therefore
\[
 8y(E_H(S))=2r_S+(q-r_S)=q+r_S
 \le2s-1+d_S\le3s-1\le4s-4.
\]
Thus $y(E_H(S))\le(s-1)/2$ and $y\in\MP(H)$. The combined coefficient is $5/8$ on every edge, completing the certificate.

For equality, take $P=v_0\cdots v_7$ with consecutive weights $1,1,1,2,1,1,1$ and $M=\{v_0v_1,v_2v_5,v_3v_4,v_6v_7\}$. Then $w(P)=8$, $\nu_w(P)=4$, and $\nu_w(P/M)=1$: all surviving edges have weight one and meet $[v_2v_5]$. Here $v_2v_5$ is an ambient pair, not an edge of $P$.
\end{proof}
\subsection{Weighted iteration and tour completion}
\label{sec:weighted-barrier}
\begin{theorem}
\label{thm:weighted-barrier}
Every exact weighted implementation of the two-round contraction rule or its frozen iteration has path-cover ratio infimum $5/8$. There is a family of complete weighted graphs with unique first and second maximum-weight matchings on which every path-preserving tour completion has ratio tending to $5/8$, even after further frozen rounds.
\end{theorem}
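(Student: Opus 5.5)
The proof has two parts: a lower bound of $5/8$ for every exact execution, and a complete-graph family showing the bound cannot be improved.

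\emph{Lower bound.} Fix an optimum weighted path cover $P$ of $G$, and let $M_1$ be the exact first matching. Then $w(M_1)=\nu_w(G)\ge\nu_w(P)$. Consider the quotient $G/M_1$ after the rule's deletions, namely the unused edges at internal path vertices. Since $M_1$ is a matching, its edges have no internal vertices, so round two acts on the plain quotient $G/M_1$. The copies of $P$'s surviving edges form a subgraph $P/M_1$ of $G/M_1$, where $M_1$ is viewed as an ambient matching on $V(P)=V$. Hence $w(M_2)=\nu_w(G/M_1)\ge\nu_w(P/M_1)$. The lifted union of $M_1$ and $M_2$ is a path cover, as in Corollary~\ref{cor:two-round-unweighted}; its edges are distinct, and its weight is $w(M_1)+w(M_2)$. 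Theorem~\ref{thm:forest-contraction} then gives at least $\frac58 w(P)=\frac58\rho_w(G)$. Frozen later rounds only add edges with nonnegative weight, so the bound persists.

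\emph{Upper-bound family.} I would build the family from the equality instance of Theorem~\ref{thm:forest-contraction}. Take the path $v_0\cdots v_7$ with weights $1,1,1,2,1,1,1$. The intended first matching is $M_1=\{v_0v_1,v_2v_5,v_3v_4,v_6v_7\}$; it has weight $4$ after adding $v_2v_5$ as a genuine edge of some weight. The second round should collect only weight about $1$, and the path optimum is $8$.

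To make this a complete graph I would do four things.
\begin{enumerate}
\item Assign weight close to $0$ to every pair except the path edges and $v_2v_5$.
\item Perturb the weights by small amounts $t$, as in Theorem~\ref{thm:three-round}, so that the chosen $M_1$ is the unique maximum-weight matching. Since $\nu_w$ of the path is $4$, I would give $v_2v_5$ weight $1+\epsilon$ and break ties among the perfect matchings.
\item Check that in the quotient every positive edge meets $[v_2v_5]$, so the unique $M_2$ is a single edge of weight about $1$.
\item Check that after contraction of the resulting paths, all remaining positive edges become loops or are deleted at internal vertices, so further rounds add essentially zero weight.
\end{enumerate}
The selected forest then has weight about $5$ against an optimum of about $8$. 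A path-preserving tour adds only near-zero edges, and the tour optimum is at least the path optimum, so the ratio tends to $5/8$.

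The main obstacle is the gadget itself. Uniqueness of $M_1$ requires $v_2v_5$ to beat the alternative perfect matchings on the path, such as the one using $v_3v_4$ of weight $2$. At the same time, the tiny weights on all other pairs must not let round two, or later rounds, recover weight. If one copy of the gadget is too tight, I would fall back on disjoint copies chained with near-zero edges so that boundary effects vanish in the limit. I would also recheck that the degree-deletion rule removes the path edges $v_1v_2,v_4v_5,v_5v_6$ that are left unused after the second round.
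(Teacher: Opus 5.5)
Your lower bound is correct and is the paper's argument: $w(M_1)\ge\nu_w(P)$ and $w(M_2)\ge\nu_w(P/M_1)$ with $M_1$ ambient on $V(P)$, then \Cref{thm:forest-contraction}. The upper-bound family, however, does not work as proposed.

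\textbf{The weight of $v_2v_5$.} In the equality instance $v_2v_5$ is only an ambient pair. In the algorithm, $M_1$ collects its weight, so that weight must tend to zero. With your choice $1+\epsilon$, $w(M_1)=5+\epsilon$, and the output is about $6$ against $\rho_w=8$, a ratio tending to $3/4$. Your tallies (weight $4$ for $M_1$, about $5$ for the forest) implicitly assume $w(v_2v_5)\approx0$.

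The uniqueness obstacle behind this choice does not exist. Suppose $v_2v_5$ has any weight $t>0$ and all other non-path pairs have weight zero. Then every matching avoiding $v_2v_5$ weighs at most $\nu_w(P)=4$. Meanwhile $M_1$ weighs $4+t$, is perfect, and is the only matching of that weight.

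\textbf{Steps 3 and 4 fail on the bare eight-vertex gadget.} This is the real obstruction. The four unit edges $v_1v_2,v_2v_3,v_4v_5,v_5v_6$ all meet $[v_2v_5]$, so round two is a four-way tie, not a unique optimum.

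Worse, every choice leaves a unit edge between two path endpoints. With $v_1v_2$, the paths are $v_0v_1v_2v_5$, $v_3v_4$, $v_6v_7$, and $v_4v_5,v_5v_6$ meet no internal vertex. With $v_2v_3$, the edge $v_5v_6$ survives, and the other two choices are symmetric. Hence a third frozen round, and likewise a path-preserving tour completion, recovers another unit of weight, giving ratio tending to $3/4$. Chaining copies by near-zero edges cannot help, because this recovery happens inside each gadget.

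\textbf{The missing idea} is a pendant attachment at $v_4$ and $v_6$ that both breaks the round-two tie and saturates these vertices. Add $a,a',b,b'$, give $v_2v_5,v_4a,aa',v_6b,bb'$ weight $t$, and give all other non-path pairs weight zero.
\begin{itemize}
\item $M_1=\{v_0v_1,v_2v_5,v_3v_4,v_6v_7,aa',bb'\}$ is the unique optimum.
\item In the quotient, only $v_1v_2$ leaves both $v_4a$ and $v_6b$ available. So $M_2=\{v_1v_2,v_4a,v_6b\}$ is the unique optimum, and it is perfect in the quotient, so zero-weight edges add no further choices.
\item Now $v_2,v_4,v_6$ are internal, so $v_2v_3,v_4v_5,v_5v_6$ are all deleted.
\end{itemize}
Every path-preserving completion then weighs $5+5t$, while $\rho_w=\OPT_{\mathrm{MaxTSP}}=8+2t$, which gives the ratio tending to $5/8$.
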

\begin{proof}
For an optimum path cover $P^*\subseteq G$, the matching optima in $G$ and $G/M_1$ are at least those in $P^*$ and $P^*/M_1$. The forest inequality gives the $5/8$ lower bound, and later rounds cannot decrease the selected weight.

Fix $0<t<1/3$. Let $P=v_0\cdots v_7$ have the weights from \Cref{thm:forest-contraction}. Add vertices $a,a',b,b'$ and give the edges $v_2v_5,v_4a,aa',v_6b,bb'$ weight $t$. Keep the path weights and give all other pairs weight zero. Write $K_t$ for this complete graph.

\emph{First round.} The unique maximum-weight matching is
\[
 M_1=\{v_0v_1,v_2v_5,v_3v_4,v_6v_7,aa',bb'\},\qquad w(M_1)=4+3t.
\]
A matching uses at most three $t$-edges: they form an isolated edge and two two-edge paths. Its path contribution is at most four, or at most three if it uses $v_4a$ or $v_6b$, since deleting vertex $v_4$ or $v_6$ from $P$ leaves maximum matching weight three. Equality requires $v_2v_5,aa',bb'$ and the unique remaining path matching $\{v_0v_1,v_3v_4,v_6v_7\}$. This matching is perfect, so zero-weight edges introduce no further choices.

\emph{Second round.} Every surviving unit-weight edge meets $[v_2v_5]$, so a quotient matching contains at most one. Without such an edge it weighs at most $2t$. Only $v_1v_2$ leaves both attachment edges $v_4a,v_6b$ available. Hence the unique quotient optimum lifts to
\[
 M_2=\{v_1v_2,v_4a,v_6b\},\qquad w(M_2)=1+2t.
\]
It is perfect in the quotient. Its union with $M_1$ consists of
\[
 v_0-v_1-v_2-v_5,\qquad v_3-v_4-a-a',\qquad v_7-v_6-b-b'
\]
and weighs $5+5t$. The only unused positive edges are $v_2v_3,v_4v_5,v_5v_6$. Each meets an internal path vertex, so none survives the next deletion step or belongs to a tour containing these paths. Further frozen rounds add no positive weight, and every path-preserving completion weighs $5+5t$.

\emph{Optimum.} The path $P$ together with $aa'$ and $bb'$ has weight $8+2t$ and admits a zero-weight completion. Any subgraph of maximum degree two that uses $v_2v_5$, $v_4a$, or $v_6b$ must omit an edge of $P$, losing at least one. Its weight is at most $7+5t<8+2t$. Without those three edges, its weight is at most $8+2t$. Therefore
\[
 \rho_w(K_t)=\OPT_{\mathrm{MaxTSP}}(K_t)=8+2t,\qquad
 \frac{5+5t}{8+2t}\longrightarrow\frac58.
\]
On the positive-edge subgraph, the rule stops after round two with the same path-cover optimum.
\end{proof}

The universal lower bound is for path covers; for tours, the theorem supplies a tightness obstruction for path-preserving completion.

\subsection{Approximate calls and \texorpdfstring{$(1,2)$}{(1,2)}-TSP}
\label{app:contraction-variants}
For error-$\delta$ calls, the two-round lower bounds are $\frac23(1-\delta)$ unweighted by \Cref{cor:two-round-unweighted} and $\frac58(1-\delta)$ weighted by \Cref{thm:forest-contraction}. Exact optima also satisfy these contracts, so the contracts alone cannot exclude the bad executions. A particular implementation may impose additional selection rules.

For the unweighted graph $G_r$, give its edges cost one and every other pair cost two. The output has $4r$ cost-one edges in $2r$ paths. No cost-one edge joins endpoints of distinct output components, so every path-preserving completion costs $8r$. Its displayed Hamiltonian path gives a tour of cost at most $6r+1$; vertex $y_r$ has only one cost-one neighbor, so this is optimal. Thus the ratio approaches $4/3$ for this legal execution.

\section{Supporting comparisons and extensions}
\label{app:comparisons}
\subsection{Conditional comparison via capacitated \texorpdfstring{$b$}{b}-matching}
\label{app:bmatching}
A capacitated $b$-matching permits incident multiplicity at most $b_v$ at vertex $v$ and multiplicity at most $c_e$ on edge $e$. Thus $b_v=2,c_e=1$ gives integral simple $2$-matchings. Let $Z$ be the optimum integral weight and put $R(n,\eta)=O(\eta^{-4}\log^2(1/\eta)\log n)$.

The fractional algorithm in~\cite{AhnGuha2014} makes $R(n,\eta)$ passes. Its additional workspace is $O(\min\{m,\sum_v b_v\}\poly(1/\eta,\log n))$. It requires polynomially bounded integers and $3/\sqrt n<\eta\le1/16$. Its feasible output $y$ has weight $(1-O(\eta))Z$ and support $E_+=\{e:y_e>0\}$ satisfying
\begin{equation}
 \sum_{e\in E_+}w(e)c_e\le16R(n,\eta)Z.
 \label{eq:unit-support-source}
\end{equation}
The integral rounding uses this support, takes $O(|E_+|/\eta^2)$ space, and returns weight at least $(1-\eta)\sum_e w(e)y_e-\eta Z$. The following weighted comparison assumes that the support and its values can be stored.

\begin{proposition}
\label{prop:bmatching}
If the fractional output and edge weights are retained in $n\poly(1/\eta,\log n)$ words with $|E_+|$ within the same bound, its rounding and cycle breaking give $2/3-\eps$ for weighted path cover and Max-TSP in $O(\eps^{-4}\log^2(1/\eps)\log n)$ passes and $n\poly(1/\eps,\log n)$ words.
\end{proposition}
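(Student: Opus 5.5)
The plan has three stages: convert the fractional $2$-matching into an integral simple $2$-matching of weight close to $Z$, open its cycles at a two-thirds loss as in the offline degree-two-relaxation approach, and then account for weight normalization and the Max-TSP completion the same way as in \Cref{thm:streaming}.

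\emph{Normalization and parameter choice.} First I would normalize the weights. One pass finds $w_{\max}$. As in the proof of \Cref{thm:streaming}, I would set $\tau=\theta w_{\max}/n$ and $\widehat w=\lfloor w/\tau\rfloor$. This makes the weights polynomially bounded integers in $[1,\lfloor n/\theta\rfloor]$, and it costs at most a $(1-\theta)$ factor relative to $\rho_w(G)$ and also relative to $\OPT_{\mathrm{MaxTSP}}$. I would then fix $\eta=\Theta(\eps)$, small enough that $\eta\le1/16$. Instances with $n$ so small that $3/\sqrt n<\eta$ fails have $n=O(\eps^{-2})$. For those, as in the small-instance branch of \Cref{thm:streaming}, I would store the rounded graph in $n\poly(1/\eps)$ words and solve the relaxation offline.

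\emph{Fractional solve and rounding.} Next I would run the capacitated $b$-matching algorithm of~\cite{AhnGuha2014} with $b_v=2$ and $c_e=1$, for $R(n,\eta)=O(\eps^{-4}\log^2(1/\eps)\log n)$ passes. It returns a feasible $y$ of weight at least $(1-O(\eta))Z$, where $Z$ is the optimum rounded simple $2$-matching weight. Here I use the hypothesis of the proposition: the support $E_+$ and the values $y_e$ fit in $n\poly(1/\eta,\log n)$ words. Under that hypothesis the integral rounding runs in memory in $O(|E_+|/\eta^2)$ space. It returns a simple $2$-matching $Q$ of weight at least $(1-\eta)(1-O(\eta))Z-\eta Z=(1-O(\eta))Z$.

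\emph{Cycle breaking and the final bound.} I would then delete a lightest edge from each cycle of $Q$. Every cycle in a simple graph has at least three edges, so this keeps at least $2/3$ of $\widehat w(Q)$, as in~\cite{MoranNewmanWolfstahl1990}. The comparison with the optimum is direct. An optimum path cover is a simple $2$-matching, so $Z\ge\widehat\rho$, and a rounded optimum tour is feasible as well. Chaining the factors gives weight at least $\frac23(1-O(\eta))(1-\theta)\rho_w(G)$. For tours I would additionally use $\rho_w\ge(1-1/n)\OPT_{\mathrm{MaxTSP}}$ when $n$ is large; for small $n$ the tour-feasibility of the relaxation gives the same bound directly. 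Completing the paths with arbitrary remaining pairs does not decrease the weight. Choosing the constants in $\eta=\Theta(\eps)$ and $\theta=\Theta(\eps)$ then brings the total below $\eps$.

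\emph{Main obstacle.} The delicate step is the bookkeeping between the conditions stated for~\cite{AhnGuha2014} and the target resources. I have to check three things. The rounding error $\eta Z$ must be multiplicative relative to the optimum. The range $3/\sqrt n<\eta$ must be reconciled with the offline small-$n$ branch. Finally, the stored support, together with the $O(|E_+|/\eta^2)$ rounding workspace and $O(n)$ words for cycle breaking, must stay within $n\poly(1/\eps,\log n)$ words. The pass count stays $O(\eps^{-4}\log^2(1/\eps)\log n)$ because normalization adds only one extra pass and all later steps run in memory.
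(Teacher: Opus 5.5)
Your proposal is correct and takes essentially the same route as the paper's proof: you normalize with $\theta=\Theta(\eps)$ as in \Cref{thm:streaming}, run the fractional solver on the normalized instance under the assumed support bound, round in memory, delete one lightest edge per cycle, and store and solve the relaxation exactly when $n\le 9/\eta^2$. The only superfluous step is invoking $\rho_w(G)\ge(1-1/n)\OPT_{\mathrm{MaxTSP}}$ for tours, because, as you also note, a rounded optimum tour is itself feasible for the degree-two relaxation, so $Z$ bounds it directly for all $n$.
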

\begin{proof}
Normalize with error $\theta=\Theta(\eps)$ as in \Cref{thm:streaming}, and apply the assumed support interface to the normalized instance. The source's integral rounding uses only the stored support, giving value $(1-O(\eta))Z$ without another input scan. A path cover or tour is feasible for the degree-two relaxation. Deleting a lightest edge from each cycle retains at least two thirds of its weight; complete the paths for Max-TSP. Choose $\eta=\Theta(\eps)$ sufficiently small. If $n\le9/\eta^2$, store the rounded graph and solve the relaxation exactly; otherwise the source's size condition holds.
\end{proof}
The workspace bound alone does not imply the support hypothesis. Appendix~\ref{app:unit-capacity-transfer} establishes it for unit weights using~\eqref{eq:unit-support-source}.

\subsection{Compatibility of two attachments}
\label{app:attachments}
Removing $uv$ from a cycle exposes exactly $u,v$. Two distinct attachments are therefore compatible with one cut if and only if they are adjacent. Attaching leaves to opposite vertices of a four-cycle requires two cuts.

More generally, let $H=A\cup B$ have $k$ cycle components. Suppose connectors $J$ give a linear forest on its components, with one quotient edge per connector. Every attachment to a nontrivial path must be at an endpoint; if it is used twice, the two endpoints must be distinct. For a cycle used twice, require attachments at distinct adjacent vertices. A cycle used once may attach at any vertex; an isolate may have two incident connectors. Then opening cycles compatibly and adding $J$ gives a linear forest of size $|A|+|B|-k+|J|$. Indeed, all original-vertex degrees are at most two, and a cycle in the lifted graph would induce a cycle in the component forest. An unconstrained degree-two relaxation on components omits these endpoint conditions.

\subsection{One matching call and greedy extension}
\label{app:seeded}
Let $A$ be a matching in $G$ and $F\supseteq A$ a maximal path cover. Let $I$ contain its internal vertices, and let $q$ count nontrivial components, so $|F|=|I|+q$. At most $2|I|$ optimum edges meet $I$. By maximality, every other optimum edge joins the two endpoints of one component of $F$, giving at most $q$ more. Thus $\rho(G)\le2|I|+q=|F|+|I|$.
On each nontrivial $\ell$-edge component, at most $\lceil\ell/2\rceil$ edges belong to $A$, and $\ell-1\le2(\ell-\lceil\ell/2\rceil)$. Isolates contribute zero, so $|I|\le2(|F|-|A|)$ and $\rho(G)\le3|F|-2|A|$. Also $|I|\le|F|$, giving $\rho(G)\le2|F|$. Therefore
\[
 |F|\ge\max\{\rho(G)/2,\ (\rho(G)+2|A|)/3\}.
\]
An error-$\delta$ matching gives $|F|\ge(2-\delta)\rho(G)/3$. One greedy extension scan constructs such an $F$, in $T_u(n,3\eps)+1$ passes at target $2/3-\eps$. Greedy weighted path extension itself is classical~\cite{MoranNewmanWolfstahl1990}.
After cycle opening, the forest need not contain $A$. The seed bound therefore does not by itself improve the joining guarantees.

\subsection{Hardness of near-optimal value estimation}
\label{app:hardness}
Suppose that for every fixed $\eta>0$ a polynomial-time algorithm returns a number $s$ with $(1-\eta)\rho(G)-\eta n\le s\le\rho(G)$. By \Cref{lem:tsp-conversion}, $\OPT_{12}\le2n-s\le(1+2\eta)\OPT_{12}+1$. Given target $\eps$, choose $\eta=\eps/4$ and handle $n<2/\eps$ by exhaustive search; this is constant-size work at fixed accuracy. The resulting TSP value estimator has factor $1+\eps$. The gap form of APX-hardness of $(1,2)$-TSP~\cite{PapadimitriouYannakakis1993} rules out these deterministic estimators unless $P=NP$, or bounded-error randomized estimators unless $NP\subseteq BPP$. The same holds for explicit covers. This computational restriction is distinct from the unconditional pair-query lower bound in~\cite{ChenKannanKhanna2020}.

\subsection{Failure of compensation for a \texorpdfstring{$2$}{2}-matching}
The compensation lemma uses vertex-disjoint alternating cycles of two matchings. If its second argument is replaced by an arbitrary simple $2$-matching, even a triangle-free one, the unchanged inequality is false. Take $A=\varnothing$, $B=E(C_5)$, unit weights, and a four-edge path $P\subset C_5$. The natural cycle-breaking loss is one, yet $2\nu(G-A)=4<|P|-|A|+1=5$. The original lemma allows arbitrary $A$, so this disproves the proposed extension.

\section{Component matching: cycle lengths and access bounds}
\label{app:component-extensions}
Let $Q$ be an explicit simple $2$-matching in an unweighted graph, with notation as in Section~\ref{sec:optimized-joining}. In particular, $k$ counts cycles, $f=|Q|-k$, and $h_*=\nu(K_Q)$.

\subsection{Cycle lengths and the starting relaxation}
For each integer $g\ge3$, define
\[
 \begin{aligned}
 d_g(Q)&=\sum_{C\text{ a cycle of }Q}(g-|V(C)|)_+,\\
 s_g(Q)&=\sum_{C\text{ a path of }Q}|E(C)|
       +\sum_{C\text{ a cycle of }Q}(|V(C)|-g)_+.
 \end{aligned}
\]
Isolates contribute zero. Thus $|Q|=gk+s_g(Q)-d_g(Q)$.
\begin{theorem}
\label{thm:component-defect}
For any simple $2$-matching $Q$ and integer $g\ge3$,
\begin{equation}
 (g+1)(f+h_*)\ge(g-1)|Q|+\rho(G)-d_g(Q).
 \label{eq:component-defect}
\end{equation}
For $0<\gamma\le1$, a ratio-$\gamma$ matching in $L_Q$, projected and opened as in Section~\ref{sec:optimized-joining}, therefore returns
\[
 |F|\ge\frac{\gamma}{g+1}
 \bigl((g-1)|Q|+\rho(G)-d_g(Q)\bigr).
\]
In particular, if $0\le\theta<1$, $|Q|\ge(1-\theta)\rho(G)$, and $Q$ has no cycle shorter than $g$, the returned ratio is at least $\gamma\frac{g-(g-1)\theta}{g+1}$.
\end{theorem}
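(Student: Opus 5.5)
The plan is to follow the proof of \Cref{lem:optimized-blocking}, but to charge each component of $Q$ against the parameter $g$ rather than against the two matchings. The starting point is \Cref{lem:component-deficiency}. It supplies a set $S$ with $|S|\le h_*$ and
\[
 \rho(G)\le f+2h_*+i+\sum_{C\in S}(q_C-1).
\]
Compare this with the target \eqref{eq:component-defect} rewritten as $\rho(G)\le(g+1)f+(g+1)h_*-(g-1)|Q|+d_g(Q)$. It suffices to show
\[
 i+\sum_{C\in S}(q_C-1)\le gf-(g-1)|Q|+d_g(Q)+(g-1)h_* .
\]
Since $f=|Q|-k$ and $|Q|=gk+s_g(Q)-d_g(Q)$, we get $gf-(g-1)|Q|=|Q|-gk=s_g(Q)-d_g(Q)$. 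The $d_g$ terms cancel, and because $|S|\le h_*$ the goal reduces to the purely componentwise inequality
\[
 i+\sum_{C\in S}(q_C-1)\le s_g(Q)+(g-1)|S|.
\]

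I would verify this reduced inequality component by component, splitting each side by component type.

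\emph{Nontrivial paths.} An $\ell$-edge path contributes $\ell-1$ internal vertices to $i$. If it lies in $S$ it also contributes $q_C-1=1$, since only its two endpoints are available. Its total is at most $\ell$, which is exactly its contribution to $s_g(Q)$.

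\emph{Isolates.} These contribute zero to both sides.

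\emph{Cycles outside $S$.} These contribute nothing to the left side and a nonnegative amount to the right side.

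\emph{Cycles in $S$.} Every vertex of a cycle is available, so $q_C-1=|V(C)|-1$. This is at most $(|V(C)|-g)_++(g-1)$, which is the cycle's term in $s_g(Q)$ plus its share of $(g-1)|S|$.

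Summing these cases gives the reduced inequality, and hence \eqref{eq:component-defect}.

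The algorithmic bound then follows from \eqref{eq:whole-forest-transfer}, which gives $|F|\ge\gamma(f+h_*)$, combined with \eqref{eq:component-defect}. For the final claim, the hypothesis that $Q$ has no cycle shorter than $g$ gives $d_g(Q)=0$. Substituting $|Q|\ge(1-\theta)\rho(G)$ then yields $\gamma\bigl((g-1)(1-\theta)+1\bigr)/(g+1)=\gamma\bigl(g-(g-1)\theta\bigr)/(g+1)$.

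The only delicate step is the algebraic reduction. One must confirm that the identity $gf-(g-1)|Q|=s_g(Q)-d_g(Q)$ holds exactly, so that the $d_g$ terms cancel and the charge of $g-1$ per set in $S$ absorbs the short cycles. After that reduction, the case check is routine.
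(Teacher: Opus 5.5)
Your proposal is correct and follows essentially the same route as the paper: take $S$ from \Cref{lem:component-deficiency}, bound $i+\sum_{C\in S}(q_C-1)$ componentwise by $s_g(Q)+(g-1)|S|\le s_g(Q)+(g-1)h_*$, use $f=|Q|-k$ and $|Q|=gk+s_g(Q)-d_g(Q)$, and finish with \eqref{eq:whole-forest-transfer}. The only differences are that you do the algebraic reduction before the case check rather than after, and your path case ($\ell-1$ internal vertices plus one for membership in $S$, total $\ell$) is stated more precisely than the paper's wording.
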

\begin{proof}
Use the set $S$ from \Cref{lem:component-deficiency}. A cycle of length $\ell$ contributes $q_C-1=\ell-1\le(g-1)+(\ell-g)_+$. A path contributes at most one to the sum over $S$, and its internal-vertex count is at most its number of edges. An isolate contributes zero. Therefore
\[
 i+\sum_{C\in S}(q_C-1)\le s_g(Q)+(g-1)|S|
 \le s_g(Q)+(g-1)h_*.
\]
Equation~\eqref{eq:component-deficiency} now gives $\rho(G)\le f+s_g(Q)+(g+1)h_*$. Rearranging and using $f=|Q|-k$ and $|Q|=gk+s_g(Q)-d_g(Q)$ gives
\[
 (g+1)(f+h_*)\ge gf+\rho(G)-s_g(Q)
 =(g-1)|Q|+\rho(G)-d_g(Q).
\]
Apply \eqref{eq:whole-forest-transfer} to obtain the approximate-call form. The last statement has $d_g(Q)=0$.
\end{proof}
Offline, a maximum triangle-free simple $2$-matching has $|Q|\ge\rho(G)$, so $g=4$ gives $4/5$. That relaxation is polynomial-time solvable~\cite{Paluch2025}, but the resulting ratio is below the general $6/7$ guarantee in~\cite{BermanKarpinski2006}. If the input has girth at least $g$, an ordinary maximum simple $2$-matching gives $g/(g+1)$. Here girth is the shortest cycle length, infinite for a forest; the theorem itself requires the cycle-length condition only on $Q$.

\subsection{Unit-weight streaming bounds via integral rounding}
\label{app:unit-capacity-transfer}
Choose $0<\eta\le1/16$ and use the fractional algorithm and rounding from Appendix~\ref{app:bmatching} with $b_v=2$ and $c_e=w(e)=1$. Integral solutions are simple $2$-matchings, with optimum $Z\le n$. Since $w(e)c_e=1$ for every edge, \eqref{eq:unit-support-source} gives $|E_+|=\sum_{e\in E_+}w(e)c_e\le16R(n,\eta)Z\le16R(n,\eta)n$. Both the edges and fractional values therefore fit in $n\poly(1/\eta,\log n)$ words. For general weights and capacities, the source controls a weighted sum rather than the support size; here no additional assumption is needed.

Rounding uses only the stored support and returns $Q$ of size $(1-O(\eta))Z\ge(1-O(\eta))\rho(G)$. Its $O(|E_+|/\eta^2)$ workspace remains within the stated bound. Apply the same source algorithms to $L_Q$, now with $b_v=c_e=w(e)=1$, to obtain a matching of ratio $1-O(\eta)$. Since $L_Q$ has at most $2n$ vertices, the same support argument applies; each virtual pass costs one original scan by \Cref{cor:joining-streaming}.

All weights and capacities are one or two, so the source's requirement of polynomially bounded integers holds. For $n>9/\eta^2$, both calls satisfy the source's size condition, because $L_Q$ retains the original vertices. Otherwise store $G$ and solve both matching problems offline. This branch uses $n\poly(1/\eta)$ words, including polynomial auxiliary workspace.

\begin{corollary}
\label{cor:three-quarters}
For $0<\eps<1/3$, the guarantees in~\cite{AhnGuha2014} and component joining give an unweighted $(3/4-\eps)$-approximation in $O\bigl(\eps^{-4}\log^2(1/\eps)\log n\bigr)$ passes and $n\poly(1/\eps,\log n)$ words.
On triangle-free input graphs the ratio is $4/5-\eps$. More generally, for each fixed $g\ge3$, input girth at least $g$ gives $g/(g+1)-\eps$ in the same resource class. On arbitrary $(1,2)$-TSP instances the general $3/4-\eps$ bound yields factor $5/4+\eps+3/(4n)$.
\end{corollary}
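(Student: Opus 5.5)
The plan is to obtain $Q$ from the unit-capacity $b$-matching pipeline of Appendix~\ref{app:unit-capacity-transfer}, run a second approximate matching call on $L_Q$, and then apply \Cref{thm:component-defect} with suitable $g$. Fix $\eta=c\eps$ for a small constant $c$, with $\eta\le1/16$.

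\textbf{Step 1 (the relaxation $Q$).} Run the fractional algorithm with $b_v=2$ and $c_e=1$. By the support bound $|E_+|\le16R(n,\eta)n$ it fits in memory. Rounding then gives a simple $2$-matching $Q$ with $|Q|\ge(1-O(\eta))Z\ge(1-\theta)\rho(G)$, where $\theta=O(\eta)$, because any path cover is feasible for the relaxation.

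\textbf{Step 2 (matching in $L_Q$).} Build $L_Q$ virtually as in \Cref{cor:joining-streaming}. Apply the same source with $b_v=c_e=1$ to get a matching of ratio $\gamma=1-O(\eta)$. $L_Q$ has at most $2n$ vertices, so the support bound applies again. Each virtual pass costs one original scan, so the total is $O(\eta^{-4}\log^2(1/\eta)\log n)$ passes and $n\poly(1/\eps,\log n)$ words. For $n\le9/\eta^2$, store $G$ and solve both steps exactly offline.

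\textbf{Step 3 (ratio).} Apply \Cref{thm:component-defect}.
\begin{itemize}
\item For general graphs take $g=3$. Every cycle of a simple $2$-matching has length at least three, so $d_3(Q)=0$. The bound is $\gamma\frac{3-2\theta}{4}\ge3/4-\eps$ once $c$ is small.
\item For girth at least $g$, every cycle of $Q$ has length at least $g$, so $d_g(Q)=0$. This gives $\gamma\frac{g-(g-1)\theta}{g+1}\ge g/(g+1)-\eps$, and $g=4$ covers triangle-free inputs. Here $c$ may depend on fixed $g$.
\end{itemize}

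\textbf{Step 4 (TSP).} Apply \Cref{lem:tsp-conversion} with $\gamma'=3/4-\eps$ and $\eta'=0$. Completion costs at most $(5/4+\eps)\OPT_{12}+\gamma'$. Since $\gamma'\le 3/4$ and $\OPT_{12}\ge n$, the additive term is at most $\frac{3}{4n}\OPT_{12}$.

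The main obstacle I expect is the resource accounting in Steps 1--2. We must check that the source's preconditions ($3/\sqrt N<\eta$, polynomially bounded integers) hold for both calls: $L_Q$ contains the original vertices, so $n>9/\eta^2$ suffices for both. We must also check that the rounding needs no extra scans, so that the virtual stream of $L_Q$ adds only one scan per pass.
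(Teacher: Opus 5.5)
Your proposal is correct and follows essentially the same route as the paper: unit-capacity rounding (Appendix~\ref{app:unit-capacity-transfer}) supplies $Q$ with $\theta=O(\eta)$, and a second unit-capacity call on the virtual $L_Q$ gives $\gamma=1-O(\eta)$. Then \Cref{thm:component-defect} with $d_g(Q)=0$ gives $g/(g+1)-\eps$, and \Cref{lem:tsp-conversion} with $\OPT_{12}\ge n$ gives the tour factor; one minor note is that $g-(g-1)\theta\ge g(1-\theta)$, so your constant $c$ need not actually depend on $g$.
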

\begin{proof}
The two source calls give $\theta=O(\eta)$ and $\gamma=1-O(\eta)$ in \Cref{thm:component-defect}, hence ratio at least $g/(g+1)-O(\eta)$ whenever $Q$ has no cycle shorter than $g$. An ordinary simple $2$-matching has no cycles of length below three and inherits every girth promise of the input. Set $g=3$ for general graphs, $g=4$ for triangle-free inputs, and choose a sufficiently small constant multiple $\eta=\Theta(\eps)$ for the respective guarantee. The small-input branch uses exact calls. The tour factor follows from \Cref{lem:tsp-conversion} with $\lambda=3/4-\eps$ and $\OPT_{12}\ge n$.
\end{proof}

\subsection{Subquadratic estimation with a supplied \texorpdfstring{$2$}{2}-matching}
\label{app:conditional-component-query}
Store the component labels and positions in~\eqref{eq:component-path}. An original pair requires one query to $G$ only when both vertices are available in different components. Original--private adjacency is determined by path positions; private--private pairs are absent. Thus an auxiliary pair query uses at most one original query. Traversing the supplied $Q$ builds these records in $O(n)$ time and space.

\begin{theorem}
\label{thm:conditional-component-query}
Suppose an explicit simple $2$-matching $Q\subseteq G$ is supplied together with pair-query access to $G$. For every fixed $0<\eta<1$, an algorithm returns with high probability a scalar $v$ such that
\[
 f+h_*-\eta n\le v\le f+h_*\le\rho(G)
\]
in $n^{2-\Omega_\eta(1)}$ time and queries. In particular, for every integer $g\ge3$,
\[
 v\ge\frac{(g-1)|Q|+\rho(G)-d_g(Q)}{g+1}-\eta n.
\]
It need not recover a forest.
\end{theorem}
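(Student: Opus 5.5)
The plan is to reduce to additive matching estimation on the auxiliary graph $L_Q$, using \Cref{lem:auxiliary-matching}. That lemma gives $\nu(L_Q)=b_0+h_*$. Moreover $b_0$ is computable exactly from the supplied $Q$ in $O(n)$ time, since $b_0=\sum_C(q_C-1)$, and $f=|Q|-k$ is computable as well.

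The algorithm builds the component records in $O(n)$ time, as described just before the statement. It then runs the estimator of \Cref{thm:bks-interface} on $L_Q$, which has $N=n+b_0\le2n$ vertices, with accuracy $a=\eta/2$ and median amplification to failure probability $n^{-a_0}$. The estimator returns $z$ with $\nu(L_Q)-aN\le z\le\nu(L_Q)$, and the algorithm outputs $v=f+z-b_0$. The oracle for $L_Q$ is the pair oracle described above. Each auxiliary query costs at most one query to $G$ plus $O(1)$ table lookups. Same-component original pairs and unavailable vertices are answered without a query. So the total time is $\widetilde O_\eta(n^{2-\xi(\eta/2)})+O(n)=n^{2-\Omega_\eta(1)}$.

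\emph{Correctness.} On the success event, $f+h_*-\eta n\le f+h_*-aN\le v\le f+h_*$, using $aN\le(\eta/2)(2n)$.

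To get $f+h_*\le\rho(G)$, I take a maximum matching of $K_Q$ with $h_*$ edges. The construction in Section~\ref{sec:optimized-joining} (witness connectors, opening matched cycles at their attachment vertices) produces a path cover of size $f+h_*$. The cover has no cycles and no degree-three vertices, because connectors join disjoint pairs of paths and attach only at available vertices.

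The $g$-dependent bound then follows directly from \eqref{eq:component-defect} in \Cref{thm:component-defect}: divide by $g+1$ and subtract $\eta n$.

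\emph{Main obstacle.} The only delicate point is checking that the oracle is faithful to $L_Q$ at the stated cost. Private--original adjacencies must be exactly the consecutive positions in \eqref{eq:component-path}. An original pair must trigger a query only when both vertices are available and lie in distinct components, which is exactly the definition of an eligible connector. Since $L_Q$ is simple and the vertex count $N\le2n$ is known in advance, the interface of \Cref{thm:bks-interface} applies verbatim.

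It is also worth noting that the estimate only bounds $f+h_*$ from below up to the additive error $\eta n$. No forest is recovered, matching the final sentence of the statement.
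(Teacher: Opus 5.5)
Your proposal is correct and follows the paper's proof: estimate $\nu(L_Q)$ with \Cref{thm:bks-interface} at accuracy $\eta/2$ on at most $2n$ vertices, subtract the exactly known $b_0$ using \eqref{eq:auxiliary-value}, justify $f+h_*\le\rho(G)$ by explicit component joining, and read off the $g$-bound from \eqref{eq:component-defect}. The only difference is that the paper returns $f+\max\{0,z-b_0\}$, clipping the estimate at the achievable baseline $f$; your unclipped $f+z-b_0$ already satisfies the required two-sided bound, so this changes nothing.
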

\begin{proof}
Use \Cref{thm:bks-interface} on $L_Q$, at accuracy $\eta/2$ on at most $2n$ vertices, to obtain $z$ with $\nu(L_Q)-\eta n\le z\le\nu(L_Q)$. Return $v=f+\max\{0,z-b_0\}$. On the success event, \eqref{eq:auxiliary-value} gives $h_*-\eta n\le\max\{0,z-b_0\}\le h_*$. This includes $z<b_0$: then $h_*<\eta n$, and returning the baseline is sufficient. Maximum component joining gives an actual forest of value $f+h_*$, proving the upper bound by $\rho(G)$. Apply \eqref{eq:component-defect} for the last bound. Preprocessing and the constant-overhead oracle preserve the subquadratic runtime.
\end{proof}
The scalar estimator does not construct $Q$ or its component labels. The per-vertex matching oracle in~\cite{BhattacharyaKissSaranurak2023} has a different cost and does not remove the assumption that $Q$ is supplied.

\section{Greedy component joining and its limits}
\label{sec:greedy-joining}
A greedy joining scan gives $7/10-\eps$ with two matching calls. We use the available vertices, eligible connectors, and cycle-opening rule from Section~\ref{sec:optimized-joining}.

Compute matchings $A$ in $G$ and $B$ in $G-A$. Put $H=(V,A\cup B)$, and let $k$ count its cycle components. As in Section~\ref{sec:optimized-joining}, $f=|A|+|B|-k$ denotes the baseline forest size after cycle opening. Set $s=|B|-2k\ge0$.

\noindent\textbf{The joining scan.}
Mark every component of $H$ unused. Scan the original edges once, accepting an eligible connector when both components are unused, then mark them used. Let $J$ be the accepted connectors and $h=|J|$. Labels and availability refer to the fixed graph $H$. The component matching is maximal: a connector between two finally unused components would have been accepted.

Open cycles and add $J$ as in Section~\ref{sec:optimized-joining}, obtaining a valid linear forest $F^+$ of size $f+h$. The optional cleanup scan of \Cref{cor:joining-streaming} uses current component labels and yields a maximal forest $F^\star\supseteq F^+$.

\begin{figure}[hb]
\centering
\begin{tikzpicture}[x=0.66cm,y=0.64cm,every node/.style={font=\small},
 Ae/.style={mone,very thick},Be/.style={mtwo,thick},con/.style={black,thick,dashed}]
\begin{scope}
\coordinate(v0)at(-1.1,1);\coordinate(v1)at(0,1);\coordinate(v2)at(2,1);
\coordinate(v5)at(0,-0.5);\coordinate(v6)at(2,-0.5);\coordinate(v3)at(3.2,1);\coordinate(v4)at(4.3,1);
\draw[Ae](v1)--(v2);\draw[Ae](v5)--(v6);\draw[Ae](v3)--(v4);
\draw[Be](v1)--(v5);\draw[Be](v2)--(v6);
\draw[con](v0)--(v1);\draw[con](v2)--(v3);
\foreach \i in {0,1,2,3,4,5,6}{\fill(v\i)circle(1.6pt);\node[above=3pt]at(v\i){$\i$};}
\node at(1.6,-1.25){(a) Choose a connector};
\end{scope}
\begin{scope}[xshift=6.0cm]
\coordinate(v0)at(-1.1,1);\coordinate(v1)at(0,1);\coordinate(v2)at(2,1);
\coordinate(v5)at(0,-0.5);\coordinate(v6)at(2,-0.5);\coordinate(v3)at(3.2,1);\coordinate(v4)at(4.3,1);
\draw[Ae](v5)--(v6);\draw[Ae](v3)--(v4);
\draw[Be](v1)--(v5);\draw[Be](v2)--(v6);\draw[con](v0)--(v1);
\foreach \i in {0,1,2,3,4,5,6}{\fill(v\i)circle(1.6pt);\node[above=3pt]at(v\i){$\i$};}
\node at(1.6,-1.25){(b) Open at its attachment};
\end{scope}
\end{tikzpicture}
\caption{Joining before cycle opening. Solid edges form $A\cup B$; dashed edges in (a) are eligible connectors. Accepting $01$ and removing $12$ gives the five-edge forest in (b). A subsequent cleanup scan can add $23$.}
\label{fig:joining}
\end{figure}

\subsection{Approximation and pass bounds}
Let $\mathcal T$ denote the set of used components; since each connector in $J$ pairs two distinct components, $|\mathcal T|=2h$. Let $t_C$ and $t_P$ count the cycle and path components in $\mathcal T$, with isolated vertices counted under $t_P$, so that $t_C+t_P=2h$. For each component $C$, let $b_C=|B\cap E(C)|$. Let $z$ count the used isolated vertices, and define the surplus of unused cycles as $u=\sum_{C\text{ unused cycle}}(b_C-2)\ge0$.

\Needspace{5\baselineskip}
\begin{lemma}
\label{lem:blocking}
For any edge-disjoint matchings $A,B$ and any maximal matching of their eligible component graph,
\[
 \rho(G)\le f+2s+5t_C+3t_P-z-2u
          =f+2s+10h-2t_P-z-2u.
\]
\end{lemma}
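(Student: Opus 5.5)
The plan is to rerun the Tutte--Berge-style counting of \Cref{lem:component-deficiency}, replacing the Tutte set by the set $\mathcal T$ of used components. Maximality of the greedy component matching makes $\mathcal T$ a vertex cover of $K_Q$ for $Q=A\cup B$: any eligible connector between two unused components would have been accepted. Hence $K_Q-\mathcal T$ is edgeless and has exactly $n-f-2h$ vertices, each an odd component. This is the only place maximality is used, and it replaces the inequality $|S|\le h_*$ from the earlier lemma.

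\emph{Step 1: component count.} Fix an optimum path cover $P$, which has $n-\rho(G)$ components. Delete from it the set $I$ of internal vertices of path components of $Q$, and every available set $P_C$ with $C\in\mathcal T$. Each deleted vertex raises the number of components by at most one, so at most $n-\rho(G)+i+\sum_{C\in\mathcal T}q_C$ components remain. The surviving vertices are exactly the available vertices of unused components, since unavailable vertices lie in $I$. No surviving edge of $P$ joins available vertices of two distinct unused components, because such an edge would be an eligible connector between unused components. So at least $n-f-2h$ components remain. Using $2h=|\mathcal T|$, this gives
\[
\rho(G)\le f+i+\sum_{C\in\mathcal T}(q_C+1).
\]

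\emph{Step 2: per-component charging.} Write $2s=2\sum_{C\text{ a path}}b_C+2\sum_{C\text{ a cycle}}(b_C-2)$, including isolates among the paths with $b_C=0$. Then charge each component as follows.
\begin{itemize}
\item A nontrivial path with $\ell$ edges has $\ell-1\le2b_C$ internal vertices by alternation of $A$ and $B$, so $i\le2\sum_{\text{paths}}b_C$. If such a path is used, it has $q_C=2$ and contributes $q_C+1=3$.
\item A used isolate has $q_C=1$ and contributes $2=3-1$; this is the source of the $-z$ term.
\item A used cycle has $q_C=2b_C$, so $q_C+1=2(b_C-2)+5$. It consumes its $2(b_C-2)$ share of $2s$ and adds $5$.
\item An unused cycle contributes nothing, yet its share $2(b_C-2)$ is still present in $2s$. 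Summed over unused cycles this leaves slack exactly $2u$, which yields the $-2u$ term.
\end{itemize}
Adding these contributions gives $\rho(G)\le f+2s+5t_C+3t_P-z-2u$. The second form follows by substituting $t_C=2h-t_P$.

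\emph{Main obstacle.} The main care point is bookkeeping rather than a conceptual difficulty. First, the surviving vertices must be exactly the available vertices of unused components, so that no stray vertex creates edges between different unused components. Second, the accounting must keep the exact $-z$ and $-2u$ terms instead of bounding them away as in \Cref{lem:optimized-blocking}.
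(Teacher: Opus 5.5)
Your proof is correct and follows the paper's argument. You delete the same vertex set (all vertices of used components plus the internal vertices of unused paths), reach the same intermediate bound, since your $i+\sum_{C\in\mathcal T}q_C$ equals the paper's $N_{\mathcal T}+i$, and then apply the same per-component charging. The only difference is that you bound $\rho(G)$ by counting the surviving components of $P$, as in \Cref{lem:component-deficiency} with $\mathcal T$ as the deleted set, whereas the paper counts the edges of $P$ directly and uses an explicit expression for $f$.
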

\begin{proof}
Let $N_{\mathcal T}=\sum_{C\in\mathcal T}|V(C)|$. Let $I$ contain the internal vertices of unused paths and put $i=|I|$. Delete the vertices in used components and in $I$. The survivors lie in unused cycles, endpoint pairs of unused nontrivial paths, or unused isolates. No input edge joins surviving vertices from different original components of $H$, since it would be an eligible connector between two unused components.

Fix an optimum path cover $P$, and let $q$ count unused nontrivial paths. At most $2N_{\mathcal T}+2i$ edges of $P$ meet deleted vertices. Within an unused cycle $C$, at most $|V(C)|-1$ remain, even if $G$ contains chords; at most one remains between the endpoints of each unused path. Hence
\[
 |P|\le2N_{\mathcal T}+2i+q+
 \sum_{\substack{C\text{ unused}\\C\text{ a cycle}}}(|V(C)|-1).
\]
After cycle opening, used components contribute $N_{\mathcal T}-|\mathcal T|$ edges; unused paths contribute $i+q$. Thus
\[
 f=N_{\mathcal T}-|\mathcal T|+i+q+
 \sum_{\substack{C\text{ unused}\\C\text{ a cycle}}}(|V(C)|-1),
\]
and $\rho(G)\le f+N_{\mathcal T}+|\mathcal T|+i$.

A used cycle contributes $|V(C)|+1=2(b_C-2)+5$ to $N_{\mathcal T}+|\mathcal T|$. By alternation, a nontrivial $\ell$-edge path has $\ell\le2b_C+1$: if used, it contributes at most $2b_C+3$; if unused, its internal vertices contribute at most $2b_C$ to $i$. A used isolate contributes two, not the three charged by $3t_P$, giving the correction $-z$. Unused cycles and unused isolates contribute nothing. Consequently,
\[
\begin{aligned}
 N_{\mathcal T}+|\mathcal T|+i
 &\le2\sum_{C\text{ a path}}b_C
 +2\sum_{\substack{C\text{ used}\\C\text{ a cycle}}}(b_C-2)+5t_C+3t_P-z\\
 &=2s+5t_C+3t_P-z-2u.
\end{aligned}
\]
The term $-2u$ removes the unused-cycle surplus included in $2s$. Substitute in the preceding bound and use $t_C+t_P=2h$.
\end{proof}

\begin{theorem}
\label{thm:joining}
If $|A|\ge\alpha\nu(G)$ and $|B|\ge\beta\nu(G-A)$, where $0<\alpha,\beta\le1$, then
\[
 |F^\star|\ge |F^+|\ge
 \frac{1+3\alpha+3\beta}{10}\rho(G)+\frac{t_P}{5}+\frac{z}{10}+\frac{u}{5}.
\]
In particular, exact matching calls give a $7/10$ approximation, and two error-$\delta$ calls give $7/10-3\delta/5$.
\end{theorem}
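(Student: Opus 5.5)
The plan is to combine the blocking bound of \Cref{lem:blocking} with the compensation inequality~\eqref{eq:optimized-compensation} from the proof of \Cref{thm:five-sevenths}. As noted after the joining scan, $|F^+|=f+h$, and $F^\star\supseteq F^+$ because cleanup only adds edges, so $|F^\star|\ge|F^+|$. It therefore suffices to prove
\[
 10(f+h)\ge(1+3\alpha+3\beta)\rho(G)+2t_P+z+2u .
\]

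\emph{Step 1: compensation.} I would rederive~\eqref{eq:optimized-compensation} for the current $A,B$, since its hypotheses are exactly those here. Splitting an optimum cover into two matchings gives $2|A|\ge\alpha\rho(G)$. \Cref{lem:compensation} with unit weights gives $2|B|\ge\beta(\rho(G)-|A|+k)$, and together with $|A|\ge2k$ this yields $|A|+2|B|-k\ge\beta\rho(G)$. With $s=|B|-2k$ the identity $3f=2|A|+(|A|+2|B|-k)+s$ then gives
\[
 3f\ge(\alpha+\beta)\rho(G)+s .
\]

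\emph{Step 2: blocking.} The greedy scan produces a maximal matching of the eligible component graph, so \Cref{lem:blocking} applies. Rearranged, it reads
\[
 10h\ge\rho(G)-f-2s+2t_P+z+2u .
\]
Adding $10f$ to both sides gives
\[
 10(f+h)\ge\rho(G)+9f-2s+2t_P+z+2u .
\]

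\emph{Step 3: combine.} Tripling Step 1 gives $9f\ge3(\alpha+\beta)\rho(G)+3s$. Since $s\ge0$, this implies $9f-2s\ge3(\alpha+\beta)\rho(G)$. Substituting into Step 2 and dividing by $10$ gives the claimed bound.

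The step most likely to cause trouble is this weighting. Substituting $s\le3f-(\alpha+\beta)\rho$ directly, as in \Cref{thm:five-sevenths}, only gives $7f+10h$ on the left. The fix is to absorb the full $3s$ from the tripled compensation inequality and discard the surplus $s\ge0$.

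\emph{Step 4: specializations.} The extra terms $t_P,z,u$ are nonnegative and can be dropped. Exact calls ($\alpha=\beta=1$) then give $7/10$. Error-$\delta$ calls ($\alpha=\beta=1-\delta$) give $(7-6\delta)/10=7/10-3\delta/5$.
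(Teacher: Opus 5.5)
Your proof is correct and follows the paper's own argument: it combines \eqref{eq:optimized-compensation} with \Cref{lem:blocking}. Your tripled inequality $9f\ge3(\alpha+\beta)\rho(G)+3s$, with the surplus $s\ge0$ discarded, is the same linear combination as the paper's substitution of $s\le3f-(\alpha+\beta)\rho(G)$ followed by adding $3f\ge(\alpha+\beta)\rho(G)$, and the specializations and $|F^\star|\ge|F^+|=f+h$ are handled identically.
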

\begin{proof}
Equation~\eqref{eq:optimized-compensation} gives $3f\ge(\alpha+\beta)\rho(G)+s$. Substitute $s\le3f-(\alpha+\beta)\rho(G)$ into \Cref{lem:blocking} and use $|F^+|=f+h$ to obtain $(1+2\alpha+2\beta)\rho(G)+3f\le10|F^+|-2t_P-z-2u$. Since $s\ge0$, also $3f\ge(\alpha+\beta)\rho(G)$. Rearranging gives the bound; cleanup only adds edges. Set $\alpha=\beta=1$ or $\alpha=\beta=1-\delta$ for the stated special cases.
\end{proof}

Appendix~\ref{app:attachments} gives conditions for two compatible attachments. In the upper family of Appendix~\ref{app:joining-limit}, $t_P=z=u=0$, so the corrections above do not improve the $7/10$ bound.

\noindent\textbf{Pass tradeoff.}
Setting $\delta=5\eps/3$ for $0<\eps<1/3$ yields a $(7/10-\eps)$-approximation. The forest $F^+$ uses $2T_u(n,\delta)+1$ streaming passes and $S_u(n,\delta)+O(n)$ words of memory; the cleanup scan that gives $F^\star$ adds one final pass. Connected components are determined from the stored matching edges.

\subsection{A worst-case upper bound of \texorpdfstring{$17/24$}{17/24}}
\label{app:joining-limit}
The greedy method has legal exact choices and scan orders with ratios tending to $17/24$. This uses matching ties and concerns only the greedy method, not \Cref{thm:five-sevenths}.

Fix an integer $h\ge1$ and put $\ell=4h-1$. For $0\le i<\ell$, construct a four-cycle on distinct vertices $p_i,q_i,x_i,y_i$, with $A_i=\{p_iq_i,x_iy_i\}$ and $B_i=\{p_ix_i,q_iy_i\}$.
Create $\ell$ additional disjoint edges $a_jb_j$ ($0\le j<\ell$) and one isolated vertex $z$. Put
\[
 A=\bigcup_{i=0}^{\ell-1}A_i\ \cup\ \{a_jb_j:0\le j<\ell\},
 \qquad B=\bigcup_{i=0}^{\ell-1}B_i.
\]
The first $2h$ cycles are designated as distinguished. The graph has $n=6\ell+1$ vertices, and every input edge has unit weight.

\noindent\textbf{Additional edges and a Hamiltonian path.}
List the $4h$ vertices $p_i,q_i$ of the distinguished cycles as $d_{2i}=p_i$, $d_{2i+1}=q_i$ for $0\le i<2h$. Let $Z_0=(z)$ and $Z_{j+1}=(a_j,b_j)$ for $0\le j<\ell$. Begin the vertex sequence
\[
 Z_0,d_0,Z_1,d_1,\ldots,Z_{4h-1},d_{4h-1},x_0,y_0,
\]
expanding each block $Z_j$ in its displayed order. For $j=0,\ldots,2h-2$, append
\[
 p_{2h+j},x_{2h+j},y_{2h+j},q_{2h+j},x_{j+1},y_{j+1}.
\]
Call the complete sequence $\Pi$. It visits every vertex of $G$ exactly once. Add all consecutive edges of $\Pi$ to $A\cup B$. Finally, add the edges $p_{2j+1}x_{2j}$ for $0\le j<h$, and place these $h$ special connectors first in the streaming order of the joining scan. There are no other edges in $G$.

\noindent\textbf{Matching optimality.}
The matching $A$ contains $3\ell=\lfloor n/2\rfloor$ edges, and is therefore maximum. Let $S=\{p_i,q_i:0\le i<\ell\}$. Every edge outside $A$ meets $S$: this holds for $B$, the added path edges of $\Pi$, and the special connectors. Thus $S$ is a vertex cover of $G-A$ of size $2\ell$, which is attained by the matching $B$. Hence, both matching calls are exact.

\noindent\textbf{Output and optimum.}
The components of $A\cup B$ are exactly the $\ell$ four-cycles, $\ell$ isolated edges, and the isolate $z$. Every intercomponent edge touches a distinguished cycle. The first $h$ scanned connectors are accepted and mark all $2h$ distinguished cycles used. No subsequent connector in the stream joins two unused components, so $|J|=h$. Therefore, $|F^+|=|A|+|B|-\ell+h=4\ell+h=17h-4$. The Hamiltonian path $\Pi$ gives $\rho(G)=n-1=6\ell=24h-6$, and hence
\[
 \frac{|F^+|}{\rho(G)}=\frac{17h-4}{24h-6}\longrightarrow\frac{17}{24}.
\]
Together with \Cref{thm:joining}, this places the greedy rule's worst-case ratio infimum in $[7/10,17/24]$.

\subsubsection{Maximal outputs in the upper-bound family}
\label{app:cleanup-upper}
In the construction above, configure the cycle openings as follows:
remove $x_i y_i$ for every even distinguished index $i<2h$;
remove $p_i x_i$ for every odd distinguished index $i<2h$;
and remove $x_i y_i$ for every undistinguished index $i\ge2h$.
Each opening at a used cycle is incident with its attachment: $x_{2j}$ for even indices and $p_{2j+1}$ for odd indices.

After adding the selected connectors, every $p_i,q_i$ has degree two. For even distinguished and undistinguished cycles, deleting $x_iy_i$ leaves both vertices internal. For odd distinguished cycles, deleting $p_ix_i$ leaves $q_i$ internal, while the connector supplies the second edge at $p_i$. Thus every vertex of the vertex cover $S=\{p_i,q_i:0\le i<\ell\}$ of $G-A$ is saturated.

Every edge of $\Pi$ outside $A\cup B$ meets $S$, so none can be added. In used cycles, each removed edge meets the saturated attachment vertex. In unused cycles, the removed $x_iy_i$ joins endpoints of the same path, so adding it would create a cycle. No other edges exist. Hence $F^+$ is maximal and $|F^\star|=|F^+|=17h-4$ after cleanup. These openings preserve the limit $17/24$; other openings may allow more edges.

\section{An explicit exponent for matching-value estimation}
\label{app:bks-parameters}
We give an explicit exponent for the estimator in~\cite{BhattacharyaKissSaranurak2023}. Source locators refer to the full version arXiv:2302.05030v3. The notation below is local.

Fix the additive accuracy $0<a<1$ of \Cref{thm:bks-interface}. Here $\gamma=a/4$ is the accuracy scale, $2K+1$ bounds the augmenting-path length, $q$ is the survival fraction, $\psi$ is the source's smallness parameter, $\sigma$ is the minimal relative gain per successful call, $T$ bounds the iterations per call, $R$ bounds the total number of calls, and $\xi(a)$ is the resulting base exponent. Define
\begin{align*}
 \gamma&=a/4, & K&=\lceil1/\gamma\rceil, &
 q&=\gamma^2(2K+2)^{-(2K+2)},\\
 \psi&=\frac{q}{4(10^8K+1)}, &
 \sigma&=10^{-8}\psi^{5^{4K+3}}, &
 T&=\left\lceil\frac{3(K+2)^2}{\sigma}\right\rceil,\\
 R&=\left\lceil(K+1)(\sigma^{-1}+1)\right\rceil, &&&
 \xi(a)&=\frac{1}{20\,9^{T(R+1)}}.
\end{align*}
These parameters depend only on $a$. We show that the positive exponent $\xi(a)$ gives time $\widetilde O_a(N^{2-\xi(a)})$ on $N$ vertices.

\noindent\textbf{Layer survival and thresholds.}
The source's Algorithm~4 uses $K=\lceil1/\gamma\rceil$ and calls the augmentation procedure at lengths $2i+1$, $0\le i\le K$, with density parameter $\gamma^2$. If the current matching admits at least $\gamma^2N$ vertex-disjoint augmenting paths of length $2i+1$, fix such a collection before sampling the layers. By their Lemma~5.3, each path survives independently with probability $(2i+2)^{-(2i+2)}$. The expected number of survivors is at least $qN$, so a Chernoff bound gives at least $qN/2$ survivors with high probability for sufficiently large $N$ (depending on $a$). Without the candidate-path hypothesis, the source procedure may return failure. Use the same $\psi$ for all calls. Their equation~(1) in Section~5.1.1 sets
\[
 \psi_j=10^{-8}\psi^{5^{4j+3}},\qquad
 \delta_j=\psi^{5^{4j+1}}.
\]
For $i\le K$, the loss bound in Corollary~5.16 is at most
\[
 \sum_{j=0}^{i-1}\frac{\delta_{j+1}}{\psi_j}N
 \le K\,10^8\psi N.
\]
Since $\delta_0=\psi^5\le\psi$ and
$(K10^8+1)\psi=q/4<q/2$, on the survival event above the number of paths exceeds that loss plus $\delta_0N$. Thus our explicit $\psi$ satisfies the source's sufficient-smallness condition under the candidate-path hypothesis. Whenever an augmentation call succeeds, the paragraph after Corollary~5.17 guarantees an increase of at least $\psi_iN\ge\sigma N$.

\noindent\textbf{Iteration counts.}
Claim~5.13 bounds backtracks at layer $2j$ by $1/\psi_j$. The displayed calculation in Corollary~5.14 consequently bounds all iterations of a length-$2i+1$ call by
\[
 (i+2)\left(\sum_{j=0}^{i-1}\psi_j^{-1}+1\right)+(i+1)
 \le (K+2)(K/\sigma+1)+K+1\le T.
\]
There are at most $\sigma^{-1}$ successful calls, since matching size increases by at least $\sigma N$ and never exceeds $N/2$. Every nonfinal round of Algorithm~4 contains a success, and every round has $K+1$ calls. Hence the total call count is at most $R$.

\noindent\textbf{Exponent bookkeeping.}
Theorem~5.2 and equation~(6) of Section~5.2 multiply the matching-oracle exponent by at most $9^T$ per successful call. Start with exponent $\xi(a)$. Even allowing this multiplication on every call and once more for a final unsuccessful call gives $\xi(a)9^{T(R+1)}=1/20$. Every intermediate exponent is therefore at most $1/20<1/5$, as required by Definition~5.1. Each preprocessing call costs at most $\widetilde O_a(N^{2-\xi(a)})$; their number is an $a$-dependent constant. The final matching oracle has query exponent at most $1+1/20$, so the sampling step of Corollary~6.2 fits within the same bound. That corollary has additive error $4\gamma N=aN$ and the one-sided value guarantee. Exceptional bounded $N$ can be read in full, absorbed into the $a$-dependent constant.

For~\eqref{eq:query-composition-time}, use $a=\eta/4$, or $a=\eps/8$ for TSP. That bound includes confidence amplification of one matching-value estimate and the cost of one vertex order. The displayed $\xi$ is a conservative choice, not an optimized exponent.
\end{document}